\documentclass[a4paper,11pt]{article}

\usepackage{mathtools}  
\usepackage{mathrsfs}
\usepackage{amsmath}
\usepackage{amssymb}
\usepackage[margin=0.75in]{geometry}
\usepackage{textcomp}
\usepackage{array}
\usepackage[usenames, dvipsnames]{color}
\usepackage[font=small]{caption}
\usepackage{floatrow}
\usepackage{cite}
\usepackage[utf8]{inputenc}
\usepackage{amsthm}

\numberwithin{equation}{section}

\newtheorem{lemma}{Lemma}[section]

\usepackage[hyperfootnotes=false, linktocpage=true, colorlinks, citecolor=blue, linkcolor=blue, urlcolor=Maroon]{hyperref}

\newcommand{\mc}{\mathcal}
\newcommand{\mbb}{\mathbb}

\newcommand{\D}{D}




\newcommand{\smlhdg}[1]{\bigskip\noindent\textbf{#1}\medskip}
\newcommand{\smlhdgnogap}[1]{\noindent\textbf{#1}\medskip}

\newcolumntype{C}{>{$}c<{$}} 		

\newcommand{\be}{\begin{equation}}
\newcommand{\ee}{\end{equation}}
\newcommand{\bea}{\begin{eqnarray}}
\newcommand{\eea}{\end{eqnarray}}
\newcommand{\ba}{\begin{align}}
\newcommand{\ea}{\end{align}}
\newcommand{\bi}{\begin{itemize}}
\newcommand{\ei}{\end{itemize}}

\newtheorem{thm}{Theorem}[section]
\newtheorem{crl}{Corollary}[section]

\newcommand{\ceil}[1]{\mathrm{ceil}\left(#1\right)}	
\newcommand{\iso}{\cong}							

\newcommand{\moricn}{\mathcal{M}}	
\newcommand{\nefcn}{\mathcal{N}}		
\newcommand{\hcl}{l} 					
\newcommand{\ecl}{e}					
\newcommand{\lb}{L} 					
\newcommand{\dual}[1]{#1^*}			
\newcommand{\ind}{\mathrm{ind}}		

\newcommand{\surf}{S}				
\newcommand{\surfdegr}{d}			

\newcommand{\tc}{z}					


\newcommand{\shfdiv}[1]{\tilde{#1}} 					
\newcommand{\shfdivstab}[1]{\tilde{\underline{#1}}} 	
\newcommand{\dps}[1]{\mathrm{dP}_{#1}}			

\begin{document}

\begin{centering}
\vspace*{1.2cm}
{\Large \bf Index Formulae for Line Bundle Cohomology on Complex Surfaces}

\vspace{1cm}

{\large
{\bf{Callum R. Brodie}$^{a,}$\footnote{callum.brodie@physics.ox.ac.uk}},  
{\bf{Andrei Constantin}$^{b,}$\footnote{andrei.constantin@mansfield.ox.ac.uk}},
{\bf{Rehan Deen}$^{a,}$\footnote{rehan.deen@physics.ox.ac.uk }},    
{\bf{Andre Lukas}$^{a,}$\footnote{lukas@physics.ox.ac.uk}},
\bigskip}\\[0pt]
\vspace{0.23cm}
${}^a$ {\it 
Rudolf Peierls Centre for Theoretical Physics, University of Oxford,\\
Parks Road, Oxford OX1 3PU, UK
}
\\[2ex]
${}^b$ {\it 
Pembroke College, University of Oxford, OX1 1DW, UK \\
$~~$Mansfield College, University of Oxford, OX1 3TF, UK
}

\begin{abstract}\noindent
We conjecture and prove closed-form index expressions for the cohomology dimensions of line bundles on del Pezzo and Hirzebruch surfaces. Further, for all compact toric surfaces we provide a simple algorithm which allows expression of any line bundle cohomology in terms of an index. These formulae follow from general theorems we prove for a wider class of surfaces. In particular, we construct a map that takes any effective line bundle to a nef line bundle while preserving the zeroth cohomology dimension. For complex surfaces, these results explain the appearance of piecewise polynomial equations for cohomology and they are a first step towards understanding similar formulae recently obtained for Calabi-Yau three-folds. 
\end{abstract}
\end{centering}

\newpage
\tableofcontents
\newpage


\section{Introduction}
\label{sec:intro_and_sum}

Cohomologies of line bundles are crucial for various types of string compactifications, for example in the context of heterotic  and F-theory model building. Usually, these cohomologies are computed using algorithmic methods, based on \v{C}ech cohomology, spectral sequences and related mathematical tools. These methods can be computationally intense and they provide little insight into the origin and structure of the results. This makes a bottom-up approach to string model building difficult whenever line bundles are involved. Clearly, string theory would profit from more direct and systematic access to line bundle cohomology and in this paper we report on some progress in this direction.

In Ref.~\cite{Constantin:2018hvl} it was found that line bundle cohomology dimensions on (complete intersection) Calabi-Yau manifolds can be described by relatively simple formulae, which are piecewise polynomial.\footnote{The first non-trivial instance of a line bundle cohomology formula appeared in the earlier work \cite{Constantin:2018otr, Buchbinder:2013dna}, in which generic hypersurfaces of type (2,2,2,2) in a product of four complex projective spaces were studied.}  More precisely, the Picard group of the manifold splits into a number of disjoint regions - which are frequently but not always cones - in each of which the cohomology dimensions are given by a cubic polynomial in the integers which label the line bundles. These results were obtained heuristically by looking at algorithmically computed cohomology data on a few complete intersection Calabi-Yau manifolds and smooth quotients thereof and by extracting analytic formulae from this data. The authors of Ref.~\cite{Klaewer:2018sfl} employed machine learning techniques to derive cohomology formulae for (hypersurfaces in) toric varieties. In a recent paper~\cite{Larfors:2019sie}, these results were extended to a larger class of complete intersection manifolds with Picard number $2$. The pattern conjectured from these results is that line bundle cohomology dimensions on manifolds with complex dimension $n$ are piecewise polynomial, with polynomials of degree (at most) $n$. All this points to a yet to be fully discovered mathematical structure underlying line bundle cohomology.

 In simple cases, it is possible to prove these results by spectral sequence chasing but this becomes tedious for more complicated manifolds. It is fair to say that the origin of these formulae for Calabi-Yau three-folds is currently not well-understood and that there is no simple and systematic method for their derivation. In fact, Calabi-Yau three-folds are fairly complicated objects and may not provide the best setting to uncover the structure underlying line bundle cohomology.

In the present paper, we, therefore, explore these issues in the simpler setting of complex surfaces. A line bundle $L\rightarrow S$ over a complex surface $\surf$ has three cohomology dimensions $h^q(\surf,\lb)$, where $q=0,1,2$, but knowledge of one of these for all line bundles implies the other two via Serre duality and the index theorem. For this reason, we will focus our study on the zeroth cohomology dimension, $h^0(\surf,\lb)$. Following standard notation, the line bundle associated to a divisor $D\subset S$ is denoted by $\mc{O}_\surf(D)$ and frequently we introduce a basis $D_i$ of divisor (classes), so that $D=\sum_ik_i D_i$, with $k_i\in\mathbb{Z}$. Having fixed such a basis, we also occasionally write the corresponding line bundle as $\mc{O}_\surf({\bf k}):=\mc{O}_\surf(D)$, where ${\bf k}=(k_i)$ is an integer vector.

As a first step, we proceed in much the same way as was done for Calabi-Yau three-folds. We produce cohomology data from algorithmic methods and extract analytic formulae for the zeroth cohomology dimension by ``eyeballing". The examples we will be studying in this context are the Hirzebruch and del Pezzo surfaces. The results turn out to be as expected and are analogous to the ones obtained for Calabi-Yau three-folds. The dimensions $h^0(\surf,\mc{O}_\surf({\bf k}))$ are described by equations which are piecewise quadratic in the integers $k_i$.  This provides further evidence that piecewise polynomial formulae for line bundle cohomology dimensions are a general feature which is neither limited to the Calabi-Yau case nor to three complex dimensions. 

However, for surfaces we are able to go further. For both Hirzebruch and del Pezzo surfaces we are able to express the piecewise quadratic formulae in terms of basis-independent, intrinsic objects, specifically the generators of the effective and nef cones, the irreducible negative self-intersection divisors and the intersection form on $\surf$. This re-writing suggests a specific map $D\rightarrow\tilde{D}$ for effective divisors which we provide explicitly and which leaves the zeroth cohomology dimension unchanged, that is, $h^0(\surf,\mc{O}_\surf(D))=h^0(\surf,\mc{O}_\surf(\tilde{D}))$. We further observe that for all Hirzebruch and del Pezzo surfaces the line bundle $\mc{O}_\surf(\tilde{D})$ satisfies $h^q(\surf,\mc{O}_\surf(\tilde{D}))=0$ for $q=1,2$. Combining these statements implies that the cohomology of $\mc{O}_\surf(D)$ can be computed in terms of the index of the shifted divisor $\tilde{D}$ as
\begin{equation}
 h^0(\surf,\mc{O}_\surf(D))=h^0(\surf,\mc{O}_\surf(\tilde{D}))={\rm ind}(\mc{O}_\surf(\tilde{D}))\, . \label{emp}
\end{equation}
At this stage, the result~\eqref{emp} for Hirzebruch and del Pezzo surfaces together with the map $D\rightarrow \tilde{D}$ is still empirical, that is, inferred from a finite set of cohomology data.

Based on these empirical results, we write down a general form of the map $D\rightarrow \tilde{D}$ which applies to all smooth compact complex projective surfaces and we prove that it preserves the zeroth cohomology dimensions, that is, $h^0(\surf,\mc{O}_\surf(\tilde{D}))=h^0(\surf,\mc{O}_\surf(D))$.  Moreover, it follows that iterating the map $D\rightarrow \tilde{D}$ leads, after a finite number of steps, to a divisor $\underline{\tilde{D}}$ in the nef cone. Provided there is a vanishing theorem which asserts that $h^q(\surf,\mc{O}_\surf(\underline{\tilde D}))=0$ for $q=1,2$, it follows that 
\begin{equation}
 h^0(\surf,\mc{O}_\surf(D))=h^0(\surf,\mc{O}_\surf(\underline{\tilde D}))={\rm ind}(\mc{O}_\surf(\underline{\tilde D}))\, .  \label{indform}
\end{equation} 
For such cases, which we show include Hirzebruch and del Pezzo surfaces as well as all compact toric surfaces we have, therefore, a mathematical proof for the existence of index formulae for $h^0$ and a practical way of deriving them. These formulae are quasi-topological in nature: the quantity ${\rm ind}(\mc{O}_\surf(\underline{\tilde D}$)) is purely topological, while the map $D\rightarrow \tilde{D}$ depends in general on the complex structure. 

For Hirzebruch and del Pezzo surfaces we prove that a single application of the map $D\rightarrow \tilde{D}$ already projects into the nef cone and that a suitable vanishing theorem is available in either case. This leads to a mathematical proof for the empirical formula~\eqref{emp}.
\vspace{8pt}

We will presently provide a summary of our main results. Subsequently, the plan for the remainder of the paper is as follows. In the next section, we explain how cohomology formulae can be extracted from cohomology data, computed by algorithmic methods, focusing on Hirzebruch and del Pezzo surfaces. In a first instance, we extract piecewise quadratic formulae from the data which are subsequently refined to index formulae. The reader less interested in this ``empirical" aspect of the work can skip to Section~\ref{sec:genthms} which contains our main mathematical statements. Section~\ref{sec:app} illustrates these mathematical results in the context of simple examples. We conclude in Section~\ref{sec:con}.

The present paper has two companion papers. In Ref.~\cite{ml}, we explore how techniques from machine learning can help to uncover the structure of line bundle cohomology. Ref.~\cite{mathpaper} is more mathematical in style and provides rigorous proofs for the various mathematical statements presented here. 

\smlhdg{Summary of results}

\noindent 
We outline below the way in which the index formulae can be applied to specific surfaces.
For a smooth compact complex projective surface $\surf$ we first require knowledge of the effective (Mori) cone\footnote{For the examples discussed in this paper the Mori and the effective cones coincide, though there exist surfaces for which this is not the case, see e.g.~Example 1.5.1 in Ref.~\cite{lazarsfeld2004positivity}.}, ${\moricn}(S)$. In practice this amounts to providing the set of Mori cone generators $\hat{\moricn}(S)$ or the set of generators $\hat{\nefcn}(S)$ of the nef cone ${\nefcn}(S)$, which is dual to the Mori cone. We also need to know the intersection form $(D,D')\rightarrow D\cdot D'$ on $\surf$. For all divisors not in the Mori cone, that is $D\notin {\moricn}(S)$, we have $h^0(\surf,\mc{O}_\surf(D))=0$. On the other hand, all divisors $D\in{\moricn}(S)$ have strictly positive zeroth cohomology dimension. For such effective divisors we define the map $D\rightarrow \tilde{D}$ by
\be
\shfdiv{D} = D - \sum_{C\in{\cal I}} \theta( - D \cdot C ) \, \ceil{\frac{D \cdot C}{C^2}}C \,,  \label{Dmap}
\ee
where the sum runs over the set ${\cal I}$ of all irreducible curves with negative self-intersection. The Heaviside function $\theta$ ensures that only curves $C$ with $D\cdot C<0$ contribute to the sum and ${\rm ceil}$ is the ceiling function. Hence, to write down this map explicitly, we need to know the irreducible, negative self-intersection curves ${\cal}$ on the surface $\surf$ - information that can be obtained for many cases of interest. 

The key statement about the map~\eqref{Dmap} is that it leaves the zeroth cohomology dimension unchanged, that is, $h^0(\surf,\mc{O}_\surf(\tilde{D}))=h^0(\surf,\mc{O}_\surf(D))$. Since the nef cone ${\nefcn}(S)$ is the cone of divisors $D$ which intersect all algebraic curves non-negatively, it is clear that repeated application of the map~\eqref{Dmap} eventually leads to a divisor $\underline{\tilde{D}}$ in the nef cone. If there is a vanishing theorem, typically Kodaira vanishing or one of its refinements, which asserts that $h^q(\surf,\mc{O}_\surf(\underline{\tilde{D}}))=0$ for $q=1,2$ then the zeroth cohomology can be written as an index, using Eq.~\eqref{indform}. It turns out that this is the case for many surfaces of interest, including Hirzebruch surfaces, del Pezzo surfaces, and compact toric surfaces, and, hence, index formulae for the zeroth cohomology dimensions exist for all these cases. The relevant vanishing theorems will be reviewed in the main text.\\[2mm]
Let us first summarise how this general result applies to Hirzebruch surfaces $\mathbb{F}_n$.  The Picard lattice of all Hirzebruch surfaces is two-dimensional and we can introduce a basis $(D_1,D_2)$ of divisor classes, such that the intersection form is defined by $D_1^2=-n$, $D_1\cdot D_2=1$ and $D_2^2=0$ (see Appendix~\ref{app:hirz_surf} for details). Then, the Mori cone is generated by $D_1$ and $D_2$. This means effective divisors are of the form $D=k_1D_1+k_2D_2$ with non-negative $k_i$ and all other divisors with at least one $k_i$ negative have vanishing zeroth cohomology. The unique irreducible class with negative self-intersection is $C=D_1$. Inserting all this into the map~\eqref{Dmap} shows that $\tilde{D}$ is already in the nef cone. Theorem~\ref{demazure} guarantees the necessary vanishing, so that Eq.~\eqref{emp} can be applied. Combining these results we obtain the index formula
\be
h^0\left(\mbb{F}_n,\mc{O}_{\mbb{F}_n}(D)\right) = \ind\bigg( D - \theta(-D \cdot C) \, \ceil{\frac{D \cdot C}{C^2}}C \bigg) \,,
\ee
for the zeroth cohomology dimension of effective divisors $D$ on Hirzebruch surfaces. The explicit formula for the index of line bundles on Hirzebruch surfaces is provided in Eq.~\eqref{indFn}.\\[2mm]
Let us now summarise the analogous results for del Pezzo surfaces. Del Pezzo surfaces (other than $\mathbb{P}^1\times\mathbb{P}^1$ which is trivial in our context) are blow-ups  of the projective plane $\mathbb{P}^2$ in $n$ generic points, where $n=0,1,\ldots ,8$, and they are denoted by $\dps{n}$. The rank of their Picard lattice is $n+1$ and a standard basis of divisor classes consists of the hyperplane class $\hcl$ of $\mathbb{P}^2$ and the classes $\ecl_i$ of the exceptional divisors associated to the blow-ups, where $i=1,\ldots ,n$. The intersection form is fixed by the relation $l^2=1$, $l\cdot \ecl_i=0$ and $\ecl_i\cdot \ecl_j=-\delta_{ij}$. The list of generators of the Mori and nef cones is too long, at least for the larger values of $n$, to be listed here but has been explicitly provided in Appendix~\ref{app:dp_surf}. The irreducible negative self-intersection classes $C$ are precisely the Mori cone generators, which have self-intersection $C^2=-1$. It can be shown that a single application of the map~\eqref{Dmap} projects into the nef cone and Corollary~\ref{crldPvan} provides the appropriate vanishing statement so that Eq.~\eqref{emp} holds. This leads to the index formula
\be
h^0\left(\dps{n},\mc{O}_{\dps{n}}(D)\right) = \ind\bigg( D + \sum_{C\in\hat{\cal M}} \theta( - D \cdot C ) \, (D \cdot C)C \bigg) \,,
\ee
for the zeroth cohomology dimension of effective divisors $D$ on del Pezzo surfaces, where $\hat{\cal M}=\hat{\cal M}({\rm dP}_n)$ are the generators of the Mori cone.


\section{From data to index formulae}
\label{sec:dphirz_mastform}
The mathematical results presented in this paper have been motivated following a somewhat unorthodox method which might be described as experimental algebraic geometry. The starting point is line bundle cohomology data on various surfaces, produced by algorithmic methods. From this data, we first read off simple piecewise quadratic formulae for cohomology dimensions. These formulae are then put through a process of gradual refinement until they are expressed in terms of intrinsic geometric objects of the underlying surface. In this form, the equations are very suggestive and lead to conjectures for line bundle cohomology on smooth compact complex projective surfaces which we state and prove in the next section. The main purpose of the present section is to describe this ``experimental" approach, focusing on our two main classes of examples, the Hirzebruch and del Pezzo surfaces. This may be of interest to anyone wishing to pursue a similar procedure, for example for a different class of manifolds. The reader mainly interested in the general mathematical results for surfaces can safely skip this section and move on to Section~\ref{sec:genthms}.


\subsection{Outline of approach}
We have already mentioned Refs.~\cite{Constantin:2018hvl,Larfors:2019sie}, where piecewise cubic formulae for line bundle cohomology dimensions on certain Calabi-Yau three-folds have been obtained, starting with cohomology data computed by algorithmic methods. These results suggest that line bundle cohomology dimensions on surfaces, $\surf$, can be described by formulae which are piecewise quadratic in the integers $k_i$ which label the line bundles. This expectation, which we will confirm for our examples, as well as for larger classes of surfaces, is the starting point of our discussion. We will then gradually refine the piecewise quadratic equations in $k_i$ and attempt to re-write them in terms of intrinsic geometric objects, in our quest to uncover the mathematical origin of these equations.\\[2mm]
Let us first recall that line bundles $\lb=\mc{O}_\surf(D) \rightarrow S$ on surfaces have three cohomology dimensions, $h^q(\surf,\lb)$, where $q=0,1,2$. However, Serre duality and the index theorem provide two relations
\be
h^2(\surf,\lb) = h^0(\surf,\dual{\lb} \otimes K_\surf) \,, \quad h^1(\surf,\lb) = h^0(\surf,\lb) + h^2(\surf,\lb) - \ind(\surf,\lb) \,,
\ee
between those three quantities. Here, $K_\surf$ is the canonical bundle\footnote{We will freely write the same symbol $K_\surf$ for the canonical divisor. It will be clear from context which is in use.} of the surface $\surf$ and the index, $\ind(\surf,\lb)$, of the line bundle can be easily computed from the Riemann-Roch formula as
\begin{equation}
 \ind(\surf,\mc{O}_\surf(D))=\ind(\surf,\mc{O}_\surf)+\frac{1}{2}D\cdot (D-K_\surf)\,,\quad  \ind(\surf,\mc{O}_\surf)=\frac{1}{12}(K_\surf^2+\chi(S))\,, \label{RR}
\end{equation} 
where $\chi(S)$ is the Euler characteristic. Written in terms of the line bundle integers $k_i$ the index is a quadratic polynomial. The upshot of this discussion is that knowledge of all zeroth cohomology dimensions $h^0(\surf,\lb)$ for all line bundles $\lb$ determines all other cohomology dimensions.  Moreover, piecewise quadratic formulae for $h^0$ directly translate into piecewise quadratic formulae for $h^1$ and $h^2$ via the above relations. For this reason, we will focus on the zeroth cohomology dimension in the following.\\[2mm]
The basic steps of our approach are as follows.
\begin{itemize}
\item For the complex surface $\surf$, we generate cohomology data, $({\bf k},h^0(\surf,\mc{O}_\surf({\bf k})))$, for a range of integer vectors ${\bf k}$, typically taken from a box with $|k_i|\leq k_{\rm max}$, computed using suitable algorithmic methods.
\item From this data, we extract conjectures for piecewise quadratic formulae for $h^0(\surf,\mc{O}_\surf({\bf k}))$ as a function of ${\bf k}$. Typically this is done by first identifying the regions in the Picard lattice for which the behaviour is quadratic. The experience from Calabi-Yau three folds suggests that these regions are frequently - but not always - cones. For each such region, we then fit a quadratic polynomial in ${\bf k}$ to the data.  In a companion paper~\cite{ml} we explain how this process can be facilitated by methods from machine learning.
\item Next, we attempt to re-write the piecewise quadratic formulae in $k_i$ in a basis-independent way, using intrinsic geometric objects of the underlying surface. For the surfaces we treat, one finds that the relevant objects are the effective (Mori) cone ${\moricn}(S)$ and its list of generators $\hat{\moricn}(S)$, the nef cone ${\nefcn}(S)$ and its list of generators $\hat{\nefcn}(S)$, the irreducible negative self-intersection curves and the intersection form $(D,D')\rightarrow D\cdot D'$ on $\surf$.
\item Finally, we would like to bring the formula into a compact, manageable form. This is particularly relevant for surfaces with high Picard number, where the number of regions in the Picard lattice and, hence, the number of case distinctions required can be large. It turns out that such a compact form can indeed be found using the formula for the index. In this final form, our empirical results are quite suggestive and point to more general mathematical statements which we formulate and prove in Section~\ref{sec:genthms}.
\end{itemize}
The above programme will be carried out for two main classes of surfaces, the Hirzebruch and del Pezzo surfaces, and we now briefly review their basic properties.


\subsection{Basic properties of Hirzebruch surfaces}\label{sec:Hirze}
The Hirzebruch surfaces $\mbb{F}_n$ are indexed by a non-negative integer $n$. They correspond to different fibrations of a $\mathbb{P}^1$ fibre over a $\mathbb{P}^1$ base, with $n$ characterising the twisting. Their non-zero Hodge numbers are $h^{0,0}(\mbb{F}_n)=h^{2,2}(\mbb{F}_n)=1$ and $h^{1,1}(\mbb{F}_n)=2$ and, hence, the Picard number is two for any $n$. The Picard lattice is spanned by the classes of the two projective lines that form the fibre bundle. There exist toric and complete intersection representations of the Hirzebruch surfaces, and details of these have been relegated to Appendix~\ref{app:hirz_surf}. Here, we focus on a few basic properties, relevant to our discussion, which can, for example, be deduced from the toric description. We write $D_1$ for the divisor\footnote{For most of the paper, the term ``divisor" is used as a short-hand for ``divisor class". Whenever the distinction between divisor and divisor class becomes relevant we will state this explicitly.} corresponding to the $\mbb{P}^1$ base of the fibre bundle, and $D_2$ for the divisor corresponding to the $\mbb{P}^1$ fibre. Then, the intersection form of $\mbb{F}_n$ is determined by
\be\label{Hirzeisec}
D_1^2=-n\,,\quad D_1\cdot D_2=1\,,\quad D_2^2=0\, .
\ee
Divisors on $\mbb{F}_n$ can be written as linear combinations $D=k_1D_1+k_2D_2$ with $k_1,k_2\in\mathbb{Z}$ and line bundles $\mc{O}_{\mbb{F}_n}({\bf k})=\mc{O}_{\mbb{F}_n}(k_1D_1+k_2D_2)$ are parametrised by two-dimensional integer vectors ${\bf k}=(k_1,k_2)$.

The two divisors $D_1$ and $D_2$ also correspond to the generators of the Mori cone ${\moricn}(\mbb{F}_n)$, that is,
\be
\hat{\moricn}(\mbb{F}_n)=\{ D_1, D_2\} \,.
\ee
This means effective divisors are of the form $D=k_1D_1+k_2D_2$ with $k_1,k_2 \in \mbb{Z}_{\geq0}$. In addition, from the above intersection form, the dual nef cone has generators
\be
\hat{\nefcn}(\mbb{F}_n)=\{D_2, D_1 + nD_2\} \, . \label{nefFn}
\ee
The anti-canonical divisor $-K_{\mbb{F}_n}$ of the Hirzebruch surface $\mbb{F}_n$ is given by
\be
-K_{\mbb{F}_n} = 2D_1 + (n+2)D_2 \, .
\ee
Inserting into the Riemann-Roch formula~\eqref{RR} an arbitrary  divisor $D=k_1D_1+k_2D_2$, the above expression for the anti-canonical divisor and the result $\ind(\mbb{F}_n,\mc{O}_{\mbb{F}_n})=1$ (which follows from the second equation~\eqref{RR} with $K_{\mbb{F}_n}^2=8$ and $\chi(\mbb{F}_n)=4$) gives
\be \label{indFn}
\ind(\mbb{F}_n,\mc{O}_{\mbb{F}_n}(D)) = 1 + \frac{1}{2}D\cdot(D-K_{\mbb{F}_n}) = 1+k_1+k_2+k_1k_2-\frac{1}{2}nk_1-\frac{1}{2}nk_1^2 \, .
\ee
There are several algorithmic methods to compute line bundle cohomology on Hirzebruch surfaces.
\begin{itemize}
\item Using the toric realisation of the Hirzebruch surfaces (see Appendix~\ref{app:hirz_surf}), the dimension of the zeroth cohomology, which is all we require for our discussion, can be computed from the weight system.
\item Hirzebruch surfaces can also be constructed as complete intersections in products of projective spaces  (see Appendix~\ref{app:hirz_surf}) and the techniques described in Refs.~\cite{Anderson:2007nc,Gray:2007yq,Anderson:2008uw,He:2009wi,Anderson:2009mh}, based on the Bott-Borel-Weil formalism and spectral sequences, can be applied.
\item Finally, again using the toric realisation, we can use the methods to calculate line bundle cohomology on toric spaces developed in Refs.~\cite{CohomOfLineBundles:Algorithm,CohomOfLineBundles:Proof,Jow:2011}.
\end{itemize}


\subsection{Basic properties of del Pezzo surfaces}\label{sec:dPprop}
A del Pezzo surface is isomorphic to either $\mathbb{P}^1\times \mathbb{P}^1$ or to a blow-up of the complex projective plane $\mathbb{P}^2$ in $n\in\{0,1,\ldots ,8\}$ generic points. The cases of $\mathbb{P}^1\times \mathbb{P}^1$ and $\mathbb{P}^2$ are trivial for our purposes since the result directly follows from Bott's formula for line bundle cohomology on projective spaces. For this reason we focus on del Pezzo surfaces which correspond to a blow-up of $\mathbb{P}^2$ in $n\in\{1,\ldots ,8\}$ generic points, and we denote these surfaces by $\dps{n}$.

Del Pezzo surfaces $\dps{n}$ can be realised as complete intersections in products of projective spaces and, for $n=1,2,3$, as toric spaces. Details of this are provided in Appendix~\ref{app:dp_surf}. Here, we merely collect the information essential to our discussion.

The non-zero Hodge numbers of del Pezzo surfaces are $h^{0,0}(\dps{n})=h^{2,2}(\dps{n})=1$ and $h^{1,1}(\dps{n})=n+1$, so the rank of the Picard lattice is $n+1$. A basis for the Picard lattice is given by $(l,\ecl_1,\ldots ,\ecl_n)$, where $\hcl$ is the hyperplane class of $\mathbb{P}^2$ and $\ecl_i$, where $i=1,\ldots ,n$, are the exceptional classes, related to the blow-ups. Relative to this basis, the intersection form is defined by the relations
\begin{equation}
 l^2=1\,,\quad l\cdot \ecl_i=0\,,\quad \ecl_i\cdot \ecl_j=-\delta_{ij}\, .
\end{equation} 
A general divisor is written as
\be
D = k_0 \hcl + \sum_{i=1}^n k_i \ecl_i \,.
\label{eq:dp_bas_exp}
\ee
with $k_0,k_i\in\mathbb{Z}$ and, hence, line bundles $\mc{O}_{\dps{n}}({\bf k})=\mc{O}_{\dps{n}}(k_0\hcl+k_1\ecl_1+\cdots +k_n\ecl_n)$ are labelled by $(n+1)$-dimensional integer vectors ${\bf k}=(k_0,k_1,\ldots ,k_n)$. For another divisor $D'$ with components ${\bf k}'=(k_0',k_1',\ldots ,k_n')$ the intersection form can also be written as
\begin{equation}
 D\cdot D'={\bf k}^TG\,{\bf k}=:\langle{\bf k},{\bf k}'\rangle\,,\quad G={\rm diag}(1,-1,\ldots ,-1)\, . \label{isecG}
\end{equation}
The lists of Mori and nef cone generators are denoted by $\hat{\moricn}=\hat{\moricn}(\dps{n})$ and $\hat{\nefcn}=\hat{\nefcn}(\dps{n})$ respectively, and they are thought of as containing the actual divisors or their coordinates vectors relative to the basis $(l,\ecl_1,\ldots ,\ecl_n)$, depending on context. In the latter form, they are explicitly provided in Appendix~\ref{app:dp_surf}. 

The anti-canonical class of $\dps{n}$ is given by
\be
-K_{\dps{n}} = 3\hcl - \sum_{i=1}^n \ecl_i \, . \label{KdP}
\ee
Inserting into the Riemann-Roch formula a general divisor~\eqref{eq:dp_bas_exp}, the above expression for the anti-canonical class and the result $\ind(\dps{n},\mc{O}_{\dps{n}})=1$ (which follows from the second Eq.~\eqref{RR} with $K_{\dps{n}}^2=9-n$ and $\chi(\dps{n})=3+n$) gives
\be\label{inddP}
\ind(\dps{n},\mc{O}_{\dps{n}}(D))= 1 + \frac{1}{2}D\cdot(D-K_{\dps{n}}) = 1+\frac{1}{2}k_0(k_0+3)+\frac{1}{2}\sum_{i=1}^nk_i(1-k_i) \,.
\ee
As for Hirzebruch surfaces, there are several algorithmic methods available to calculate line bundle cohomology on del Pezzo surfaces.
\begin{itemize}
\item For the del Pezzo surfaces $\dps{n}$, with $n=1,2,3$ which have a toric realisation (see Appendix~\ref{app:dp_surf}), we can compute $h^0$ either from the toric weight system or via the methods for line bundle cohomology on toric spaces from Refs.~\cite{CohomOfLineBundles:Algorithm,CohomOfLineBundles:Proof,Jow:2011}.
\item All del Pezzo surfaces have a realisation as complete intersections in products of projective spaces (see Appendix~\ref{app:dp_surf}), so the methods of Refs.~\cite{Anderson:2007nc,Gray:2007yq,Anderson:2008uw,He:2009wi,Anderson:2009mh} can be applied.
\item It was noticed in Appendix~B of Ref.~\cite{Blumenhagen:2008zz} that line bundle cohomology on del Pezzo surfaces can be computed by counting certain polynomials on $\mathbb{P}^2$ and we use  a computational implementation of this method.
\end{itemize}
There is one further result which relates line bundle cohomology of del Pezzo surfaces $\dps{n}$ for different $n$ which will be helpful in the following. The cohomology dimension of a divisor $D$ on $\dps{n+1}$ with no component along the exceptional class $\ecl_{n+1}$ is the same as the cohomology dimension of $D$ seen as a divisor on $\dps{n}$, that is,
\begin{equation}
 h^0(\dps{n+1},\mc{O}_{\dps{n+1}}(D))=h^0(\dps{n},\mc{O}_{\dps{n}}(D))\quad\mbox{for}\quad D=k_0\hcl+\sum_{i=1}^nk_i\ecl_i\, . \label{dPnrel}
\end{equation} 
This can be shown, for example, by thinking about blowing-down one exceptional divisor or by using the relation between del Pezzo cohomology and polynomials on $\mathbb{P}^2$ from Ref.~\cite{Blumenhagen:2008zz}.


\subsection{Warm up: line bundle cohomology on Hirzebruch surfaces}
The complexity of our task clearly increases with the Picard number of the surface. Hirzebruch surfaces, which all have Picard number two, therefore provide a simple setting for an initial exploration. In particular, it is possible to plot cohomology data and identify the regions in the Picard lattice by ``eyeballing". Once the regions are known, the quadratic polynomials can be fixed by a simple fit to a number of points in each region. This results in a piecewise quadratic formula which represents the first step on our path from data to general mathematical statements. Of course this piecewise quadratic formula should then be checked against all available cohomology data.\\[2mm]
We recall that line bundles on Hirzebruch surfaces $\mbb{F}_n$ are labelled by a two-dimensional integer vector ${\bf k}=(k_1,k_2)$, relative to the divisor basis introduced in Section~\ref{sec:Hirze}. In Figure~\ref{fig:hirz_f0f1f2_cohomplot} we plot the zeroth cohomology dimension, $h^0(\mbb{F}_n,\mc{O}_{\mbb{F}_n}({\bf k}))$, as a function of $(k_1,k_2)$, for the first three Hirzebruch surfaces $\mbb{F}_0$, $\mbb{F}_1$ and $\mbb{F}_2$. For a better visualisation, we have joined the discrete data points into a surface. \begin{figure}[h]
\begin{minipage}{0.32\linewidth}
\begin{center}
\begin{picture}(160,180)
\put(0,0){\includegraphics[scale=.26]{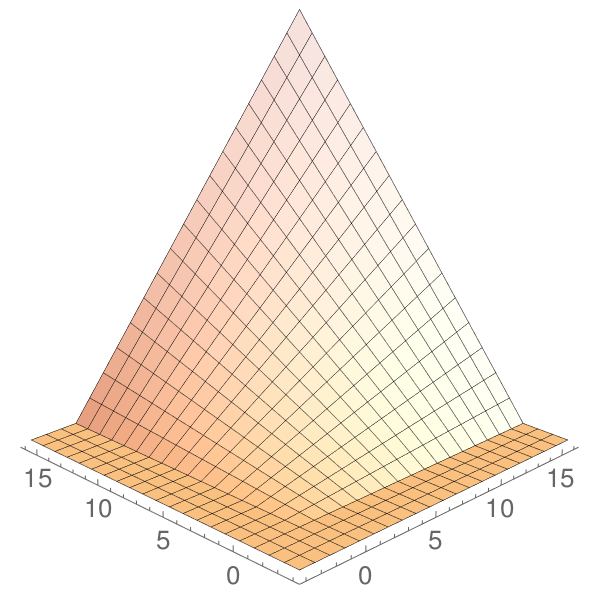}}
\put(73,160){$\mbb{F}_0$}
\put(35,-1){\scriptsize{$k_2$}}
\put(113,-1){\scriptsize{$k_1$}}
\put(140,60){\scriptsize{$h^0$}}
\end{picture}
\end{center}
\end{minipage}
\begin{minipage}{0.32\linewidth}
\begin{center}
\begin{picture}(160,180)
\put(0,0){\includegraphics[scale=.26]{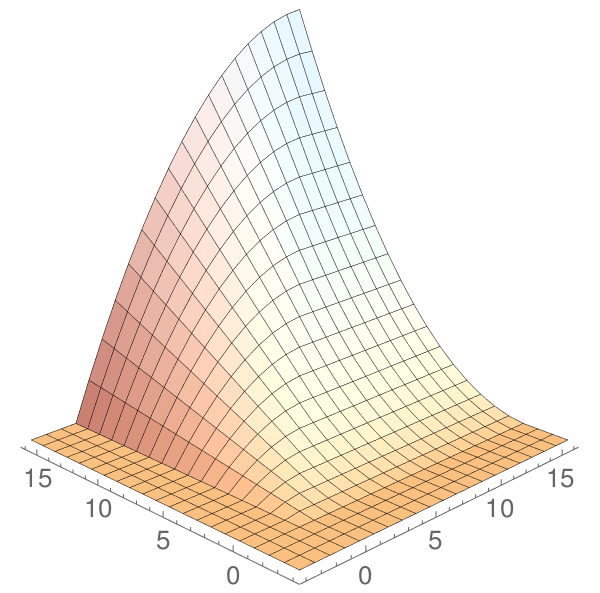}}
\put(73,160){$\mbb{F}_1$}
\put(35,-1){\scriptsize{$k_2$}}
\put(113,-1){\scriptsize{$k_1$}}
\put(140,60){\scriptsize{$h^0$}}
\end{picture}
\end{center}
\end{minipage}
\begin{minipage}{0.32\linewidth}
\begin{center}
\begin{picture}(160,180)
\put(0,0){\includegraphics[scale=.26]{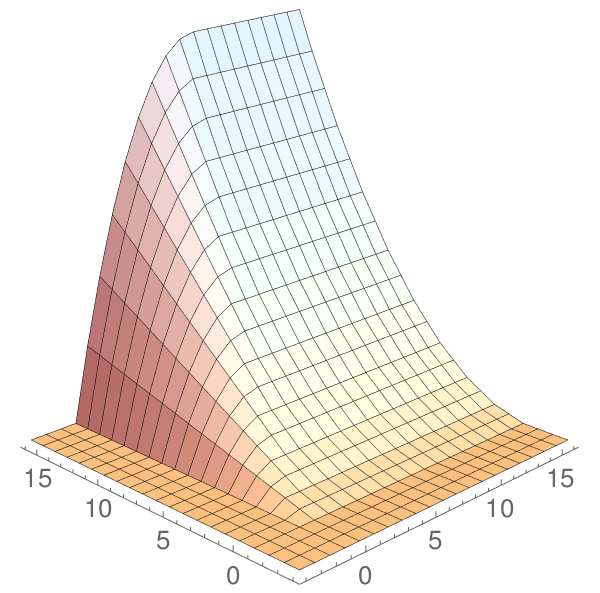}}
\put(73,160){$\mbb{F}_2$}
\put(35,-1){\scriptsize{$k_2$}}
\put(113,-1){\scriptsize{$k_1$}}
\put(140,60){\scriptsize{$h^0$}}
\end{picture}
\end{center}
\end{minipage}
\caption{\sf Zeroth cohomology $h^0(\mbb{F}_n,\mc{O}_{\mbb{F}_n}(k_1D_1+k_2D_2))$ as a function of $(k_1,k_2)$ for the Hirzebruch surfaces $\mbb{F}_0$, $\mbb{F}_1$, $\mbb{F}_2$. For clarity, we have joined the discrete data points into a surface.}
\label{fig:hirz_f0f1f2_cohomplot}
\end{figure}
We recall from Section~\ref{sec:Hirze} that the cone of effective divisors, that is, the region where $h^0(\mbb{F}_n,\mc{O}_{\mbb{F}_n}({\bf k}))>0$, is characterised by $k_1\geq 0$ and $k_2\geq 0$. This is consistent with the plots in Figure~\ref{fig:hirz_f0f1f2_cohomplot} which indicate a non-zero cohomology precisely in the positive quadrant.

What is the structure in the positive quadrant? The obvious feature in the plots for $\mbb{F}_1$ and $\mbb{F}_2$ in Figure~\ref{fig:hirz_f0f1f2_cohomplot} is the presence of two regions which we expect require two different quadratic polynomials. In fact, it can be checked that this structure persists for all Hirzebruch surfaces with $n \geq 1$ and that the two regions are separated by the hyperplane through the origin, described by the equation
\be
k_2 = n k_1 \, . \label{Hirzebound}
\ee
Now that we have identified the regions we can attempt polynomial fits. The region $k_2 \geq n k_1$ corresponds to the easy case. Here, the relevant polynomial for all Hirzebruch surfaces is simply the index, given in Eq.~\eqref{indFn}. The other region, where $k_2 < n k_1$ and which exists for all Hirzebruch surfaces with $n\geq 1$, is more problematic. A polynomial fit to the data in this region succeeds for $\mbb{F}_1$ but it fails for $\mbb{F}_2$ and all higher Hirzebruch surfaces. The problem becomes apparent when we plot the cohomology data for $\mbb{F}_3$, $\mbb{F}_4$, and $\mbb{F}_5$ as we have done in Figure~\ref{fig:hirz_f3f4f5_cohomplot}.
\begin{figure}[h]
\begin{minipage}{0.32\linewidth}
\begin{center}
\begin{picture}(160,180)
\put(0,0){\includegraphics[scale=.26]{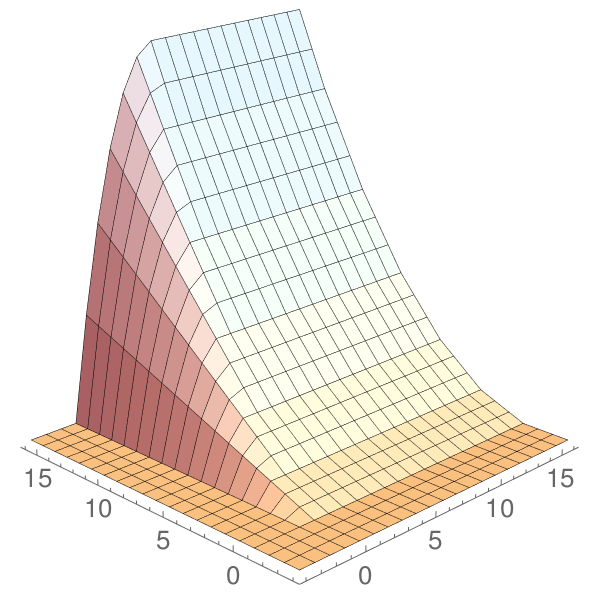}}
\put(73,160){$\mbb{F}_3$}
\put(35,-1){\scriptsize{$k_2$}}
\put(113,-1){\scriptsize{$k_1$}}
\put(140,60){\scriptsize{$h^0$}}
\end{picture}
\end{center}
\end{minipage}
\begin{minipage}{0.32\linewidth}
\begin{center}
\begin{picture}(160,180)
\put(0,0){\includegraphics[scale=.26]{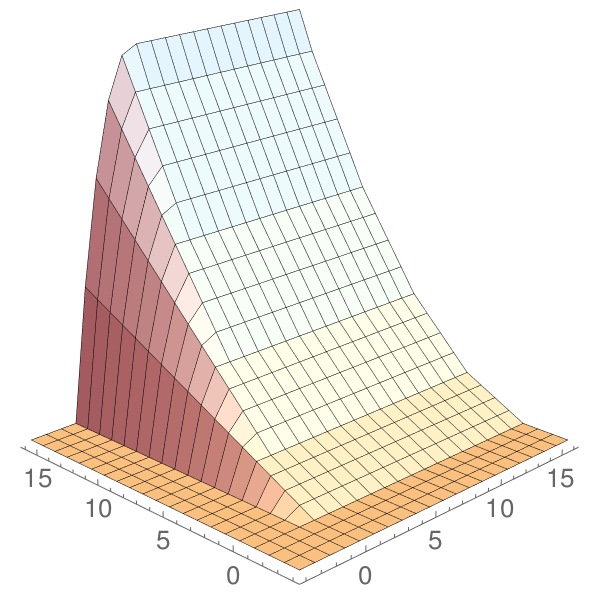}}
\put(73,160){$\mbb{F}_4$}
\put(35,-1){\scriptsize{$k_2$}}
\put(113,-1){\scriptsize{$k_1$}}
\put(140,60){\scriptsize{$h^0$}}
\end{picture}
\end{center}
\end{minipage}
\begin{minipage}{0.32\linewidth}
\begin{center}
\begin{picture}(160,180)
\put(0,0){\includegraphics[scale=.26]{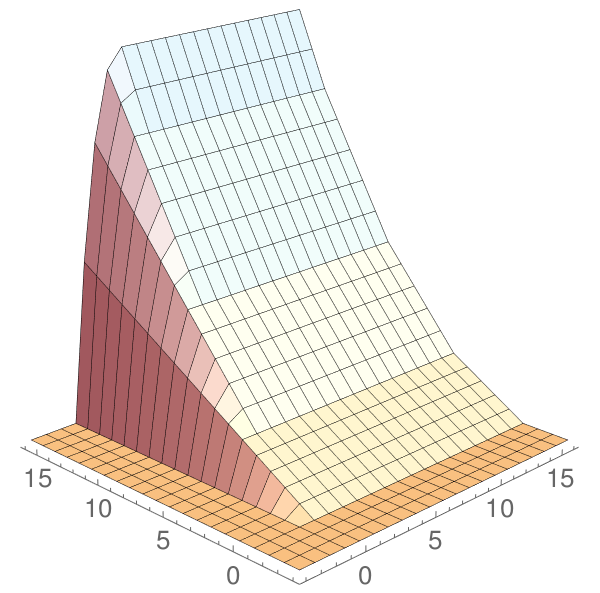}}
\put(73,160){$\mbb{F}_5$}
\put(35,-1){\scriptsize{$k_2$}}
\put(113,-1){\scriptsize{$k_1$}}
\put(140,60){\scriptsize{$h^0$}}
\end{picture}
\end{center}
\end{minipage}
\caption{\sf Zeroth cohomology $h^0(\mbb{F}_n,\mc{O}_{\mbb{F}_n}(k_1D_1+k_2D_2))$ on a region of the Picard lattice for the Hirzebruch surfaces $\mbb{F}_3$, $\mbb{F}_4$, $\mbb{F}_5$. This function is lattice-valued; we have joined the lattice data points into a surface for clarity.}
\label{fig:hirz_f3f4f5_cohomplot}
\end{figure}
The cohomology values in the region $k_2 < n k_1$ show a mod $n$ structure which can of course not be captured by a single quadratic polynomial. A similar structure has been observed in some of the examples studied in Ref.~\cite{Klaewer:2018sfl}.  However, it is not too difficult to account for this mod $n$ behaviour by including a ceiling function. This leads to the following conjecture
\be\label{h0Hirze}
\begin{array}{rll}
h^0\left(\mbb{F}_n,\mc{O}_{\mbb{F}_n}({\bf k})\right) &=& \left\{
  \begin{array}{lll}
    \ind(\mc{O}_{\mbb{F}_n}({\bf k}))
    & \textrm{if $k_1,k_2\geq0$ and $k_2 \geq nk_1$}&\mbox{(Region 1)}  \\[4pt]
    \displaystyle \frac{1}{2}(1-c_2)(2+2k_2+n c_2)
    & \textrm{if $k_1,k_2\geq0$ and  $k_2 < nk_1$} &\mbox{(Region 2)}  \\[8pt]
    0
    & \textrm{otherwise}&\mbox{(Region 3)} 
  \end{array}
  \right.\\\\
 \text{where } c_2&=& \ceil{ \frac{-k_2}{n} }\, 
\end{array} 
\ee
for the zeroth cohomology dimension on any Hirzebruch surface $\mbb{F}_n$. The explicit expression for the index can be found in Eq.~\eqref{indFn}.  This result lends further support to the conjecture that line bundle cohomology on complex surfaces is described by equations which are (basically) piecewise quadratic. 
However, the appearance of the ceiling function which accounts for the mod $n$ behaviour is new and, as we will see, points to an important feature of the more general formula we are seeking.\\[2mm]
This completes the first two parts of our programme. The remaining tasks are, firstly, to write Eq.~\eqref{h0Hirze} in a basis-independent way and, secondly, to find a compact form, in terms of natural geometric objects. The current example of the Hirzebruch surfaces is simple enough to carry out both steps at once.

Consider first the boundary between Region 1 and Region 2. A quick glance at the intersection rules~\eqref{Hirzeisec} shows that, for $D=k_1D_1+k_2D_2$, we have $D \cdot D_1 = -nk_1+k_2$. Hence, these two regions can be characterised by saying that $D$ needs to be in the effective cone, while, in addition, we require that $D\cdot D_1\geq 0$ for Region 1 and $D\cdot D_1<0$ for Region 2. Finally, Region 3 is characterised by saying that $D$ is not in the effective cone. Since we can think of $D_1=C$ as the unique irreducible divisor with negative self-intersection, this provides a natural basis-independent formulation of the regions.

What about the polynomial expressions in those regions?  In Region 1, the cohomology is already described by an intrinsic geometrical object, the index, but it is not immediately obvious how to proceed in Region 2.  A useful observation is that the cohomology dimension in Region 2 does not depend on $k_1$. This means that a projection exists: one can relate a cohomology result in Region 2 to one in Region 1 by a lattice projection along the negative $k_1$ direction. Since cohomology dimensions in Region 1 are described by the index we conclude that the same must be true for Region 2, however, the argument of the index is now a different, projected divisor. It turns out that the required lattice projection is
\be
D \to \tilde{D}=D - \ceil{\frac{k_2-nk_1}{-n}}D_1 = D - \ceil{\frac{D \cdot C}{-n}}C \,, \label{shiftFn}
\ee
with $C=D_1$. Note that the expression on the right-hand-side is written entirely in terms of natural geometric objects  of the Hirzebruch surface. With this projection, we can re-write the cohomology Eq.~\eqref{h0Hirze} as
\be
h^0\left(\mbb{F}_n,\mc{O}_{\mbb{F}_n}(D)\right) = \left.
  \begin{cases}
   \displaystyle{ \ind\left(D - \theta\left(-D\cdot C\right)\ceil{\frac{-D\cdot C}{-n}} C\right) }
    & \textrm{if } D\in{\moricn}(\mbb{F}_n) \\
    0
    & \textrm{if } D\notin{\moricn}(\mbb{F}_n)
  \end{cases}
  \right.
\label{eq:hirzsurf_indmastform}
\ee
where $C=D_1$ is the unique irreducible divisor with negative self-intersection and $\theta$ is the Heaviside step function, defined by $\theta(x) = 1$ for $x > 0$ and $\theta(x) = 0$ otherwise.

The above result, albeit in a very suggestive mathematical form, is still a conjecture since it has been extracted from a finite amount of cohomology data. In fact, Hirzebruch surfaces are sufficiently simple so that a direct proof can be found, for example from the weight system of the associated toric diagrams. We will not pursue this explicitly since, as we will see, Eq.~\eqref{eq:hirzsurf_indmastform} will direct us towards the general mathematical results in Section~\ref{sec:genthms}, of which it will turn out to be a special case.


\subsection{Regions and polynomials for del Pezzo surfaces}
We would now like to repeat the process from data to a concise mathematical formula for the case of del Pezzo surfaces. This is of course considerably more complicated, given that Picard numbers reach up to nine. In fact, writing down piecewise polynomial formulae explicitly becomes impractical for higher Picard numbers as the number of regions and, hence, case distinctions required, increases considerably. This is of course one of the reasons we are seeking concise mathematical formulae for cohomology dimensions. 

We will tackle the problem of finding concise formulae in the next two sub-sections, but presently we would like to explain how to extract piecewise polynomial formulae from data. In practice we begin with the lower del Pezzo surfaces, $\dps{n}$, and gradually increase $n$. Eq.~\eqref{dPnrel} tell us that the structure found for $\dps{n}$ is preserved as we move on to $\dps{n+1}$.  This means that at each stage we have some partial information available from the del Pezzo surfaces with lower $n$. For example, when determining some hyperplane boundaries, we already know some entries in their normal vectors.

\begin{figure}[ht]
\vspace{.1cm}
\begin{picture}(180,180)
\put(0,0){\includegraphics[scale=.22]{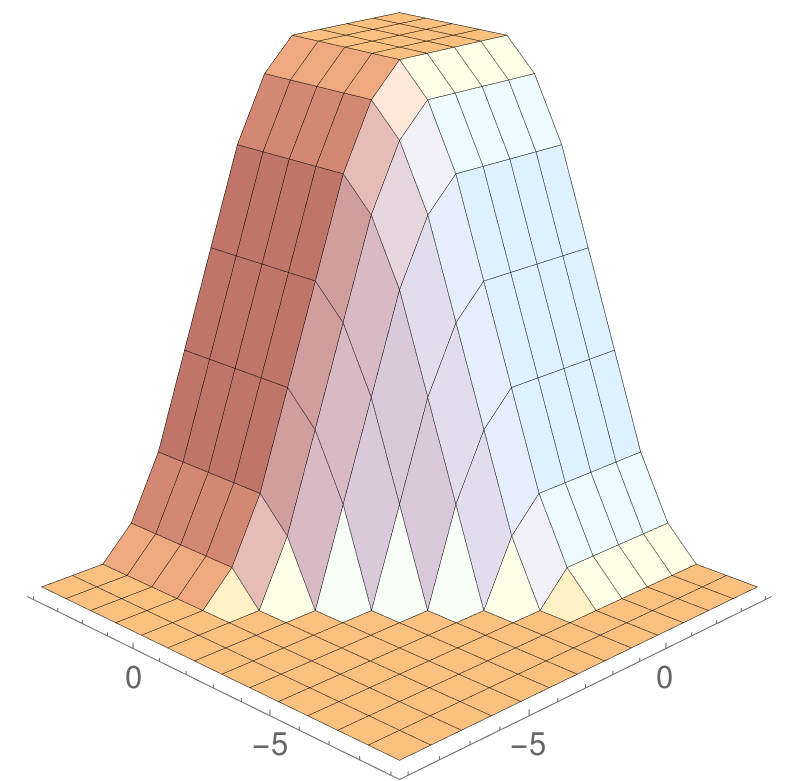}}
\put(43,177){\scriptsize{$(k_0,k_3,k_4)=(8,-6,-4)$}}
\put(38,0){\scriptsize{$k_2$}}
\put(133,0){\scriptsize{$k_1$}}
\put(160,65){\scriptsize{$h^0$}}
\end{picture}
\begin{picture}(180,180)
\put(0,0){\includegraphics[scale=.22]{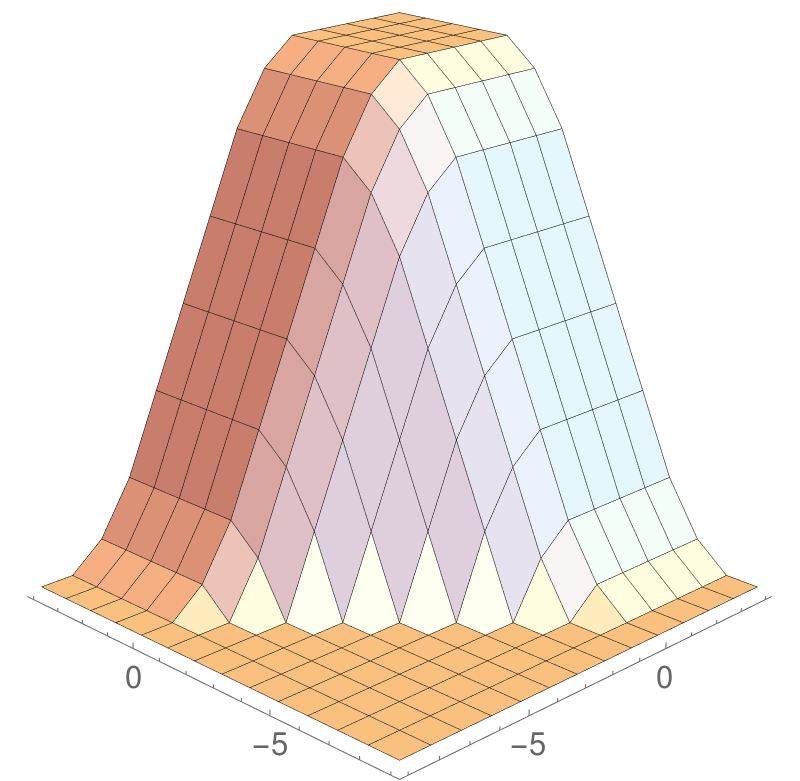}}
\put(43,177){\scriptsize{$(k_0,k_3,k_4)=(8,-6,-3)$}}
\put(38,0){\scriptsize{$k_2$}}
\put(133,0){\scriptsize{$k_1$}}
\put(160,65){\scriptsize{$h^0$}}
\end{picture} \\ \vspace{.6cm}
\begin{picture}(180,180)
\put(0,0){\includegraphics[scale=.22]{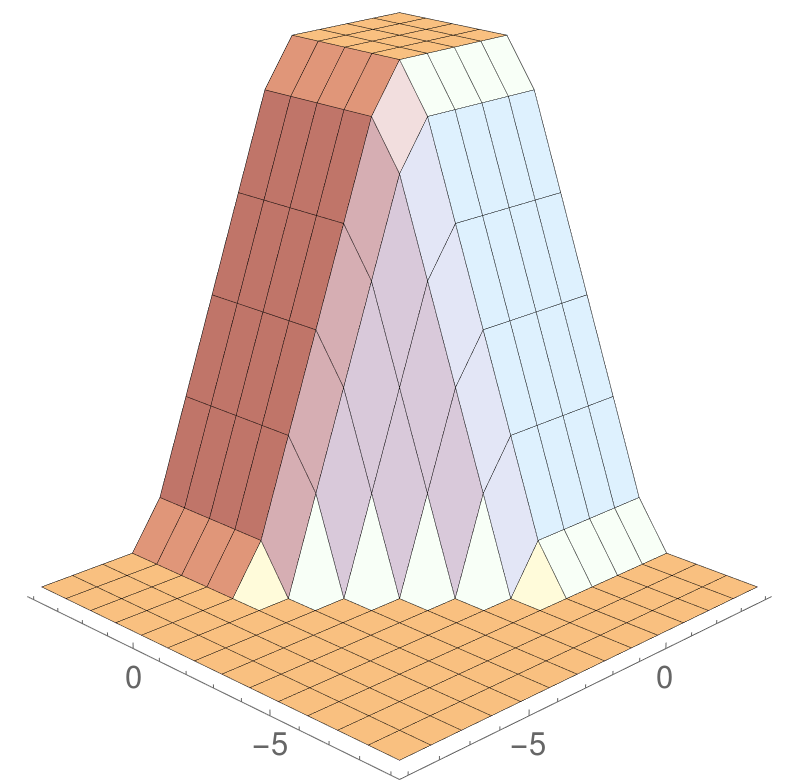}}
\put(43,177){\scriptsize{$(k_0,k_3,k_4)=(8,-7,-4)$}}
\put(38,0){\scriptsize{$k_2$}}
\put(133,0){\scriptsize{$k_1$}}
\put(160,65){\scriptsize{$h^0$}}
\end{picture}
\begin{picture}(180,180)
\put(0,0){\includegraphics[scale=.22]{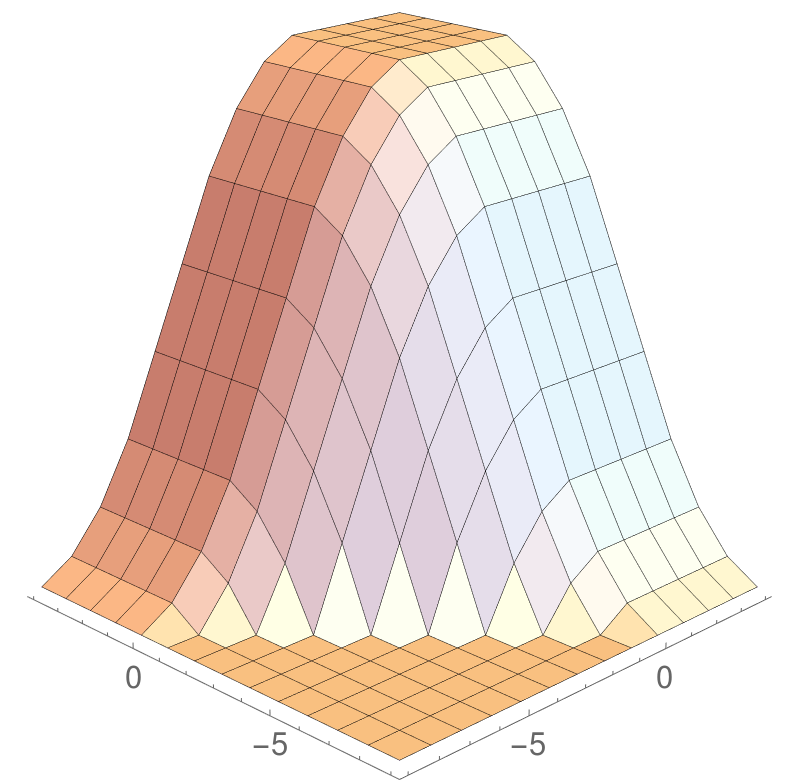}}
\put(43,177){\scriptsize{$(k_0,k_3,k_4)=(9,-6,-4)$}}
\put(38,0){\scriptsize{$k_2$}}
\put(133,0){\scriptsize{$k_1$}}
\put(160,65){\scriptsize{$h^0$}}
\end{picture}
\vspace{.1cm}
\caption{\sf Zeroth cohomology $h^0(\dps{4},\mc{O}_{\dps{4}}(k_0\hcl +k_1\ecl_1 + \ldots + k_4\ecl_4))$ as a function of  $(k_1, k_2)$ in the range $k_1,k_2=-9,\ldots ,4$, for four sets of fixed values for $(k_0,k_3,k_4)$. The lattice data points have been joined into surfaces for clarity.}
\label{fig:dp4_slices}
\end{figure}

The first step is to find the region boundaries and regions. Once this has been accomplished it is easy to find the quadratic polynomial in each region by a simple fit to the data. For the cases $\dps{1}$ and $\dps{2}$  we can proceed as for the Hirzebruch surfaces, that is, by simply reading the boundaries off from a plot.  Even for higher Picard numbers it is possible to proceed by ``eyeballing" if we focus on two-dimensional slices in the Picard lattice and combine the information from various such slices. We recall that line bundles on $\dps{n}$ are labelled by an $(n+1)$-dimensional integer vector ${\bf k}=(k_0,k_1,\ldots ,k_n)$, with the associated divisor given in Eq.~\eqref{eq:dp_bas_exp}.

Let us illustrate the process of finding the region boundaries for the example $\dps{4}$. The Picard number of this space is five, so a two-dimensional slice misses the behaviour of a boundary in the remaining three directions. We begin with slices in the $(k_1,k_2)$ plane, taken for various fixed choices of the remaining coefficients $(k_0,k_3,k_4)$.
In Figure~\ref{fig:dp4_slices} we have plotted four such two-dimensional slices through the Picard lattice of $\dps{4}$ which show several region boundaries. As an example, we focus on the diagonal boundary which separates the regions of zero and non-zero cohomology.

The dependence on $k_1$ and $k_2$ is immediately clear: this boundary must be of the form $k_1+k_2 + \ldots = 0$. To find the dependence on the remaining $k_i$, we compare slices. Take as a starting point the slice with $(k_0,k_3,k_4)=(8,-6,-4)$ (upper left plot). When we increase $k_4$ (upper right plot), the boundary advances by one unit and when $k_3$ decreases by one (lower left plot), the boundary recedes by one unit. Finally, when we instead increase $k_0$ (lower right plot), the boundary advances by two units. Hence the equation of this boundary is
\be
2k_0+k_1+k_2+k_3+k_4=0 \,.
\ee
As one would expect, four slices were sufficient to determine this boundary. The other boundaries visible in Fig.~\ref{fig:dp4_slices} can be determined in the same way. The polynomials can then be obtained by a fit to the data in each region.  The formulae obtained in this way are of course still conjectures, extracted from a finite amount of cohomology data.\\[2mm]
As already mentioned, the piecewise polynomial formulae for $\dps{n}$ become quite complicated for larger $n$ and have too many case distinctions to be reproduced here. However, it is possible to discuss $\dps{2}$ in a concise way and we do this now for illustration. The Picard number of this surface is three and line bundles are labelled by a three-dimensional integer vector ${\bf k}=(k_0,k_1,k_2)$. Figure~\ref{fig:dp2_numbregs} is a plot of the regions that correspond to distinct polynomials. All region boundaries in this case are hyperplanes through the origin, so the regions themselves are cones with their bases at the origin. 
\begin{figure}
  \includegraphics[scale=0.7]{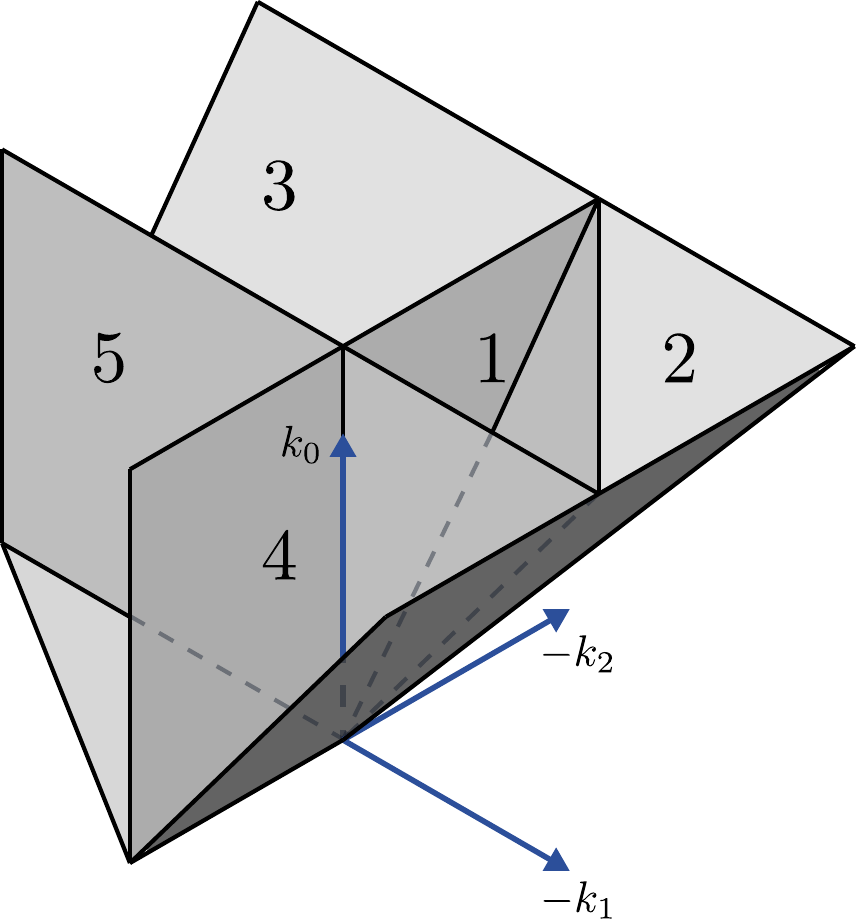}
  \caption{\sf Depiction of the Picard lattice of the del Pezzo surface $\dps{2}$. We do not draw the lattice points to avoid clutter. The five numbered cones are regions where different non-zero polynomials describe the zeroth cohomology; together they make up the Mori cone.}
  \label{fig:dp2_numbregs}
\end{figure}
Evidently, there are five distinct non-zero regions as numbered in the diagram, and they are described by the following sets of inequalities.
\be \label{dP2regions}
\begin{tabular}{c C C C C C C}
Region 1: & -k_1 \geq 0 	& -k_2 \geq 0 	& k_0+k_1+k_2 \geq 0 	& 				& 				& \\
Region 2: & 			&  			& k_0+k_1+k_2 < 0 	& k_0+k_1 \geq 0 	& k_0+k_2 \geq 0 	& \\
Region 3: & -k_1 < 0	& -k_2 \geq 0	& 					& 			 	& k_0+k_2 \geq 0 	& \\
Region 4: & -k_1 \geq 0	& -k_2 < 0	& 					& 			 	& k_0+k_2 \geq 0 	& \\
Region 5: & -k_1 < 0	& -k_2 < 0	& 					& 			 	&  				& k_0 \geq 0 \\
\end{tabular}
\ee
Note that Region 1 is the nef cone and that  the union of the five regions is the Mori cone. The polynomials that describe the zeroth cohomology dimension in each of these five regions are as follows. \\
\be
h^0(\dps{2},\mc{O}_{\dps{2}}({\bf k})) = \left\{
\begin{array}{lll}
\displaystyle{1+\frac{3}{2}k_0+\frac{1}{2}k_0^2+\frac{1}{2}}k_1-\frac{1}{2}k_1^2+\frac{1}{2}k_2-\frac{1}{2}k_2^2
& \textrm{in Region 1}  \vspace{1mm} \\
\displaystyle{1+2k_0+k_0^2+k_1+k_0k_1+k_2+k_0k_2+k_1k_2 \phantom{\displaystyle{\frac{1}{2}}}}
& \textrm{in Region 2} \vspace{1mm}  \\
\displaystyle{1+\frac{3}{2}k_0+\frac{1}{2}k_0^2+\frac{1}{2}k_2-\frac{1}{2}k_2^2}
& \textrm{in Region 3}  \vspace{1mm} \\
\displaystyle{1+\frac{3}{2}k_0+\frac{1}{2}k_0^2+\frac{1}{2}k_1-\frac{1}{2}k_1^2}
& \textrm{in Region 4}  \vspace{1mm} \\
\displaystyle{1+\frac{3}{2}k_0+\frac{1}{2}k_0^2}
& \textrm{in Region 5} \,. \\
\end{array}\right.
\label{eq:dp2_polys}
\ee
The main features of this formula - regions which are cones with bases at the origin and cohomology dimensions described by a quadric in each region - persist for all del Pezzo surfaces. Writing down the analogous formulae for $\dps{n}$ with $n>2$ becomes impractical as the number of regions increases very quickly with $n$. For this reason, we will now  extract a more concise formula which works for all del Pezzo surfaces.\\[2mm]
There is a more sophisticated way to extract regions and polynomials from data by using methods from machine learning. This will be discussed in detail in a companion paper~\cite{ml}.


\subsection{A compact formula for del Pezzo surfaces}
\label{sec:regandpols_to_compform}
The first step towards extracting a concise formula which works for all del Pezzo surfaces is to understand the structure of the hyperplanes which separate the regions. Each such hyperplane is determined by a normal vector ${\bf v}$ such that the boundary is described by the equation $\langle{\bf v},{\bf k}\rangle={\bf v}^TG\,{\bf k}=0$, where $G$ is the intersection matrix in Eq.~\eqref{isecG}. For $\dps{2}$ these normal vectors can be read off from the inequalitites in the first three columns of \eqref{dP2regions}, while the last three columns correspond to the boundaries of the Mori cone. Repeating the exercise for $\dps{3}$ we find the following lists of normal vectors.
\be
\begin{aligned}
\dps{2} : \quad&{\bf v}\in \{ (0,1,0) \,,\, (0,0,1) \,,\, (1,-1,-1) \} \\
\dps{3} : \quad&{\bf v}\in \{ (0,1,0,0) \,,\, (0,0,1,0) \,,\, (0,0,0,1) \,,\, (1,-1,-1,0) \,,\, (1,-1,0,-1) \,,\, (1,0,-1,-1) \} 
\end{aligned}\, .
\ee
For $\dps{4}$, the normal vectors have the same structure as for $\dps{3}$ but with the other obvious permutations included. The same is true for $\dps{5}$, except, additionally, the new vector ${\bf v}=(2,-1,-1,-1,-1,-1)$ appears.

The reader familiar with the properties of del Pezzo surfaces will immediately recognise the above divisors as the generators of the Mori cone or, equivalently, as the exceptional divisors with self-intersection $-1$. We conclude that the bounding hyperplanes for the various regions are described by the Mori cone generators and, it turns out, this holds for all $\dps{n}$ with $n=1,\ldots ,8$. We denote the set of Mori and nef cone generators for $\dps{n}$ by $\hat{\moricn}$ and $\hat{\nefcn}$, respectively,  and their elements are explicitly listed in Appendix~\ref{app:dp_surf}.\\[2mm]
Our next task is to find the quadratic polynomials that describe the cohomology in the various regions of the Picard lattice. To this end, it is instructive to look at the difference between two polynomials $P$ and $\tilde{P}$ in neighbouring regions, separated by a boundary hyperplane with normal vector ${\bf v}$. By inspecting examples, it turns out that the change is always of the form
\be
P - \tilde{P} = \frac{1}{2}q(q+1) \,,
\ee
where $q$ is a linear expression in the $k_i$ with integer coefficients. In fact these integer coefficients follow a pattern too. If the region with polynomial $P$ is characterised by $\langle{\bf k},{\bf v}\rangle \geq 0$ and the one with polynomial $\tilde{P}$ by $\langle{\bf k},{\bf v}\rangle < 0$, where ${\bf v}\in\hat{\moricn}$, then we have
\begin{equation}
 P-\tilde{P}=\frac{1}{2}\langle{\bf k},{\bf v}\rangle(\langle{\bf k},{\bf v}\rangle+1)\, .
\end{equation} 
The above term on the right-hand side appears when we cross a boundary with normal vector ${\bf v}$ into a region with $\langle{\bf k},{\bf v}\rangle< 0$. It is therefore natural to start in the ``simplest" region where $\langle{\bf k},{\bf v}\rangle\geq 0$ for all ${\bf v}\in\hat{\moricn}$. This region is of course precisely the nef cone ${\nefcn}(\dps{n})$, the dual of the Mori cone, where the zeroth cohomology dimension is given by the index. Hence, for the divisor $D=k_0\hcl+k_1\ecl_1+\cdots +k_n\ecl_n$, we have the following formula 
\be
h^0\left(\dps{n},\mc{O}_{\dps{n}}({\bf k})\right) = \left.
  \begin{cases}
  \displaystyle{\ind(\mc{O}_{\dps{n}}({\bf k})) +\frac{1}{2} \sum_{{\bf v}\in\hat{\moricn}}\theta\left(-\langle{\bf k},{\bf v}\rangle\right)}
  \langle{\bf k},{\bf n}\rangle(\langle{\bf k},{\bf n}\rangle+1)
    & \textrm{if }\left\{\begin{array}{l}\langle{\bf k},{\bf w}\rangle\geq 0\\\, \forall\,{\bf w}\in\hat{\nefcn}\end{array}\right. \\
    0
    & \textrm{otherwise}
  \end{cases}
  \right. 
\label{eq:compexp}
\ee
for the zeroth cohomology dimensions, where we recall that $\langle{\bf k},{\bf v}\rangle={\bf k}^TG\,{\bf v}$ is the intersection form in our standard basis, as defined in Eq.~\eqref{isecG}. Further, $\hat{\moricn}$ and $\hat{\nefcn}$ are the sets of Mori and nef cone generators which are explicitly provided in Appendix~\ref{app:dp_surf}.

As for the earlier formula~\eqref{eq:dp2_polys} for $\dps{2}$, this result has been extracted from data and is, hence, still at the level of a conjecture. However, we have validated Eq.~\eqref{eq:compexp} for all $\dps{n}$ by comparing with data in a box with $|k_i|\leq k_{\rm max}\simeq 20$. \\[2mm]
It is easy to convert Eq.~\eqref{eq:compexp} into the basis-independent form
\be\label{h0dP2}
 h^0(\dps{n},\mc{O}_{\dps{n}}(D))=
 \left\{
 \begin{array}{lll}
  \displaystyle{\ind(D) + \frac{1}{2}\sum_{C \in \hat{\moricn}}\theta\left(-D \cdot C\right)(D \cdot C)\left(D \cdot C + 1\right)}
   &\mbox{if }D\in {\moricn}(\dps{n})\\
  0&\mbox{otherwise}
 \end{array}\right.\, .
 \ee 
 Even though this is already quite concise there is an even better way of writing this formula which suggests a natural origin of the additional terms in the above sum.  We will now derive this alternative form.


\subsection{An index formula for del Pezzo surfaces}
\label{sec:dp_fromcomp_toind}
Our guiding principle for a re-formulation of Eq.~\eqref{h0dP2} is the hope that the entire right-hand side can be written as an index, similar to what we have found for Hirzebruch surfaces.  In this context, a key observation is that each additional term $\frac{1}{2}\left(D \cdot C\right)\left(D \cdot C + 1\right)$ in Eq.~\eqref{h0dP2} can be written as a difference between two indices.  Specifically, we have
\be
\begin{aligned}
\ind\left(D+(D \cdot C)C \right) 
&= 1 + \frac{1}{2}\left(D+(D \cdot C)C\right) \cdot \left(D+(D \cdot C)C - K_{\dps{n}} \right) \\
&= \ind\left(D\right) + \frac{1}{2}(D \cdot C)\left[ 2(D \cdot C) + (D \cdot C)C^2 - (C \cdot K_{\dps{n}}) \right] \\
&= \ind\left(D\right) + (D \cdot C)\left[ (D \cdot C) + 1 \right] \,,
\end{aligned}
\label{eq:ind_dif}
\ee
where the first equality follows from the Riemann-Roch theorem~\eqref{RR}. In the third step we have used the fact that $C$ is an exceptional curve, that is a curve with genus $g=1$ and self-intersection $C^2=-1$, which implies that $C\cdot K_{\dps{n}}=1$. The last statement follows immediately from the adjunction formula 
\begin{equation}
g=\frac{1}{2}(K_\surf\cdot C+C\cdot C)+1
\end{equation}
for the genus $g$ of a curve $C\subset S$ on a surface $\surf$. 

The above result shows that every term in the sum in Eq.~\eqref{h0dP2} can be written as an index. Is this  perhaps the case for the entire sum? A natural guess for how a single index might capture the entire expression in Eq.~\eqref{h0dP2} is
\be
\ind\bigg( D + \sum_{C \in\hat{\moricn}}\theta\left(-D \cdot C\right)(D \cdot C) C \bigg)
\stackrel{?}{=}
 \ind(D) + \frac{1}{2}\sum_{C \in\hat{\moricn}} \theta\left(-D \cdot C\right)
\left(D \cdot C\right)\left(D \cdot C + 1\right) \,.
\label{eq:poss_indrel}
\ee
Is this correct? Working out the left-hand side of Eq.~\eqref{eq:poss_indrel} by using Eq.~\eqref{eq:ind_dif}, one finds
\be
\ind\bigg( D + \sum_{C\in\hat{\moricn}_D}(D \cdot C) C \bigg)
=
\ind(D) + \frac{1}{2}\sum_{C\in\hat{\moricn}_D}(D \cdot C)\left[2(D \cdot C)+\sum_{C'\in\hat{\moricn}_D} (D \cdot C')(C \cdot C')+1 \right] \,,
\label{eq:indexp}
\ee
where we have introduced the set $\hat{\moricn}_D=\{C\in\hat{\moricn}\,|\, C\cdot D<0\}$ for ease of notation. Unfortunately, the right-hand sides of Eqs.~\eqref{eq:poss_indrel} and \eqref{eq:indexp} are not quite the same. However, if any two distinct exceptional divisors $C,C'\in\hat{\moricn}_D$ satisfy $C\cdot C'=0$ we have a perfect match. A quick look at the exceptional divisors in Appendix~\ref{app:dp_surf} shows that not all distinct pairs have a vanishing intersection. But this is also too strong a requirement - all we need is that any two distinct exceptional divisors $C$ and $C'$ with $D\cdot C<0$ and $D\cdot C'<0$ for a given effective divisor $D$ do not intersect. Remarkably, this weaker statement turns out to be true as shown in the following theorem.
\begin{thm}
Let $C$ and $C'$ be distinct generators of the Mori cone of $\dps{n}$ such that $C \cdot C' \neq 0$. If $D \cdot C \leq 0$ and $D \cdot C' < 0$ then $D$ is not in the cone of effective divisors.
\label{thm:twonegints_moricone}
\end{thm}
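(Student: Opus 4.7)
The plan is to argue by contradiction: assume $D$ is effective and derive a contradiction from the two intersection hypotheses. First I would invoke the description of the Mori cone of $\dps{n}$ recalled in Section~\ref{sec:dPprop}: the elements of $\hat{\moricn}(\dps{n})$ are precisely the irreducible $(-1)$-curves, so $C$ and $C'$ are distinct irreducible curves on the smooth projective surface $\dps{n}$. Their intersection number is therefore a non-negative integer, and the hypothesis $C\cdot C'\neq 0$ strengthens this to $C\cdot C'\geq 1$ — a small but crucial quantitative input.

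Next I would unpack effectiveness of $D$: it means $D$ is linearly equivalent to a non-negative integer combination $\sum_j n_j E_j$ of irreducible curves $E_j$, with $n_j\in\mathbb{Z}_{\geq 0}$. Let $n_C,n_{C'}\geq 0$ denote the (possibly zero) coefficients of $C$ and $C'$ in this sum. The central observation is that for any irreducible curve $E\neq C$ on a smooth surface one has $E\cdot C\geq 0$, so $C$ is the only possible source of negativity when intersecting $D$ with $C$:
\begin{equation}
D\cdot C \;=\; -n_C \;+\; \sum_{E_j\neq C} n_j\,(E_j\cdot C) \;\geq\; -n_C + n_{C'}\,(C'\cdot C).
\end{equation}
Combined with $D\cdot C\leq 0$ this yields $n_C\geq n_{C'}(C'\cdot C)$, and the completely analogous expansion for $D\cdot C'$ together with the strict inequality $D\cdot C'<0$ yields $n_{C'} > n_C\,(C\cdot C')$. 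Because $C\cdot C'\geq 1$ these two bounds chain into $n_C\geq n_{C'} > n_C$, the desired contradiction.

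The argument is therefore essentially elementary once one has this decomposition, and I do not anticipate any serious obstacle. The two facts on which everything hinges are standard: non-negativity and integrality of the intersection of two distinct irreducible curves on a smooth surface, and the identification of the Mori cone generators of $\dps{n}$ with $(-1)$-curves (needed only to know that $C$ and $C'$ are themselves irreducible). The only genuinely delicate point is aligning the strict and non-strict inequalities correctly — it is precisely the strict inequality $D\cdot C'<0$ that upgrades what would otherwise be the harmless chain $n_C\geq n_{C'}\geq n_C$ to the strict contradiction $n_C\geq n_{C'}>n_C$.
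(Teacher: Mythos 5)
Your proof is correct and follows essentially the same route as the paper: assume $D$ effective, expand it as a non-negative combination, use $C^2=-1$, non-negativity of intersections of distinct irreducible curves and $C\cdot C'\geq 1$ to get $n_C\geq n_{C'}$ from $D\cdot C\leq 0$ and $n_{C'}>n_C$ from $D\cdot C'<0$, a contradiction. The only cosmetic difference is that you decompose $D$ into arbitrary irreducible curves (the definition of effectivity) while the paper expands over the effective cone generators; note in passing that the $(-1)$-curve identification is also what supplies $C^2=C'^2=-1$ in your displayed expansion, not just irreducibility.
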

\begin{proof}
Assume, for contradiction, that $D$ is in the cone of effective divisors. Denote the generators of the effective cone by $C_i$ and set $C_1=C$ and $C_2=C'$ for convenience. Then we can write $D=\sum_i\alpha_i C_i$, where all $\alpha_i\geq 0$. It follows that
\[
0\geq D\cdot C_1=-\alpha_1+\alpha_2 (C_1\cdot C_2)+\sum_{i>2}\alpha_i(C_1\cdot C_i)\geq -\alpha_1+\alpha_2\, ,
\]
and, hence, that $\alpha_1\geq \alpha_2$. An analogous calculation using $0>D\cdot C_2$ leads to $\alpha_2>\alpha_1$ which is a contradiction. Hence, $D$ is not effective.
\end{proof}
\noindent Therefore, two distinct exceptional divisors $C,C'\in\hat{\moricn}_D$ do indeed satisfy $C\cdot C'=0$ and using this fact (together with $C^2=-1$) on the right-hand side of Eq.~\eqref{eq:indexp} shows that Eq.~\eqref{eq:poss_indrel} is indeed correct. This allows us to write our cohomology formula in its final form as
\be
h^0\left(\dps{n},\mc{O}_{\dps{n}}(D)\right) = \left.
  \begin{cases}
   \displaystyle{\ind\bigg( D + \sum_{C\in\hat{\moricn}}\theta\left(-D \cdot C\right)(D \cdot C) C \bigg)}
    & \textrm{if } D\in{\moricn}(\dps{n})  \\
    0
    & \textrm{otherwise}
  \end{cases}
  \right. \, .
\label{eq:dpsurf_indmastform}
\ee
We recall that $\hat{\moricn}$ is the list of Mori cone generators for $\dps{n}$, given in Appendix~\ref{app:dp_surf}. At this stage, the above formula is still a conjecture, since it is ultimately based on analysing cohomology data. Nevertheless it is quite remarkable in a number of ways. First of all, it allows for a practical computation of the zeroth cohomology dimensions even for the del Pezzo surfaces with larger Picard numbers. For a given divisor $D$ it is easy to identify the Mori cone generators $C$ with a negative intersection, $D \cdot C<0$, which enter the sum in Eq.~\eqref{eq:dpsurf_indmastform}. This is quite unlike the piecewise quadratic formulae, such as Eq.~\eqref{eq:dp2_polys}, considered earlier which become quickly unmanageable as the Picard number increases. 

Secondly, it is surprising that the zeroth cohomology dimension of an effective divisor $D$ can be written as the index,
\begin{equation}
  h^0\left(\dps{n},\mc{O}_{\dps{n}}(D)\right)={\rm ind}(\tilde{D})\, ,
\end{equation}
 of a different, shifted divisor
 \begin{equation}
  \tilde{D}= D + \sum_{C\in\hat{\moricn}}\theta\left(-D \cdot C\right)(D \cdot C) C \, . \label{shiftdP}
\end{equation} 
This is in line with what we have found for Hirzebruch surfaces in Eq.~\eqref{eq:hirzsurf_indmastform} and evidence for a more general mathematical statement is now mounting. We will derive this general mathematical result in the next section and show that it implies the above index formula for del Pezzo surfaces as well as the earlier one for Hirzebruch surfaces.


\section{Theorems for general surfaces}
\label{sec:genthms}
In the previous section, we have studied line bundle cohomology on Hirzebruch and del Pezzo surfaces, starting from algorithmically computed data. After several steps of re-writing our empirical formulae we have arrived at a remarkable result. The dimension of the zeroth cohomology can be written as an index,
\be
h^0\big(\surf,\mc{O}_\surf(D)\big) = \ind \big( \shfdiv{D} \big) \, , \label{h0ind}
\ee
of a shifted divisor $\tilde{D}$. For Hirzebruch surfaces this shifted divisor\footnote{In the companion paper \cite{mathpaper} the divisor $\tilde D$ is called the isoparametric transform of $D$.} has been defined in Eq.~\eqref{shiftFn} and the analogous result for del Pezzo surfaces is given in Eq.~\eqref{shiftdP}.

The main purpose of this section is to develop the mathematics underlying Eq.~\eqref{h0ind} in as much generality as possible and to find proofs for the Hirzebruch and del Pezzo index formulae. It turns out that the argument naturally proceeds in two steps. The first step, discussed in the following subsection, is to introduce a certain divisor shift which can be shown to leave the zeroth cohomology dimension unchanged. The second step is taken in Section~\eqref{sec:comb_with_vanthm} where we combine the divisor shift with certain vanishing theorems. As we will see, this will lead to index formulae for certain classes of surfaces, including Hirzebruch and del Pezzo surfaces. For general mathematical background see, for example, Refs.~\cite{Hartshorne1977,griffiths2014principles}.


\subsection{Cohomology-preserving shifts}
\label{sec:cohpresshif}
For both Hirzebruch and del Pezzo surfaces we have seen that a certain divisor shift plays a crucial role in writing down an index formula. Moreover, the equations~\eqref{shiftFn} and \eqref{shiftdP} for these shifts have a similar structure which suggests there exists a generalisation to all complex surfaces. The following theorem defines this general shift and asserts that it leaves the dimension of the zeroth cohomology unchanged.
\begin{thm}
Let $D$ be an effective divisor on a smooth compact complex projective surface $\surf$,  with associated line bundle $\mc{O}_\surf(D)$. Let ${\cal I}$ be the set of  irreducible negative self-intersection divisors. Then the following map on the Picard lattice,
\be
D \to \shfdiv{D} = D - \sum_{C\in{\cal I}} \theta( - D \cdot C ) \, \ceil{\frac{D \cdot C}{C^2}}C \,, \label{DDtilde}
\ee
preserves the zeroth cohomology,
\be
h^0\big(\surf,\mc{O}_\surf(\shfdiv{D})\big) = h^0\big(\surf,\mc{O}_\surf(D)\big) \,.
\ee
\label{thm:shift}
\end{thm}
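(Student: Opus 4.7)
The plan is to recast the claim as a statement about orders of vanishing of sections. I will show that for each $C\in{\cal I}$ with $D\cdot C<0$, every global section of $\mc{O}_\surf(D)$ vanishes along $C$ to order at least $a_C:=\ceil{D\cdot C/C^2}$. Granting this, the distinct prime divisors in ${\cal I}_D:=\{C\in{\cal I}:D\cdot C<0\}$ impose independent factorisation conditions, so multiplication by $\prod_{C\in{\cal I}_D}s_C^{a_C}$, where $s_C$ denotes the canonical section of $\mc{O}_\surf(C)$, realises an isomorphism
\[
H^0\bigl(\surf,\mc{O}_\surf(\shfdiv{D})\bigr)\iso H^0\bigl(\surf,\mc{O}_\surf(D)\bigr),
\]
which is exactly the statement of the theorem.

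The vanishing-order claim follows by iterating the short exact sequence
\[
0\to\mc{O}_\surf(D-C)\to\mc{O}_\surf(D)\to\mc{O}_C(D|_C)\to 0.
\]
When $D\cdot C<0$, the restricted line bundle on the right has negative degree, and since an irreducible projective curve carries no nonzero global section of a negative-degree line bundle, $H^0(\mc{O}_C(D|_C))=0$. The long exact sequence then yields $H^0(\mc{O}_\surf(D-C))\iso H^0(\mc{O}_\surf(D))$, the isomorphism being multiplication by $s_C$; in particular every section of $\mc{O}_\surf(D)$ is divisible by $s_C$. Iterating, $(D-mC)\cdot C=D\cdot C-m\,C^2$ stays strictly negative precisely while $m<D\cdot C/C^2$ (a positive ratio, as numerator and denominator are both negative), so the step applies exactly $a_C=\ceil{D\cdot C/C^2}$ times, delivering the claimed lower bound on the vanishing order.

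Two loose ends remain: the finiteness of the sum in \eqref{DDtilde}, and the passage from single to simultaneous factorisation. For finiteness, pick an effective representative $D=\sum_i m_i D_i$ written as a non-negative sum of distinct prime divisors; any irreducible curve $C$ different from all $D_i$ satisfies $D_i\cdot C\geq 0$ (distinct prime divisors meet non-negatively on a surface), so $D\cdot C\geq 0$, and hence ${\cal I}_D\subseteq\{D_i\}$ is finite. For simultaneity, distinct curves $C,C'\in{\cal I}_D$ are distinct prime Cartier divisors, so a section of $\mc{O}_\surf(D)$ vanishing to order $a_C$ along $C$ and to order $a_{C'}$ along $C'$ is divisible by $s_C^{a_C}s_{C'}^{a_{C'}}$; inducting over the finite set ${\cal I}_D$ produces the full factorisation.

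The main obstacle, in my view, is not any single cohomological calculation but the simultaneous bookkeeping. A naive iterative approach, subtracting multiples of different $C$'s one after another, would need to re-evaluate each remaining intersection $(D-\sum a_{C'}C')\cdot C$ at every stage, and there is no \emph{a priori} reason such a process would reproduce the closed-form shift \eqref{DDtilde}, whose multiplicities $a_C$ and indicators $\theta(-D\cdot C)$ are all computed from the original $D$. Phrasing the argument in terms of vanishing orders of the original sections sidesteps this entirely: one extracts all orders $a_C$ from $H^0(\mc{O}_\surf(D))$ in one shot and performs the total factorisation simultaneously, matching \eqref{DDtilde} without any re-evaluation.
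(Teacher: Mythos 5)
Your proof is correct, and it handles the crucial multi-curve step by a genuinely different mechanism than the paper. The single-curve engine is identical: the Koszul sequence $0\to\mc{O}_\surf(D-C)\to\mc{O}_\surf(D)\to\mc{O}_\surf(D)|_C\to0$, the vanishing of $H^0$ of a negative-degree line bundle on the irreducible curve $C$, and the count that $(D-mC)\cdot C<0$ precisely for $m<D\cdot C/C^2$, giving $\ceil{D\cdot C/C^2}$ admissible steps — this is the content of the paper's first two lemmas. Where you diverge is in combining different curves. The paper proceeds by sequential subtraction: a recursive lemma permits removing up to $m_i=\ceil{D_{(i-1)}\cdot C_i/C_i^2}$ copies of $C_i$ from the partially shifted divisor $D_{(i-1)}$, and the closed-form coefficients $n_i=\ceil{D\cdot C_i/C_i^2}$ of Eq.~\eqref{DDtilde} are then shown to be admissible because distinct irreducible curves satisfy $C_j\cdot C_i\geq 0$, so subtracting earlier curves only makes $D_{(i-1)}\cdot C_i$ more negative and hence $n_i\leq m_i$; this is exactly the ``a priori reason'' you suspected was unavailable for a naive iteration. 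You instead upgrade the single-curve statement to a lower bound on the order of vanishing of every section of $\mc{O}_\surf(D)$ along each $C$ with $D\cdot C<0$, and use the independence of multiplicities of the zero divisor along distinct prime divisors to factor out $\prod_C s_C^{a_C}$ simultaneously, exhibiting an explicit isomorphism $H^0\big(\surf,\mc{O}_\surf(\shfdiv{D})\big)\iso H^0\big(\surf,\mc{O}_\surf(D)\big)$ given by multiplication. Both routes deliver the theorem; the paper's recursive bookkeeping dovetails naturally with its subsequent corollary on iterating the map into the nef cone, while your section-level argument yields slightly finer information (it identifies the subtracted divisor as part of the fixed component of the linear system $|D|$) and is closer in spirit to the linear-systems proof the paper defers to its companion paper. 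Your finiteness argument, that only components of an effective representative of $D$ can meet $D$ negatively, correctly fills in the paper's unproved remark that the sum in Eq.~\eqref{DDtilde} is finite.
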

\vspace{-.6cm}
\noindent While it can happen that there are infinitely many irreducible negative self-intersection divisors, only finitely many can have a negative intersection with a given divisor $D$. This means only finitely many terms appear in the sum in Eq.~\eqref{DDtilde}. Note that once the intersection form and the negative self-intersection divisors  on $\surf$ are known it is straightforward to evaluate Eq.~\eqref{DDtilde} explicitly. We will sometimes refer to Eq.~\eqref{DDtilde} as the ``master formula" for cohomology.\\[2mm]
A mathematical proof of Theorem~\ref{thm:shift} is given in an accompanying paper \cite{mathpaper} and a proof sketch, by an alternative method, is provided in Appendix~\ref{app:prfsketch}. Here, we would like to provide an intuitive explanation.

In the next two paragraphs, the term ``divisor" will refer to an actual divisor, rather than to a divisor class as in the rest of the paper. First recall that in the context of the divisor line bundle correspondence the projectivisation of the zeroth cohomology $H^0(\surf,\mc{O}_\surf(D))$ can be identified with the linear system, $|D|$, of the associated divisor. The linear system $|D|$ of a divisor consists of all effective divisors equivalent to $D$ and we can, loosely, think of it as the deformations of $D$. The dimension of the linear system and, hence, the dimension of the zeroth cohomology remains unchanged if we remove from $D$ a piece without deformations, that is, a rigid piece. 

How can a rigid piece in a divisor $D$ be detected? A rigid divisor $C$ has negative self-intersection, $C^2<0$. If such a rigid divisor $C$ is contained in $D$ it gives a negative contribution to the intersection number $D\cdot C$. Of course this negative contribution might be overwhelmed by other positive ones but if it so happens that $D\cdot C<0$ we can conclude that $D$ contains the rigid divisor $C$. This is the detection method for rigid divisors underlying Theorem~\ref{thm:shift}, as the step function in Eq.~\eqref{DDtilde} indicates. In fact, the value of the ceiling function in Eq.~\eqref{DDtilde} gives the multiple of $C$ contained in $D$. Eq.~\eqref{DDtilde} removes the multiples of all rigid divisors in $D$ which can be detected in this manner and, hence, the dimension of the zeroth cohomology remains unchanged.\\[2mm]
It is important to note that iterating the map~\eqref{DDtilde} is not necessarily trivial. The rigid pieces which can be detected in the divisor $\tilde{D}$, obtained after applying the map to $D$ once, might well be different from the ones detected in $D$. Hence, we should apply the map~\eqref{DDtilde} multiple times until the result stabilises. We denote the divisor which results from this process by $\underline{\tilde{D}}$ and this divisor has the following property.
\begin{crl}
Write $\shfdivstab{D}$ for the divisor that is the result of iterating the map $D \to \shfdiv{D}$ defined by Eq.~\eqref{DDtilde}, until stabilisation after a finite number of steps. Then $\shfdivstab{D}$ is a nef divisor such that $h^0\big(\surf,\mc{O}(D)\big) = h^0\big(\surf,\mc{O}(\shfdivstab{D})\big)$.
\label{crl:algm}
\end{crl}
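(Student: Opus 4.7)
The plan is to decompose the statement into three sub-claims: that the iteration terminates in finitely many steps, that its limit $\shfdivstab{D}$ is nef, and that $h^0$ is preserved throughout. The last is essentially immediate from Theorem~\ref{thm:shift}, which yields $h^0(\surf,\mc{O}_\surf(\shfdiv{D})) = h^0(\surf,\mc{O}_\surf(D))$ for every effective $D$. Since an effective $D$ satisfies $h^0 > 0$, its image $\shfdiv{D}$ is again effective and the theorem can be applied again; so the desired equality $h^0(\surf,\mc{O}_\surf(D)) = h^0(\surf,\mc{O}_\surf(\shfdivstab{D}))$ will follow by a straightforward induction once termination is in hand.

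For termination I would fix any ample class $A$ on $\surf$ and track the integer $A \cdot D^{(k)}$ along the iterates $D^{(0)} = D$, $D^{(1)} = \shfdiv{D}$, and so on. Writing a single step as $\shfdiv{D} = D - E$ with $E = \sum_{C \in {\cal I}} \theta(-D \cdot C)\,\ceil{D \cdot C / C^2}\, C$, observe that each summand has non-negative coefficient (since $D \cdot C < 0$ and $C^2 < 0$ yield a positive ratio), so $E$ is effective, and $E \neq 0$ exactly when the step is non-trivial. Ampleness of $A$ then gives $A \cdot E > 0$, so $A \cdot D^{(k+1)} < A \cdot D^{(k)}$ at every non-trivial step. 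Because each iterate is effective, $A \cdot D^{(k)}$ remains a non-negative integer, and a strictly decreasing sequence of non-negative integers must stabilise in finitely many steps.

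For nefness it suffices to check $\shfdivstab{D} \cdot C \geq 0$ for every irreducible curve $C$ on $\surf$, which I would split on the sign of $C^2$. If $C^2 < 0$, then $C \in {\cal I}$, and stability at $\shfdivstab{D}$ forces $\shfdivstab{D} \cdot C \geq 0$; otherwise the Heaviside factor would be non-zero and the map would still act non-trivially, contradicting stability. If $C^2 \geq 0$, I would use effectivity of $\shfdivstab{D}$ and decompose $\shfdivstab{D} = aC + D'$, with $a \geq 0$ the multiplicity of $C$ and $D'$ an effective divisor sharing no irreducible component with $C$; then $D' \cdot C \geq 0$ by elementary intersection theory on surfaces, and $\shfdivstab{D} \cdot C = a C^2 + D' \cdot C \geq 0$.

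The main obstacle I anticipate is not any single step but the mild logical coupling between the three sub-claims: Theorem~\ref{thm:shift} requires effectivity of its input, effectivity of the iterates relies on $h^0 > 0$, and $h^0 > 0$ is exactly what Theorem~\ref{thm:shift} preserves. This apparent circularity is harmless — one runs a single induction on $k$, using at stage $k$ both the preservation of $h^0$ and the descent $A \cdot D^{(k+1)} < A \cdot D^{(k)}$ — but the argument must be organised carefully so that each inductive step cites Theorem~\ref{thm:shift} with a verified effective input.
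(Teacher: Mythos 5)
Your proposal is correct and follows essentially the same route as the paper: the $h^0$-preservation is obtained by iterating Theorem~\ref{thm:shift} (with effectivity of each iterate guaranteed by $h^0>0$), and nefness of $\shfdivstab{D}$ follows because, at stabilisation, no irreducible negative self-intersection curve meets it negatively. The paper treats both the finiteness of the iteration and the case $C^2\geq 0$ as immediate (deferring full rigour to its companion paper), whereas you usefully make these explicit via the ample-class descent argument and the decomposition $\shfdivstab{D}=aC+D'$ for curves of non-negative self-intersection.
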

\noindent It is clear from Theorem~\ref{thm:shift} {that $\underline{\tilde{D}}$ has the same zeroth cohomology dimension as $D$ but why is $\underline{\tilde{D}}$ a nef divisor? Recall that, by definition, a divisor ${\cal D}$ is  nef if there are no irreducible, negative self-intersection divisors $C$ with ${\cal D}\cdot C<0$. By construction, the divisor $\underline{\tilde{D}}$ has precisely this property.\\[2mm]
For the purpose of computing cohomology, Theorem~\ref{thm:shift} and its corollary can be helpful if the cohomology dimension of the new divisor $\underline{\tilde{D}}$ is easier to determine than that of $D$. This can happen if a suitable vanishing theorem applies to the nef divisors $\underline{\tilde{D}}$ and this is what we will discuss in the next sub-section.


\subsection{Combination with vanishing theorems}
\label{sec:comb_with_vanthm}
In the previous sub-section we have seen that  the problem of computing the zeroth cohomology dimensions over the full Picard lattice reduces to computing these cohomologies in the nef cone. Frequently, there is a vanishing theorem which assert that higher cohomologies vanish for nef divisors ${\cal D}$. In this case, the zeroth cohomology for such divisors can be computed from the index, that is, if
\begin{equation}
 h^q(\surf,\mc{O}_\surf({\cal D}))=0\mbox{ for }q=1,2\quad\Rightarrow\quad h^0(\surf,\mc{O}_\surf({\cal D}))={\rm ind}({\cal D})\, .
\end{equation} 
Hence, we have the following simple corollary.
\begin{crl}
If a vanishing theorem on a smooth compact complex projective surface $\surf$ establishes that higher cohomologies vanish in the nef cone, then any effective divisor $D$ satisfies $h^0\big(\surf,\mc{O}(D)\big) = \ind\big(\surf,\mc{O}(\shfdivstab{D})\big)$, where $\shfdivstab{D}$ is the divisor obtained from $D$ by iterating the map~\eqref{DDtilde}.
\label{crl:vanthm_givesindalg}
\end{crl}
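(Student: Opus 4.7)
The plan is to chain together three ingredients that are essentially already in hand: Corollary \ref{crl:algm}, the hypothesised vanishing theorem, and the definition of the index. Since the statement is a combinatorial assembly rather than a genuinely new theorem, I expect the argument to be very short; the only substantive content was absorbed into the preceding corollary.

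First I would invoke Corollary \ref{crl:algm} to produce, from the effective divisor $D$, a nef divisor $\shfdivstab{D}$ obtained by iterating the map \eqref{DDtilde} a finite number of times, together with the identity $h^0(\surf,\mc{O}_\surf(D)) = h^0(\surf,\mc{O}_\surf(\shfdivstab{D}))$. Nefness here is exactly what the hypothesis of the present corollary needs, since $\shfdivstab{D}$ sits in the nef cone by construction.

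Next I would apply the assumed vanishing theorem to $\shfdivstab{D}$ to conclude $h^1(\surf,\mc{O}_\surf(\shfdivstab{D})) = h^2(\surf,\mc{O}_\surf(\shfdivstab{D})) = 0$. Combined with the Hirzebruch--Riemann--Roch formula \eqref{RR}, which gives
\begin{equation*}
\ind(\surf,\mc{O}_\surf(\shfdivstab{D})) = h^0(\surf,\mc{O}_\surf(\shfdivstab{D})) - h^1(\surf,\mc{O}_\surf(\shfdivstab{D})) + h^2(\surf,\mc{O}_\surf(\shfdivstab{D})),
\end{equation*}
this yields $h^0(\surf,\mc{O}_\surf(\shfdivstab{D})) = \ind(\surf,\mc{O}_\surf(\shfdivstab{D}))$. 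Chaining this with the equality from step one gives the claimed identity.

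Since all the genuine work lies upstream (in Theorem \ref{thm:shift} and its finiteness corollary), there is no real obstacle here. The only point worth flagging is a verification that the hypothesis is non-vacuous: one should check that $\shfdivstab{D}$ always lies in the region where the specified vanishing theorem applies, which for the cases of interest (Kodaira vanishing and its toric/Demazure refinements, cited as Theorem~\ref{demazure} and Corollary~\ref{crldPvan} in the paper) amounts to the statement that nef divisors on Hirzebruch, del Pezzo, or compact toric surfaces have vanishing higher cohomology. With that observation the proof is complete.
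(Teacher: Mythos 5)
Your argument is correct and is essentially the paper's own (implicit) proof: Corollary~\ref{crl:algm} gives a nef divisor $\shfdivstab{D}$ with the same $h^0$, the hypothesised vanishing kills $h^1$ and $h^2$ there, and $\ind = h^0 - h^1 + h^2$ then turns $h^0$ into the index. Your closing remark about checking non-vacuousness is unnecessary here, since the vanishing on the nef cone is taken as a hypothesis of this corollary and is only verified later for the specific classes of surfaces (Theorem~\ref{demazure}, Corollary~\ref{crldPvan}).
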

\noindent
Which known vanishing theorems might be used to establish the required vanishing property in the nef cone? The prototypical example of a vanishing theorem for higher cohomologies is the Kodaira vanishing theorem and a particularly powerful generalisation of this theorem is the Kawamata-Viehweg vanishing theorem.
\begin{thm}[Kawamata-Viehweg vanishing theorem for surfaces]
Let $\surf$ be a smooth complex projective surface, and let $D$ be a nef and big\footnote{A nef divisor is big if and only if its self-intersection is strictly positive.} divisor on $\surf$. Then
\be
h^q\left(\surf,\mc{O}_\surf(K_\surf+D)\right) = 0 \quad \mathrm{for}\; q>0 \,.
\ee
\label{thm:kv_vanthm}
\end{thm}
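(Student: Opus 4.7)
The statement splits into the cases $q=2$ and $q=1$, which I would attack by very different methods. The $q=2$ case reduces immediately via Serre duality to showing
\be
h^0(\surf,\mc{O}_\surf(-D))=0 \, .
\ee
If $-D$ were effective, then intersecting with the nef class $D$ would yield $D\cdot(-D)\geq 0$ and hence $D^2\leq 0$, contradicting the fact (noted in the footnote to the theorem statement) that for a nef divisor bigness is equivalent to $D^2>0$. So $-D$ is not effective and $h^2$ vanishes.

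The serious case is $q=1$, where the plan is to reduce to the classical Kodaira vanishing theorem through a cyclic covering construction. The essential input from bigness is Kodaira's lemma: there exist a positive integer $m$, an ample $\mathbb{Q}$-divisor $A$, and an effective $\mathbb{Q}$-divisor $E$ with $D \sim_{\mathbb{Q}} A + E$. After clearing denominators, $mA$ is an integral ample divisor, and for $m$ sufficiently divisible one can apply Bertini to pick a smooth irreducible member $B\in|mA|$. I would then construct the standard cyclic $m$-fold cover $\pi\colon\tilde{\surf}\to\surf$ associated to $\mc{O}_\surf(A)^{\otimes m}\cong\mc{O}_\surf(B)$, so that $\tilde{\surf}$ is smooth. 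Combining the projection formula with Riemann-Hurwitz and the eigensheaf decomposition of $\pi_*\omega_{\tilde{\surf}/\surf}$, one obtains
\be
\pi_*\mc{O}_{\tilde{\surf}}(K_{\tilde{\surf}}+\pi^*D) \;=\; \bigoplus_{j=0}^{m-1}\mc{O}_\surf(K_\surf+D+jA) \, ,
\ee
with $\mc{O}_\surf(K_\surf+D)$ sitting as the $j=0$ summand. Since $\pi$ is finite, $R^q\pi_*=0$ for $q>0$, and the Leray spectral sequence collapses; it would therefore suffice to show that $h^1(\tilde{\surf},\mc{O}_{\tilde{\surf}}(K_{\tilde{\surf}}+\pi^*D))=0$, whereupon the desired $h^1(\surf,\mc{O}_\surf(K_\surf+D))=0$ drops out as a direct summand.

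The main obstacle — and the genuine content of Kawamata-Viehweg beyond Kodaira — is verifying the upstairs $h^1$ vanishing. Naively one would hope that $\pi^*D$ is ample on $\tilde{\surf}$ so that Kodaira applies directly, but $\pi^*D=\pi^*A+\pi^*E$ decomposes only as ample-plus-effective, which in general is not ample (any curve contained in $\mathrm{supp}(\pi^*E)$ can have negative intersection with $\pi^*E$). The standard resolution is to pass further to a log resolution on which $E$ acquires simple normal crossings support, combine the covering trick with the covering upstairs so that the fractional part of $E$ is absorbed into the ramification, and then apply Kodaira to a divisor of the form $K+(\text{ample})$ whose pushforward retains $\mc{O}_\surf(K_\surf+D)$ as a summand. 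Carefully dovetailing Bertini, the cyclic cover, the log resolution, and the Riemann-Hurwitz bookkeeping is the delicate part; once in place, the descent via the spectral sequence and the eigensheaf decomposition is purely formal.
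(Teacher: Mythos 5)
Your proposal addresses a statement the paper does not actually prove: Theorem~\ref{thm:kv_vanthm} is the classical Kawamata--Viehweg vanishing theorem, quoted as a known refinement of Kodaira vanishing (the natural citation is Lazarsfeld's \emph{Positivity in algebraic geometry}, already in the bibliography), so there is no internal argument to compare against and the expected ``proof'' is a reference. Judged on its own terms, your $q=2$ case is correct and complete: Serre duality gives $h^2(\surf,\mc{O}_\surf(K_\surf+D))=h^0(\surf,\mc{O}_\surf(-D))$, and if $-D$ were effective then nefness of $D$ would force $D^2\leq 0$, contradicting $D^2>0$.

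The $q=1$ case, however, contains a genuine gap, which you yourself flag: the decisive step is described but not carried out. Concretely, the cyclic cover as you set it up does not reduce the problem to Kodaira vanishing. Since $\pi$ is finite and surjective, $\pi^*D$ is again nef and big but in general not ample, so $h^1(\tilde{\surf},\mc{O}_{\tilde{\surf}}(K_{\tilde{\surf}}+\pi^*D))=0$ is a statement of exactly the same type as the one being proved, merely on a different surface --- the reduction is circular unless the covering is chosen so as to change the nature of the divisor (this is the content of Kawamata's covering trick: after a log resolution making the effective part $E$ simple normal crossing, one constructs a cover on which a suitable $\mathbb{Q}$-divisor becomes integral, and only then applies Kodaira or Norimatsu vanishing and descends via the trace splitting). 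In addition there is a technical mismatch in the construction as written: Kodaira's lemma produces $A$ only as an ample $\mathbb{Q}$-divisor, so $\mc{O}_\surf(A)$ is not a line bundle, and both the cover ``associated to $\mc{O}_\surf(A)^{\otimes m}\cong\mc{O}_\surf(B)$'' and the summands $\mc{O}_\surf(K_\surf+D+jA)$ are undefined unless $A$ is integral; one must first fix an honest line bundle whose $m$-th power is $\mc{O}_\surf(B)$ and redo the eigensheaf bookkeeping. Until the log-resolution-plus-covering step is actually executed, the $q=1$ vanishing --- which, as you note, is the entire content of Kawamata--Viehweg beyond Kodaira --- remains unproven.
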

\vspace{-.8cm}
\noindent When a space is toric, there is the stronger Demazure vanishing theorem.
\begin{thm}[Demazure vanishing theorem for surfaces] \label{demazure}
Let $\surf$ be a toric surface whose fan has convex support, and let $D$ be a nef divisor. Then
\be
h^q\left(\surf,\mc{O}_\surf(D)\right) = 0 \quad \mathrm{for}\; q>0 \,.
\ee
\label{thm:d_vanthm}
\end{thm}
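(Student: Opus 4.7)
My plan is to use the standard combinatorial description of torus-equivariant sheaf cohomology on a toric variety. On $\surf=X_\Sigma$, every divisor class is represented by a torus-invariant Weil divisor $D=\sum_\rho a_\rho D_\rho$, and the cohomology decomposes under the torus action as a weight-graded sum
\be
H^q\bigl(\surf,\mc{O}_\surf(D)\bigr)=\bigoplus_{m\in M}H^q\bigl(\surf,\mc{O}_\surf(D)\bigr)_m
\ee
indexed by the character lattice $M$. It therefore suffices to show that each weight piece vanishes for $q\geq 1$.

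I would next introduce the piecewise-linear support function $\phi_D:|\Sigma|\to\mathbb{R}$ of $D$, characterised on ray generators by $\phi_D(u_\rho)=-a_\rho$. Two classical facts are crucial here: first, $D$ is nef if and only if $\phi_D$ is convex on the support $|\Sigma|$, which is where the nef hypothesis enters; second, the convex-support hypothesis on the fan ensures that $|\Sigma|$ itself is a convex subset of $N_\mathbb{R}\cong\mathbb{R}^2$.

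The central step is a \v{C}ech computation on the standard torus-invariant affine open cover $\{U_\sigma\}_{\sigma\in\Sigma}$, which identifies each weight space with the reduced singular cohomology of a topological subset $Z_{D,m}\subseteq|\Sigma|$ built from $\phi_D$, of the shape
\be
H^q\bigl(\surf,\mc{O}_\surf(D)\bigr)_m\,\cong\,\widetilde{H}^{q-1}\bigl(Z_{D,m};\mathbb{C}\bigr)\,.
\ee
This is the Eisenbud--Mustata--Stillman / Cox--Little--Schenck formula and is purely combinatorial. With that in hand the proof closes via a convexity argument: when $\phi_D$ is convex and $|\Sigma|$ is convex, the sets $Z_{D,m}$ are controlled by sub- and super-level sets of the convex function $\phi_D-\langle m,\cdot\rangle$, and a short topological analysis (a straight-line deformation retraction inside the convex ambient $|\Sigma|$) shows their reduced cohomology vanishes in every non-negative degree.

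The main obstacle, in my view, is the second step: setting up the \v{C}ech computation with enough care to produce the identification with $\widetilde{H}^{q-1}(Z_{D,m})$ requires a detailed accounting of which monomials $\chi^m$ are regular sections over each affine chart $U_\sigma$, and of how restrictions across intersections $U_\sigma\cap U_{\sigma'}$ translate into the simplicial structure on $|\Sigma|$. Once this framework is in place, the nef and convex-support hypotheses combine in one clean convexity argument that simultaneously handles every weight $m$ and every $q\geq 1$. The restriction $\dim\surf=2$ plays no essential role in the logic but simplifies the topological analysis to a consideration of subsets of the plane.
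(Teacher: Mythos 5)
The paper does not actually prove this statement: Theorem~\ref{demazure} is quoted as a classical result (Demazure vanishing), with Ref.~\cite{cox2011toric} cited only for the auxiliary fact that every compact toric surface has a fan with convex support. Your sketch is, in essence, the standard textbook proof of that classical theorem (Cox--Little--Schenck, Chapter 9), and it is correct as an outline: the eigenspace decomposition of $H^q\bigl(\surf,\mc{O}_\surf(D)\bigr)$ under the torus action, the combinatorial identification $H^q\bigl(\surf,\mc{O}_\surf(D)\bigr)_m\cong\widetilde{H}^{q-1}\bigl(Z_{D,m};\mathbb{C}\bigr)$ with $Z_{D,m}=\{u\in|\Sigma|:\langle m,u\rangle<\phi_D(u)\}$, and the convexity argument are exactly the ingredients of the standard proof. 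Two points to tighten if you write it out in full. First, fix the sign convention: with $\phi_D(u_\rho)=-a_\rho$, ``$D$ nef $\Leftrightarrow$ $\phi_D$ convex'' uses the toric-geometry convention of convexity (concavity in the usual analytic sense), and you should check that with your chosen signs the set $Z_{D,m}$ really is a convex subset of the convex support $|\Sigma|$; once it is convex it is empty or contractible, which disposes of all $\widetilde{H}^{q-1}$ with $q\geq 1$ at once and is cleaner than an ad hoc retraction of level sets. Second, the identification of weight spaces with reduced cohomology of $Z_{D,m}$ is itself a nontrivial theorem; since you are proving a quoted classical statement, citing it is fine, but a self-contained proof would have to carry out that \v{C}ech computation, which is where the real work lies, as you correctly note. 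The restriction to surfaces is indeed immaterial.
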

\vspace{-.6cm}
\noindent
For which spaces do these theorems guarantee that higher cohomologies vanish in the nef cone? The Demazure vanishing theorem holds for the entire nef cone and applies to toric varieties whose fans have convex support. In fact, the fan has convex support for any compact toric surface - see for example Ref.~\cite{cox2011toric} - so we have the following corollary.
\begin{crl}
Let $\surf$ be a compact toric surface, and let $D$ be an effective divisor. Then
\be
h^0\big(\surf,\mc{O}(D)\big) = \ind(\shfdivstab{D}) \, ,
\ee
where the divisor $\shfdivstab{D}$ is obtained from $D$ by iterating the map~\eqref{DDtilde}.
\label{crl:torspac_ind}
\end{crl}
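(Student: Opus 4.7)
The plan is to recognize that this corollary is essentially an immediate instance of Corollary~\ref{crl:vanthm_givesindalg} specialized to the toric setting, once we verify that Demazure vanishing applies throughout the nef cone of a compact toric surface. The proof therefore consists of a single reduction followed by invoking the appropriate vanishing theorem, with no substantive new content.

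More precisely, I would first note that for any compact toric surface $\surf$, the associated fan $\Sigma$ has support $|\Sigma| = \mathbb{R}^2$; this is the toric-geometric reformulation of compactness (see, e.g., Ref.~\cite{cox2011toric}), and in particular $|\Sigma|$ is convex. The hypothesis of Theorem~\ref{thm:d_vanthm} is therefore satisfied, so for every nef divisor $\mc{D}$ on $\surf$ we obtain $h^q(\surf,\mc{O}_\surf(\mc{D})) = 0$ for all $q > 0$. This is precisely the nef-cone higher-cohomology vanishing demanded by the hypothesis of Corollary~\ref{crl:vanthm_givesindalg}.

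Applying Corollary~\ref{crl:vanthm_givesindalg} to $\surf$ with this vanishing input then yields $h^0(\surf,\mc{O}_\surf(D)) = \ind(\shfdivstab{D})$ for every effective divisor $D$, which is the claim. Unpacking that corollary in this particular case amounts to chaining Theorem~\ref{thm:shift} (and its iteration in Corollary~\ref{crl:algm}), which equates $h^0(\surf,\mc{O}_\surf(D))$ with $h^0(\surf,\mc{O}_\surf(\shfdivstab{D}))$ for the nef divisor $\shfdivstab{D}$, with the just-established vanishing $h^{1,2}(\surf,\mc{O}_\surf(\shfdivstab{D})) = 0$, which via Riemann--Roch \eqref{RR} identifies that $h^0$ with $\ind(\shfdivstab{D})$.

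There is no real obstacle in the argument: the substantive work has already been carried out in Theorem~\ref{thm:shift} (the cohomology-preserving shift) and Theorem~\ref{thm:d_vanthm} (Demazure vanishing). The only toric-specific input is the convexity of the fan support, which is automatic under compactness. The role of the corollary is thus simply to identify compact toric surfaces as the natural class on which both ingredients apply simultaneously and unconditionally.
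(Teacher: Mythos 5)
Your proposal is correct and follows exactly the paper's route: it observes that the fan of a compact toric surface has full (hence convex) support, invokes the Demazure vanishing theorem (Theorem~\ref{thm:d_vanthm}) for the nef cone, and then applies Corollary~\ref{crl:vanthm_givesindalg} (which already packages Theorem~\ref{thm:shift} and Corollary~\ref{crl:algm} with the index theorem). Nothing is missing and nothing differs in substance from the paper's argument.
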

\noindent
Clearly, this covers a large and important set of surfaces, including all Hirzebruch surfaces, as well as their blow-ups, the del Pezzo surfaces $\dps{n}$ for $n=1,2,3$,  and their blow-ups and the toric surfaces that correspond to the 16 reflexive polytopes. What we have shown is that index formulae for the zeroth cohomology dimension exist for all these cases. Since all these toric surfaces frequently appear in compactification these results are of direct relevance for string theory.\\[2mm]
The Kawamata-Viehweg vanishing theorem applies to a very general class of surfaces $\surf$, but it guarantees vanishing in a region that is not precisely the nef cone. Specifically, it asserts vanishing of the higher cohomologies of a divisor $D$ if $D - K_\surf$ is nef and big. It applies to divisors in the intersection of the nef and big cones, shifted by the anti-canonical divisor $-K_\surf$. This is of partial use, since this region has some overlap with the nef cone and, hence, leads to index formulae for some but not all effective divisors. Sometimes this overlap covers almost all of the nef cone and this can be shown to happen for Hirzebruch surfaces.

However, for surfaces with a nef and big anti-canonical bundle $-K_\surf$ the Kawamata-Viehweg vanishing theorem can be applied to all nef divisors $D$. To see this, first note that both $D$ and $-K_\surf$ being nef immediately impies that $D-K_\surf$ is nef.  To show that $D-K_\surf$ is big we consider its self-intersection
\be
(D - K_\surf)^2 = D^2 + 2 D \cdot (-K_\surf) + (- K_\surf)^2 \, .
\ee
Since $-K_\surf$ is nef and $D$ is effective, the first two terms on the right-hand-side are $\geq 0$. Additionally since $-K_\surf$ is big, $(-K_\surf)^2 > 0$, so that $(D - K_\surf)^2>0$ and, hence, $D-K_\surf$ is big. This leads to the following corollary.
\begin{crl} \label{crldPvan}
Let $\surf$ be a smooth compact complex projective surface with nef and big anti-canonical divisor, $-K_\surf$, and let $D$ be an effective divisor. Then
\be
h^0\big(\surf,\mc{O}(D)\big) = \ind(\shfdivstab{D}) \, ,
\ee
where the divisor $\shfdivstab{D}$ is obtained from $D$ by iterating the map~\eqref{DDtilde}.
\end{crl}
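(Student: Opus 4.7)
The plan is to combine Corollary~\ref{crl:algm}, which reduces the problem to the nef cone via the iterated shift~\eqref{DDtilde}, with Kawamata--Viehweg vanishing applied to the resulting nef divisor $\shfdivstab{D}$. First I would invoke Corollary~\ref{crl:algm} to replace $D$ by $\shfdivstab{D}$, which is nef and satisfies $h^0(\surf,\mc{O}(\shfdivstab{D})) = h^0(\surf,\mc{O}(D))$. The task then reduces to showing $h^q(\surf,\mc{O}(\shfdivstab{D})) = 0$ for $q = 1,2$, since then Riemann--Roch gives $h^0(\surf,\mc{O}(\shfdivstab{D})) = \ind(\shfdivstab{D})$.

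Next I would rewrite $\mc{O}(\shfdivstab{D}) = \mc{O}(K_\surf + (\shfdivstab{D} - K_\surf))$ and verify the hypotheses of Theorem~\ref{thm:kv_vanthm} for $\shfdivstab{D} - K_\surf$. Nefness is immediate: $\shfdivstab{D}$ is nef by Corollary~\ref{crl:algm}, $-K_\surf$ is nef by hypothesis, and the sum of two nef divisors is nef. For bigness, expanding
\[
(\shfdivstab{D} - K_\surf)^2 = \shfdivstab{D}^2 + 2\,\shfdivstab{D}\cdot(-K_\surf) + (-K_\surf)^2,
\]
the first term is $\geq 0$ because nef divisors have non-negative self-intersection on a smooth projective surface, the middle term is $\geq 0$ because two nef divisors on a surface pair non-negatively (e.g.\ by approximation with ample classes), and the last term is strictly positive since $-K_\surf$ is nef and big. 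Hence $(\shfdivstab{D} - K_\surf)^2 > 0$ and $\shfdivstab{D} - K_\surf$ is big.

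Theorem~\ref{thm:kv_vanthm} then gives $h^q(\surf,\mc{O}(\shfdivstab{D})) = 0$ for $q > 0$, and combining with Corollary~\ref{crl:algm} (or equivalently invoking Corollary~\ref{crl:vanthm_givesindalg}) yields the claim. I do not anticipate any substantial obstacle: the statement is essentially a packaging of Corollary~\ref{crl:algm} with Kawamata--Viehweg, and the only small point worth flagging is the standard positivity of the nef--nef pairing on a surface. One should also note that effectiveness of $\shfdivstab{D}$ is not required as an input; it follows a posteriori from $h^0(\surf,\mc{O}(\shfdivstab{D})) = h^0(\surf,\mc{O}(D)) > 0$ whenever $D$ is effective.
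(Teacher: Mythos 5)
Your proposal is correct and follows essentially the same route as the paper: reduce to the nef divisor $\shfdivstab{D}$ via the iterated shift (Corollaries~\ref{crl:algm} and \ref{crl:vanthm_givesindalg}), then apply Kawamata--Viehweg by checking that $\shfdivstab{D}-K_\surf$ is nef (sum of nef divisors) and big (via the same self-intersection expansion, with $(-K_\surf)^2>0$ doing the work). The only cosmetic difference is that the paper phrases the non-negativity of the first two terms using effectiveness of the divisor together with nefness of $-K_\surf$, whereas you use nefness of $\shfdivstab{D}$ throughout, which is if anything cleaner.
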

\noindent 
A quick glance at Eq.~\eqref{KdP} shows that the anti-canonical divisor for $\dps{n}$ is nef and big. Hence Corollary~\ref{crldPvan} applies to all del Pezzo surfaces $\dps{n}$, where $n=0,1,\ldots ,8$, and guarantees the existence of an index formula.\\[2mm]
We have now seen that index formulae exist for many types of surfaces. However, the divisor in the argument of the index is obtained by iterating the master formula~\eqref{DDtilde} and is, hence, fairly complicated in general.  In practice, this corresponds to an algorithm for computing the cohomology dimension rather than explicit formula. However, there are surfaces where a single application of the master formula~\eqref{DDtilde} already maps into the nef cone. In such cases, we have a simple index formula for the cohomology dimension as stated in the following corollary.
\begin{crl}\label{cor:single}
Let $\surf$ be a smooth compact complex projective surface on which a vanishing theorem guarantees that higher cohomologies vanish in the nef cone. If additionally, for any effective divisor $D$ the divisor $\tilde{D}$ defined in Eq.~\eqref{DDtilde} is in the nef cone, then we have
\be\label{crl:oneshift_indform}
h^0\left(\surf,\mc{O}_\surf(D)\right) = \ind\bigg( D - \sum_{C\in{\cal I}}\theta( - D \cdot C ) \, \ceil{\frac{D \cdot C}{C^2}}C \bigg) \,,
\ee
where ${\cal I}$ is the set of irreducible negative self-intersection divisors.
\end{crl}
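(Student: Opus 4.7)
The plan is to prove this by chaining together Theorem~\ref{thm:shift} with the Riemann-Roch identity, once the hypothesised vanishing has pinned down the higher cohomology of the shifted divisor. In other words, the corollary is essentially a non-iterated version of Corollary~\ref{crl:vanthm_givesindalg}: the extra assumption that one application of the map~\eqref{DDtilde} already lands in the nef cone upgrades the algorithmic statement to a closed-form expression.

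Concretely, for an effective divisor $D$ on $\surf$, I would first invoke Theorem~\ref{thm:shift} to obtain
\[
h^0\big(\surf,\mc{O}_\surf(D)\big) = h^0\big(\surf,\mc{O}_\surf(\tilde{D})\big),
\]
where $\tilde{D}$ is precisely the divisor appearing inside the index on the right-hand side of Eq.~\eqref{crl:oneshift_indform}. Next, by the second hypothesis of the corollary, $\tilde{D}$ already lies in the nef cone of $\surf$, so the assumed vanishing theorem guarantees $h^q(\surf,\mc{O}_\surf(\tilde{D})) = 0$ for $q=1,2$. Finally, applying the index theorem to the shifted line bundle gives
\[
\ind(\tilde{D}) = h^0\big(\surf,\mc{O}_\surf(\tilde{D})\big) - h^1\big(\surf,\mc{O}_\surf(\tilde{D})\big) + h^2\big(\surf,\mc{O}_\surf(\tilde{D})\big) = h^0\big(\surf,\mc{O}_\surf(\tilde{D})\big),
\]
and combining the two equalities yields the claim.

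There is essentially no obstacle beyond invoking the earlier results: the entire non-trivial geometric content is carried by Theorem~\ref{thm:shift}, and the remaining ingredients are purely formal. The only thing worth checking is that the hypotheses of Theorem~\ref{thm:shift} are met, which is immediate since $D$ is taken to be effective. The substantive question, and the one that must be verified case by case in applications, is whether a given surface actually satisfies the ``one shift suffices'' hypothesis — this is a geometric property of the configuration of irreducible negative self-intersection curves, not something to be established within the proof of the corollary itself.
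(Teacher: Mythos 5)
Your proposal is correct and follows exactly the route the paper intends: Theorem~\ref{thm:shift} gives $h^0(\surf,\mc{O}_\surf(D))=h^0(\surf,\mc{O}_\surf(\tilde{D}))$, the hypothesis that $\tilde{D}$ is nef together with the assumed vanishing theorem kills $h^1$ and $h^2$ of $\tilde{D}$, and the index theorem then identifies $h^0(\surf,\mc{O}_\surf(\tilde{D}))$ with $\ind(\tilde{D})$. This matches the paper's (implicit) derivation of the corollary as the single-shift specialisation of Corollary~\ref{crl:vanthm_givesindalg}, so there is nothing to add.
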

\noindent
We have already established for both Hirzebruch and del Pezzo surfaces $\dps{n}$ that higher cohomologies vanish in the nef cone. Now we show that the second condition in Corollary~\ref{cor:single} is also satisfied, that is, the master formula~\eqref{DDtilde} maps into the nef cone for both classes of surfaces.

This is quite easy to see for Hirzebruch surfaces $\mbb{F}_n$ since there exists only one irreducible negative self-intersection divisor $C=D_1$ with $C^2=-n$. The master formula removes as many copies of $C$ from $D$ as are detected by the intersection number $D\cdot C$. So, even if $D\cdot C<0$ we have
\be
\shfdiv{D} \cdot C = D \cdot C - \ceil{\frac{ D \cdot C}{C^2}}C^2 \geq 0 \,.
\ee
This means applying the master formula once more to $\tilde{D}$ leaves the divisor unchanged. Hence, from Corollary~\ref{crl:oneshift_indform}, we have the following index formula for the Hirzebruch surfaces.
\begin{crl}
Let $D$ be any effective divisor on the Hirzebruch surface $\mbb{F}_n$. Then
\be
h^0\left(\mbb{F}_n,\mc{O}_{\mbb{F}_n}(D)\right) = \ind\bigg( D - \theta(-D\cdot C) \, \ceil{\frac{D \cdot C}{C^2}}C \bigg) \,,
\ee
where $C$ is the unique irreducible negative self-intersection divisor, for which $C^2 = -n$.
\label{crl:hirz_oneshift_indform}
\end{crl}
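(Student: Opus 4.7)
The plan is to obtain the statement as a direct application of Corollary~\ref{cor:single}, which requires two inputs on $\mbb{F}_n$: (i) a vanishing theorem for higher cohomologies of nef line bundles, and (ii) that the one-step map $D \mapsto \shfdiv{D}$ in Eq.~\eqref{DDtilde} already lands in the nef cone whenever $D$ is effective.

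For (i) I would invoke the Demazure vanishing theorem (Theorem~\ref{demazure}). Because $\mbb{F}_n$ is a smooth compact toric surface, its fan has convex support, so $h^q(\mbb{F}_n, \mc{O}_{\mbb{F}_n}(D')) = 0$ for all $q \geq 1$ and every nef divisor $D'$. This feeds the vanishing hypothesis of Corollary~\ref{cor:single}.

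For (ii), the crucial simplification is that $\mbb{F}_n$ admits a \emph{unique} irreducible negative self-intersection curve, namely $C = D_1$ with $C^2 = -n$, so the sum in Eq.~\eqref{DDtilde} collapses to a single term. Writing $D = k_1 D_1 + k_2 D_2$ with $k_1, k_2 \geq 0$ and recalling from Section~\ref{sec:Hirze} that the Mori cone of $\mbb{F}_n$ is generated by $D_1$ and $D_2$, a divisor is nef iff it intersects both of these generators non-negatively. I would verify the first non-negativity by the identity
\be
\shfdiv{D}\cdot C \;=\; D\cdot C \;-\; \theta(-D\cdot C)\,\ceil{\frac{D\cdot C}{C^2}}\,C^2 \;\geq\; 0,
\ee
which is immediate when $D\cdot C \geq 0$ and, when $D\cdot C < 0$, follows from the defining inequality of the ceiling function combined with $C^2 = -n < 0$ (flipping the sign on multiplication). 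The second non-negativity I would establish by bounding the number $m := \theta(-D\cdot C)\,\ceil{(D\cdot C)/C^2}$ of copies of $C$ subtracted: in the nontrivial case $D\cdot C < 0$ one has $m = \ceil{(nk_1 - k_2)/n}$, and since $k_2 \geq 0$ forces $(nk_1 - k_2)/n \leq k_1$ (an integer), it follows that $m \leq k_1$ and hence $\shfdiv{D}\cdot D_2 = k_1 - m \geq 0$.

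The only mild obstacle is this second intersection check: the master formula is tailored to restore $\shfdiv{D}\cdot C \geq 0$, so one might worry that subtracting multiples of $C = D_1$ could overshoot and push $\shfdiv{D}$ out of the effective cone in the $D_1$-direction, which would also spoil nefness against $D_2$. The ceiling bookkeeping above rules this out. With both hypotheses of Corollary~\ref{cor:single} verified, the stated index formula follows directly.
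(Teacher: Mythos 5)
Your proposal is correct and takes essentially the same route as the paper: Demazure vanishing for the toric surface $\mbb{F}_n$, the observation that a single application of the map~\eqref{DDtilde} already lands in the nef cone (because there is only one irreducible negative self-intersection curve), and then Corollary~\ref{cor:single}. Your explicit verification that $\shfdiv{D}\cdot D_2 \geq 0$ via the bound on the ceiling is a detail the paper leaves implicit (it only checks $\shfdiv{D}\cdot C\geq 0$ and appeals to stabilisation together with Corollary~\ref{crl:algm}), but this is a minor elaboration rather than a different argument.
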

\noindent
This proves the index formula for Hirzebruch surfaces, conjectured from cohomology data in Eq.~\eqref{eq:hirzsurf_indmastform}.\\[2mm]
For del Pezzo surfaces $\dps{n}$ we can also show that the map~\eqref{DDtilde} stabilises after the first step and, hence, that the divisor $\tilde{D}$ is in the nef cone.
However, the proof is slightly more difficult than for Hirzebruch surfaces due to the presence of multiple irreducible divisors with negative self-intersection.  We refer to the accompanying paper \cite{mathpaper} for a proof and simply cite the following result.
\begin{thm}
On a smooth compact complex projective surface containing no irreducible divisors with self-intersection $<-1$, the divisor $\shfdiv{D}$ defined in Eq.~\eqref{DDtilde} is nef for all effective divisors $D$.
\end{thm}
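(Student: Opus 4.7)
The plan is to check $\tilde D \cdot C' \geq 0$ for every irreducible curve $C'$ on $\surf$. Since $C^2 = -1$ for every $C \in {\cal I}$, Eq.~\eqref{DDtilde} simplifies to $\tilde D = D + \sum_{C \in {\cal I},\, D \cdot C < 0}(D \cdot C)\,C$, and a first observation is that $\tilde D$ is effective: in any effective representative $D = \sum a_i F_i$ as a sum of distinct irreducible components, any $C \in {\cal I}$ with $D \cdot C < 0$ must appear among the $F_i$ (else $D \cdot C$ would be a sum of nonnegative terms) with multiplicity $a_C \geq -D \cdot C$, so subtracting $-D \cdot C$ copies leaves the divisor effective.

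For an irreducible $C'$ with $(C')^2 \geq 0$, effectivity already suffices: such $C'$ is nef since distinct irreducibles on a surface intersect nonnegatively and $(C')^2 \geq 0$, hence $\tilde D \cdot C' \geq 0$. For $C' \in {\cal I}$ with $D \cdot C' < 0$, I would first extend Theorem~\ref{thm:twonegints_moricone} to the present setting --- the proof goes through verbatim once one writes $D$ as a nonnegative combination of its irreducible components and uses only $C^2 = (C')^2 = -1$ --- to conclude that any other $C \in {\cal I}$ with $D \cdot C < 0$ satisfies $C \cdot C' = 0$. Then the cross-terms in $\tilde D \cdot C'$ vanish and $\tilde D \cdot C' = D \cdot C' + (D \cdot C')(-1) = 0$.

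The remaining case --- $C' \in {\cal I}$ with $D \cdot C' \geq 0$ --- is the main obstacle, since $C'$ may itself be a component of $D$ and the cross-terms are merely nonpositive. My plan is a direct expansion. Put $S = \{C \in {\cal I} : D \cdot C < 0 \textrm{ and } C \cdot C' \geq 1\}$ (the only curves contributing to $\tilde D \cdot C'$), write the effective representative as $D = a_{C'} C' + \sum_{i \in S} a_i C_i + R$ with $R$ effective and containing none of $C',\, C_i$, and set $m_i = C_i \cdot C' \geq 1$. Substituting $D \cdot C' = -a_{C'} + \sum_j a_j m_j + R \cdot C'$ and $|D \cdot C_i| = a_i - a_{C'} m_i - \sum_{j \neq i} a_j(C_i \cdot C_j) - R \cdot C_i$ into $\tilde D \cdot C' = D \cdot C' - \sum_{i \in S}|D \cdot C_i|\,m_i$, the sums $\sum_i a_i m_i$ cancel and one obtains
\[
\tilde D \cdot C' = a_{C'}\!\left(\sum_{i \in S} m_i^2 - 1\right) + R \cdot C' + \sum_{i \in S}(R \cdot C_i)\,m_i + \sum_{\substack{i, j \in S\\ j \neq i}} a_j\,(C_i \cdot C_j)\,m_i\, .
\]
Every term on the right is nonnegative --- the $R$-terms since $R$ is effective and misses $C',\, C_i$, the double sum since $a_j \geq 0$ and distinct irreducibles intersect nonnegatively, and the first term since $\sum_{i \in S} m_i^2 \geq 1$ whenever $S$ is nonempty (the case $S = \emptyset$ degenerates directly to $\tilde D \cdot C' = D \cdot C' \geq 0$). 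The hardest part is this bookkeeping: one must correctly absorb into $R$ those curves in ${\cal I}$ with $D \cdot C < 0$ but $C \cdot C' = 0$, and keep track of the two sums $\sum_j a_j m_j$ and $\sum_i a_i m_i$ that cancel; once the decomposition is set up cleanly, the sign analysis of each term is mechanical.
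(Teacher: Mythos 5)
Your argument is correct, and it is worth noting that this paper itself does not prove the statement --- it cites the result from the companion paper \cite{mathpaper} --- so yours is a self-contained alternative rather than a rederivation of something printed here. The structure is sound: (i) the observation that every $C\in{\cal I}$ with $D\cdot C<0$ occurs as a component of an effective representative of $D$ with multiplicity at least $-D\cdot C$ correctly gives effectivity of $\shfdiv{D}$ (using $C^2=-1$ so that the ceiling in Eq.~\eqref{DDtilde} is exactly $-D\cdot C$), which settles nefness against all irreducible $C'$ with $(C')^2\geq 0$; (ii) your extension of Theorem~\ref{thm:twonegints_moricone} is legitimate --- the paper's proof only uses that $D$ is a nonnegative combination of curves each meeting $C$ and $C'$ nonnegatively apart from the $-1$ self-intersections, so replacing ``Mori cone generators'' by ``irreducible components of an effective representative'' changes nothing, and this gives $\shfdiv{D}\cdot C'=0$ when $D\cdot C'<0$; (iii) the expansion in the remaining case $C'\in{\cal I}$, $D\cdot C'\geq 0$ is the genuinely new ingredient relative to anything in this paper, and I have checked the bookkeeping: with $(C')^2=C_i^2=-1$, the two sums $\sum_j a_j m_j$ and $\sum_i a_i m_i$ do cancel, and every surviving term ($a_{C'}(\sum_i m_i^2-1)$, the $R$-terms, and the cross-terms $a_j(C_i\cdot C_j)m_i$) is nonnegative for the reasons you give, with the $S=\emptyset$ case trivial. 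The only implicit inputs are standard: distinct irreducible curves on a smooth surface meet nonnegatively, and intersection numbers depend only on classes, so the computation may be done on any effective representative. Compared with the route sketched in the paper's appendix for the cohomology statement (Koszul sequences) and the linear-systems method attributed to \cite{mathpaper}, your proof is purely intersection-theoretic and arguably more elementary; its cost is the careful choice of which components to absorb into $R$, which you handle correctly since the curves in ${\cal I}$ with $D\cdot C<0$ but $C\cdot C'=0$ drop out of $\shfdiv{D}\cdot C'$ anyway.
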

\noindent
It is a well-known statement that the irreducible negative self-intersections divisors $C$ of $\dps{n}$ satisfy $C^2=-1$, so the above theorem applies to del Pezzo surfaces. Together with Corollary~\ref{crl:oneshift_indform} this leads to the following index formula for del Pezzo surfaces $\dps{n}$.
\begin{crl}
Let $\dps{n}$ be the del Pezzo surface constructed by blowing up $n$ generic points of the complex projective plane $\mbb{P}^2$, where $n\in\{0,1,\ldots ,8\}$, and let $D$ be any effective divisor. Then
\be
h^0\left(\dps{n},\mc{O}_{\dps{n}}(D)\right) = \ind\bigg( D + \sum_{C\in{\cal I}} \theta( - D \cdot C ) \, (D \cdot C)C \bigg) \,,
\ee
where ${\cal I}$ is the set of irreducible negative self-intersection divisors.
\label{crl:dpn_oneshift_indform}
\end{crl}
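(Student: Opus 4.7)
The plan is to apply Corollary~\ref{cor:single} directly, after verifying its two hypotheses for $\dps{n}$ with $n\in\{0,1,\ldots,8\}$, and then to simplify the resulting expression using the fact that on $\dps{n}$ every irreducible curve with negative self-intersection is a $(-1)$-curve. So the content of the proof is really a combination of Corollaries~\ref{crldPvan} and \ref{cor:single} together with the theorem stated immediately before Corollary~\ref{crl:dpn_oneshift_indform}.

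First I would check the vanishing hypothesis of Corollary~\ref{cor:single}, using Corollary~\ref{crldPvan}. This requires showing $-K_{\dps{n}}$ is nef and big. Bigness is immediate from Eq.~\eqref{KdP}: direct computation gives $(-K_{\dps{n}})^2 = 9-n > 0$ for $n \leq 8$. For nefness, it suffices to verify $-K_{\dps{n}}\cdot C\geq 0$ on every Mori cone generator $C\in\hat{\moricn}$ listed in Appendix~\ref{app:dp_surf}; since each such $C$ is an irreducible rational curve with $C^2=-1$, the adjunction formula
\[
 2g-2 = C\cdot(C+K_{\dps{n}})
\]
with $g=0$ gives $-K_{\dps{n}}\cdot C = C^2+2 = 1 \geq 0$. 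Thus $-K_{\dps{n}}$ is nef and big, and Corollary~\ref{crldPvan} provides the required vanishing in the nef cone.

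Second, I would invoke the theorem immediately preceding the corollary to check the one-step stabilisation hypothesis. That theorem asserts that if every irreducible negative self-intersection divisor satisfies $C^2\geq -1$, then $\shfdiv{D}$ is already nef. On $\dps{n}$ it is standard (and used throughout Section~\ref{sec:dp_fromcomp_toind}, e.g. in the proof of Theorem~\ref{thm:twonegints_moricone}) that the set $\mathcal{I}$ of irreducible negative self-intersection divisors consists precisely of $(-1)$-curves, so $C^2=-1$ for all $C\in\mathcal{I}$ and the hypothesis is satisfied. Corollary~\ref{cor:single} then yields
\[
 h^0\!\left(\dps{n},\mc{O}_{\dps{n}}(D)\right) = \ind\Bigl( D - \sum_{C\in\mathcal{I}} \theta(-D\cdot C)\,\ceil{\tfrac{D\cdot C}{C^2}}\,C\Bigr).
\]

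Finally, I would simplify the shift using $C^2=-1$. Since $D\cdot C\in\mathbb{Z}$, one has $\ceil{(D\cdot C)/C^2} = \ceil{-(D\cdot C)} = -(D\cdot C)$, so each contributing term becomes
\[
 -\theta(-D\cdot C)\,\ceil{\tfrac{D\cdot C}{C^2}}\,C \;=\; \theta(-D\cdot C)\,(D\cdot C)\,C,
\]
which matches the shift appearing in the statement of Corollary~\ref{crl:dpn_oneshift_indform}. The only non-routine input in the whole argument is the one-step nefness of $\shfdiv{D}$, whose proof is deferred to the companion paper~\cite{mathpaper}; given that black box, the potential obstacle is simply making sure that the nefness of $-K_{\dps{n}}$ is justified without circularity. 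I would handle this by working directly with the Mori cone generators as above, rather than appealing to the classical fact that $-K_{\dps{n}}$ is ample, so that the argument remains internal to what has already been set up in Sections~\ref{sec:dPprop} and \ref{sec:dp_fromcomp_toind}.
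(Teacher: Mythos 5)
Your proposal is correct and follows essentially the same route as the paper: Corollary~\ref{crldPvan} (nef and big $-K_{\dps{n}}$) supplies the vanishing in the nef cone, the quoted theorem on surfaces with no irreducible curves of self-intersection $<-1$ supplies one-step nefness of $\shfdiv{D}$, and Corollary~\ref{cor:single} together with $\ceil{(D\cdot C)/C^2}=-(D\cdot C)$ for $C^2=-1$ yields the stated formula. The only small caveat is that your adjunction check of nefness assumes every Mori cone generator is a $(-1)$-curve, which fails for $n=0,1$ (e.g.\ $\hcl$, or $\hcl-\ecl_1$, have non-negative self-intersection), but in those cases $-K_{\dps{n}}\cdot C\geq 0$ is immediate from Eq.~\eqref{KdP}.
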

\noindent This proves the result we have conjectured in Eq.~\eqref{eq:dpsurf_indmastform}. Note that the ceiling function present in the master formula~\eqref{DDtilde} disappears for del Pezzo surfaces since $C^2=-1$ for all irreducible, negative self-intersection curves $C$.


\section{Example applications}\label{sec:app}
In Section~\ref{sec:dphirz_mastform} we have described the path from cohomology data to index formulae in some detail and in the previous section we have provided proofs for these index formulae. Here, we would like to take the mathematical results of the previous section for granted and illustrate how to derive cohomology formulae for specific surfaces from these general results. The examples we discuss are the Hirzebruch surface $\mbb{F}_2$ and the del Pezzo surface $\dps{2}$.


\subsection{The Hirzebruch surface $\mbb{F}_2$}
Much of the relevant information for Hirzebruch surfaces has already been summarised in Section~\ref{sec:Hirze}. However, all Hirzebruch surfaces $\mbb{F}_n$ are toric 
and therefore provide a good illustration for how to extract the relevant information from toric data. We will briefly discuss how to do this, based on the information provided in Appendix~\ref{app:hirz_surf}.

First recall the following basic facts from toric geometry. The Picard lattice of a toric surface is spanned by integer combinations of the toric divisors, which correspond to rays of the toric diagram. Equivalence relations between toric divisors follow from the weight system and choosing representatives for the equivalence classes leads to a basis for the Picard lattice. For $\mathbb{F}_2$ we choose a basis $(D_1,D_2)$ of the Picard lattice as explained in Appendix~\ref{app:hirz_surf} and in line with our conventions in Section~\ref{sec:Hirze}. A general divisor is written as $D=k_1D_1+k_2D_2$, so is labelled by a two-dimensional integer vector ${\bf k}=(k_1,k_2)$, and the Mori cone, ${\moricn}(\mbb{F}_2)$, consists of all divisors with $k_1\geq 0$ and $k_2\geq 0$. The intersections between toric divisors are easily extracted: distinct toric divisors intersect if they correspond to neighbouring rays in the diagram, and self-intersections are determined by equivalence relations between toric divisors. From Appendix~\ref{app:hirz_surf}, this leads to the $\mbb{F}_2$ intersection rules
\begin{equation}
 D_1^2=-2\,,\quad D_1\cdot D_2=1\,,\quad D_2^2=0\, ,
\end{equation}
in line with Eq.~\eqref{Hirzeisec}.  From Eq.~\eqref{nefFn}, the generators of the nef cone ${\nefcn}(\mbb{F}_2)$ are given by 
\begin{equation}
\hat{\nefcn}(\mbb{F}_2)=\{D_2,D_1+2D_2\}\, .
\end{equation}
It is straightforward to verify that the irreducible negative self-intersection divisors are precisely the toric divisors with negative self-intersection. For Hirzebruch surfaces there is precisely one such divisor, namely $C=D_1$. This is all the information we need to evaluate the index formula~\eqref{crl:oneshift_indform}.\\[2mm]
We refer to the region of non-effective divisors as Region 3, so
\begin{equation}
 D\mbox{ not effective}\quad\Leftrightarrow\quad k_1<0\mbox{ or }k_2<0\quad \mbox{(Region 3)}\, .
\end{equation} 
Obviously, in this region we have $h^0(\mbb{F}_2,\mc{O}_{\mbb{F}_2}({\bf k}))=0$. The Mori cone splits into precisely two regions, since there is only one irreducible negative self-intersection divisor. These two regions are given by
\begin{eqnarray} 
 D\mbox{ effective and }D\cdot C\geq 0&\Leftrightarrow&k_1\geq 0,\,k_2\geq 0,\,-2k_1+k_2\geq 0\quad\mbox{(Region 1)}\\
 D\mbox{ effective and } D\cdot C< 0&\Leftrightarrow&k_1\geq 0,\, k_2\geq 0,\,-2k_1+k_2<0\quad\mbox{(Region 2)}\, .
\end{eqnarray} 
Region 1 is the nef cone. The master formula~\eqref{DDtilde} in this region is simply the identity map and, hence, the zeroth cohomology dimension is given by the index. In Region 2, the master formula specialises to
\be
\shfdiv{D} = D - \ceil{\frac{D \cdot C}{C^2}}C = \left( k_1- \ceil{\frac{nk_1-k_2}{n}}\right)D_1 + k_2 D_2 \,. \label{F2shift}
\ee
and the zeroth cohomology of $D$ is given by the index of this shifted divisor $\tilde{D}$. As illustrated in Fig.~\ref{fig:f2_piclat}, Eq.~\eqref{F2shift} corresponds to a shift from Region 2 into the nef cone (Region 1).
\begin{figure}[h]
  \includegraphics[scale=.45]{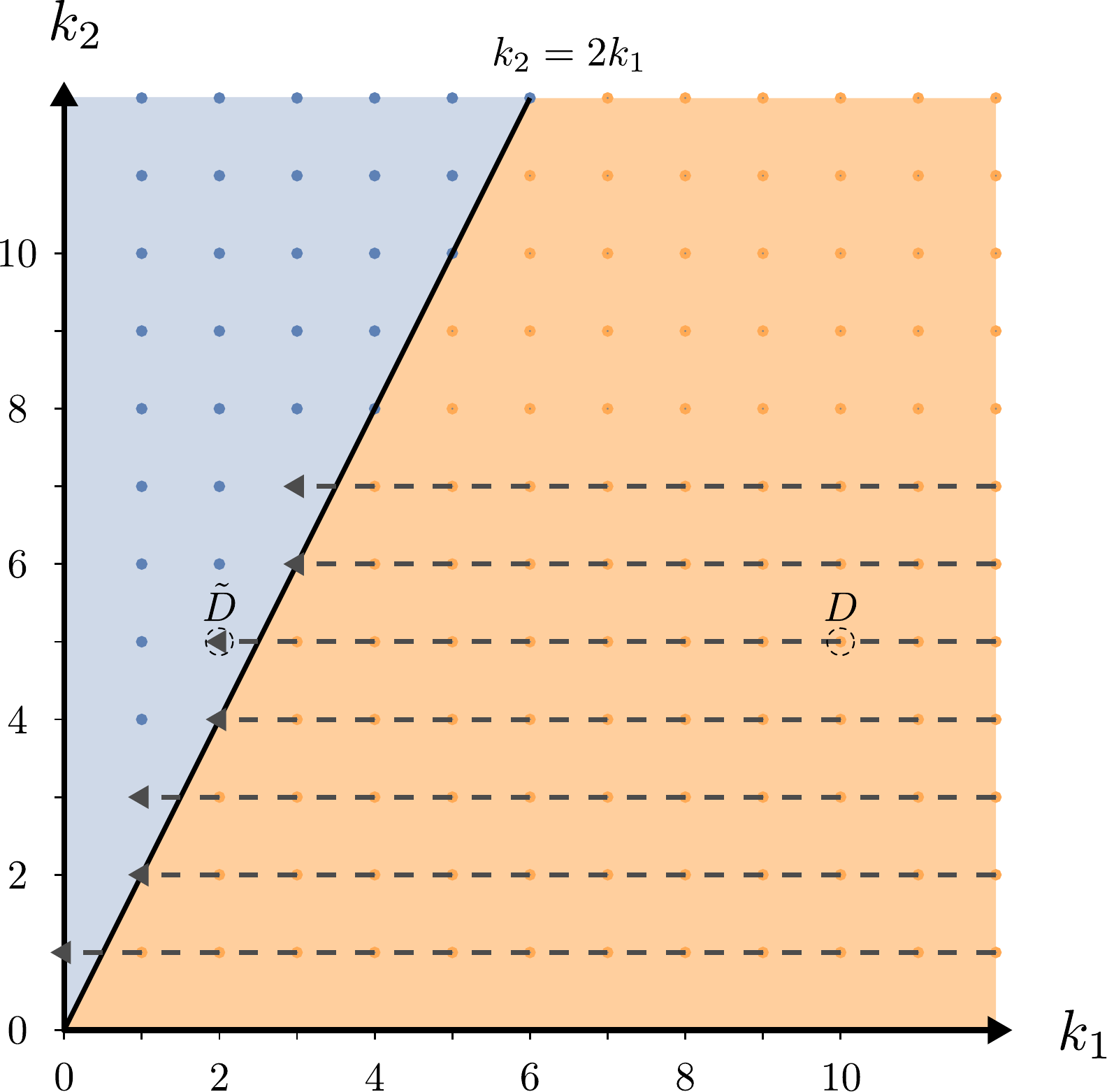}
  \caption{\sf The Picard lattice of the Hirzebruch surface $\mbb{F}_2$. The positive quadrant is the Mori cone - outside of it the zeroth cohomology vanishes. The nef cone is the blue area. We show the shifts $D \to \shfdiv{D}$ that map into the nef cone. Note every point on a dotted line maps to the same final divisor.}
  \label{fig:f2_piclat}
\end{figure}
Combining this information with the explicit expression~\eqref{indFn} for the index then leads to the formula
\be
\begin{aligned}
h^0\left(\mbb{F}_n,\mc{O}_{\mathbb{F}_n}({\bf k})\right) &= \left.
  \begin{cases}
   \displaystyle{{\rm ind}(\mc{O}_{\mbb{F}_2}({\bf k}))}
    & \textrm{(Region 1)}\\
   \displaystyle{{\rm ind}(\mc{O}_{\mbb{F}_2}({\bf k}))+\frac{1}{2}nc-\frac{1}{2}nc^2-ck_2+nck_1-c}
    & \textrm{(Region 2)}\\
    0& \textrm{(Region 3)}
   \end{cases}
  \right. \\
c&= \ceil{\frac{nk_1-k_2}{n}}\, .
\end{aligned}
\ee
This formula is consistent with the earlier one, Eq.~\eqref{h0Hirze}, which we have extracted from cohomology data.


\subsection{The del Pezzo surface $\dps{2}$}

On del Pezzo surfaces $\dps{n}$, the Mori cone, irreducible negative self-intersection divisors, and the intersection form are well-known and we have collected these in Section~\ref{sec:dPprop} and in Appendix~\ref{app:dp_surf}. Let us recall what these look like for the case of $\dps{2}$. The rank of the $\dps{2}$ Picard lattice is three and a basis of divisors is given by
\be
D_0 = \hcl \,, \quad D_1 = \ecl_1 \,, \quad D_2 = \ecl_2 \,,
\ee
where $\hcl$ is the hyperplane class of the underlying complex projective plane $\mbb{P}^2$, and $\ecl_1$ and $\ecl_2$ are the exceptional classes of the two blow-ups. The intersection form is fixed by the relations
\be
l^2=1\,,\quad l\cdot \ecl_i=0\,,\quad \ecl_i\cdot \ecl_j=-\delta_{ij}\, .
\ee
A general divisor is written as $D=k_0D_0+k_1D_1+k_2D_2$ and is, hence, labelled by a three-dimensional integer vector ${\bf k}=(k_0,k_1,k_2)$. The Mori and nef cone generators are
\begin{eqnarray}
 \hat{\moricn}(\dps{2})&=&\{C_1=\ecl_1,\,C_2=\ecl_2,\,C_3=l-\ecl_1-\ecl_2\}\\
 \hat{\nefcn}(\dps{2})&=&\{N_1=l-\ecl_1,\,N_2=l-\ecl_2,\,N_3=l\}
\end{eqnarray} 
The irreducible negative self-intersection divisors are the exceptional curves which are precisely the Mori cone generators, $C\in\hat{\moricn}(\dps{2})$. More explicitly, the Mori cone is characterised by the equations
\begin{equation}
\left.\begin{array}{l}D\cdot N_1\geq 0\\D\cdot N_2\geq 0\\D\cdot N_3\geq 0\end{array}\right\}\quad\Leftrightarrow\quad
\left\{\begin{array}{l}k_0+k_1\geq 0\\k_0+k_2\geq 0\\k_0\geq 0\end{array}\right.
\end{equation}
and line bundles $\mc{O}_{\dps{2}}({\bf k})$ outside this cone have vanishing zeroth cohomology.\\[2mm]
As for Hirzebruch surfaces, the master formula~\eqref{DDtilde} maps into the nef cone, without the need for iteration, and, from Corollary~\ref{crl:dpn_oneshift_indform}, the resulting index formula for effective divisors $D$ is
\be
h^0\left(\dps{n},\mc{O}_{\dps{n}}(D)\right) = \ind\bigg( D + \sum_{i=1}^3 \theta( - D \cdot C_i ) \, (D \cdot C_i)C_i \bigg) \, .
\label{eq:dpn_indform}
\ee
The regions within the Mori cone are separated off from one another by hyperplanes which can be read off from the arguments of the three theta functions in Eq.~\eqref{eq:dpn_indform}. These hyperplanes are
\begin{equation}\left.
\begin{array}{l}
 D\cdot C_1=0\\D\cdot C_2=0\\D\cdot C_3=0
\end{array}\right\}\quad\Leftrightarrow\quad
\left\{
\begin{array}{l}
k_1=0\\k_2=0\\k_0+k_1+k_2=0
\end{array} \right.
\end{equation}
In principle, these three hyperplanes define eight regions but only the following five intersect the Mori cone.
\begin{equation}\label{dP2regions2}
(D\cdot C_1,D\cdot C_2,D\cdot C_3)\,\left\{
\begin{array}{l}
 (\geq,\geq,\geq)\\
 (\geq,\geq,<)\\
 (<,\geq,\geq)\\
 (\geq,<,\geq)\\
 (<,<,\geq)
\end{array} \right\}\, 0\quad
\begin{array}{l}
\mbox{(Region 1)}\\
\mbox{(Region 2)}\\
\mbox{(Region 3)}\\
\mbox{(Region 4)}\\
\mbox{(Region 5)}
\end{array}
\end{equation} 
Inserting the $C_i$ into the master formula~\eqref{DDtilde}, we find the explicit expression for the divisor shifts from those five regions into the nef cone (Region 1).
\begin{equation}\label{dP2shifts}
 \tilde{D}=\left\{
 \begin{array}{l}
  D\\D+(D\cdot C_3)C_3\\D+(D\cdot C_1)C_1\\D+(D\cdot C_2)C_2\\D+\sum_{i=1}^2(D\cdot C_i)C_i
 \end{array}\right\}\,\leftrightarrow\,\left\{
\begin{array}{l}
 (k_0,k_1,k_2)\\
 (2k_0+k_1+k_2,-k_0-k_2,-k_0-k_1)\\
 (k_0,0,k_2)\\
 (k_0,k_1,0)\\
 (k_0,0,0)
\end{array} \right\}
\begin{array}{l}
\mbox{(Region 1)}\\
\mbox{(Region 2)}\\
\mbox{(Region 3)}\\
\mbox{(Region 4)}\\
\mbox{(Region 5)}
\end{array}
\end{equation} 
Here, the second bracket lists the coordinate vectors of the shifted divisors $\tilde{D}$ relative to the basis $(l,\ecl_1,\ecl_2)$.  The five regions together with the coordinate shifts are indicated in Fig.~\ref{dp2_piclat_numb_shifts}. The number of terms required for each shift equals the number of hyperplanes which have to be crossed to reach the nef cone.  Also note that the shifts map precisely to the boundaries of the nef cone.
\begin{figure}[!h]
  \includegraphics[scale=0.7]{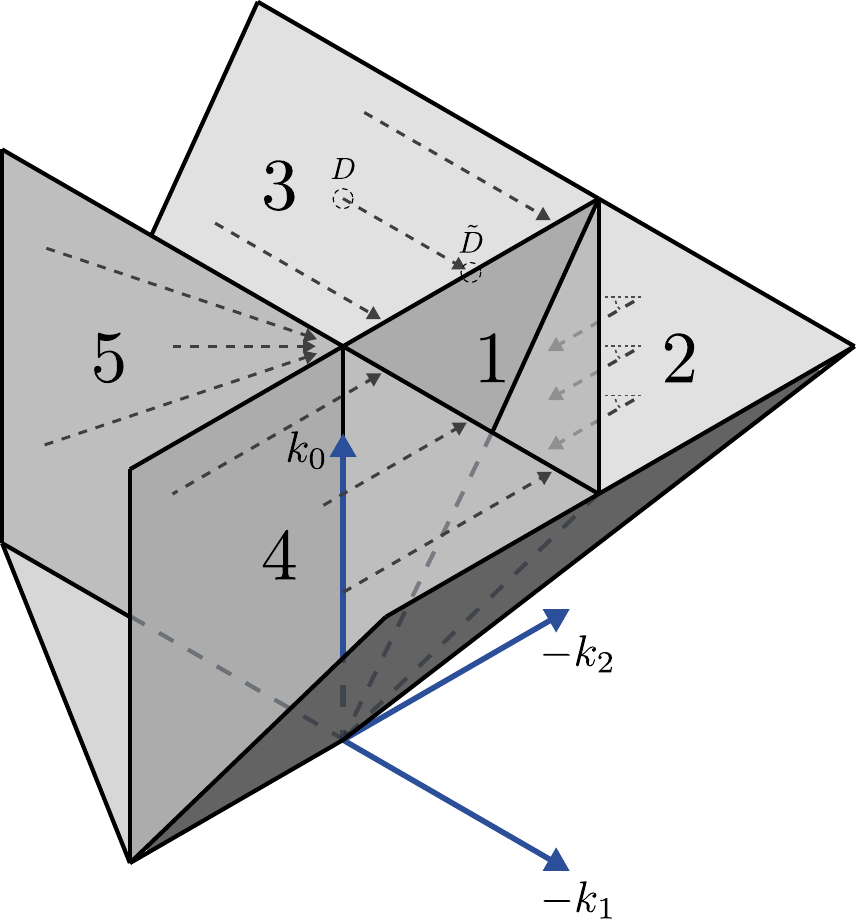}
  \caption{\sf The Picard lattice of the del Pezzo surface $\dps{2}$. We do not draw the lattice points to avoid clutter. The five numbered cones are the regions defined in Eq.~\eqref{dP2regions2} which, together, make up the Mori cone. Region 1 is the nef cone. In each region a different quadratic polynomial describes the zeroth cohomology.  The arrows indicate the divisor shifts~\eqref{dP2shifts} into the nef cone.}
  \label{dp2_piclat_numb_shifts}
\end{figure}
Inserting these divisor shifts into the expression~\eqref{inddP} for the index finally gives the desired piecewise quadratic formula for the zeroth cohomology dimension. \\
\be
h^0\left(\dps{2},\mc{O}_{\dps{2}}({\bf k})\right) = \left.
  \begin{cases}
    \displaystyle{1+\frac{3}{2}k_0+\frac{1}{2}k_0^2+\frac{1}{2}k_1-\frac{1}{2}k_1^2+\frac{1}{2}k_2-\frac{1}{2}k_2^2}
    &\mbox{(Region 1)}  \vspace{1mm}\\
    \displaystyle{1+2k_0+k_0^2+k_1+k_0k_1+k_2+k_0k_2+k_1k_2 \phantom{\displaystyle{\frac{1}{2}}}}
    &\mbox{(Region 2)}  \vspace{1mm}\\
    \displaystyle{1+\frac{3}{2}k_0+\frac{1}{2}k_0^2+\frac{1}{2}k_2-\frac{1}{2}k_2^2}
    &\mbox{(Region 3)}  \vspace{1mm}\\
    \displaystyle{1+\frac{3}{2}k_0+\frac{1}{2}k_0^2+\frac{1}{2}k_1-\frac{1}{2}k_1^2}
    &\mbox{(Region 4)}  \vspace{1mm}\\
    \displaystyle{1+\frac{3}{2}k_0+\frac{1}{2}k_0^2}
    &\mbox{(Region 5)}
  \end{cases}
  \right.\, .
\ee
This is consistent with our earlier empirical result~\eqref{eq:dp2_polys} which was extracted from cohomology data.


\section{Conclusion}\label{sec:con}
In this paper, we have analysed line bundle cohomology on complex surfaces, focusing on the examples of Hirzebruch and del Pezzo surfaces. Starting from algorithmically computed cohomology data, we have shown how to conjecture piecewise quadratic formulae for the zeroth cohomology dimension on these manifolds. Through a process of gradual re-writing, we have brought these equations into a suggestive form where the cohomology dimension is expressed in terms of an index.

Based on these examples, we have conjectured and proved~\cite{mathpaper} that a certain shift $D\rightarrow \tilde{D}$ of effective divisors, explicitly defined in Eq.~\eqref{DDtilde}, leaves the zeroth cohomology dimension unchanged, so that $h^0(\surf,\mc{O}_\surf(D))=h^0(\surf,\mc{O}_\surf(\tilde{D}))$. Repeated application of this shift ends up with a divisor $\underline{\tilde{D}}$ in the nef cone and provided a suitable vanishing theorem exists, the zeroth cohomology dimension can, remarkably, be written in terms of the index as
\begin{equation}
 h^0(\surf,\mc{O}_\surf(D))={\rm ind}(\mc{O}_\surf(\underline{\tilde D}))\, .
\end{equation}
It turns out that a suitable vanishing theorem exists for many classes of surfaces, including Hirzebruch and del Pezzo surfaces as well as compact toric surfaces. Moreover, for Hirzebruch and del Pezzo surfaces, the nef cone is already reached after a single shift, as given in Eqs.~\eqref{shiftFn} and \eqref{shiftdP}, and this explains the explicit index formulae for these manifolds.
\vspace{8pt}

Clearly, there are many further questions suggested by our results. In the context of complex surfaces it is desirable to better understand the interplay between the divisor shift and the required vanishing theorem. Specifically, can we identify  further classes of surfaces for which index formulae can be derived in this way? The arguments in Appendix~\ref{app:prfsketch} show that more general shifts than Eq.~\eqref{DDtilde} are possible. Could such modifications of the shift formula be relevant for other classes of surfaces? For surfaces, we know that higher cohomology dimensions can be expressed in terms of the zeroth cohomology, using Serre duality and the index theorem. But it is not clear if higher cohomologies can also be written in terms of a single index formula, similar to Eq.~\eqref{crl:oneshift_indform}.

We have already noted the quasi-topological nature of the cohomology formulae: the quantity ${\rm ind}(\mc{O}_\surf(\underline{\tilde D}$)) is purely topological, while the map $D\rightarrow \tilde{D}$ depends in general on the complex structure. The complex structure dependence of the map $D\rightarrow \tilde{D}$ may greatly facilitate the study of jumping loci for cohomology (see Ref.~\cite{Anderson:2011ty} for a discussion of jumping loci in complex structure moduli stabilisation; in string model building jumping loci frequently enter the construction of Higgs fields). 

Looking further ahead, what is the situation for three-folds or complex manifolds of even higher dimension? Complex surfaces are special in that curves and divisors have the same dimension, so the results of this paper cannot be straightforwardly extended to higher-dimensional cases. On the other hand, Refs.~\cite{Constantin:2018hvl,Larfors:2019sie} show that piecewise cubic formulae for line bundle cohomology dimensions can be written down for Calabi-Yau three-folds. This strongly suggests an underlying mathematical structure, although its nature and formulation is not known at present. These questions are currently under investigation.

\section*{Acknowledgments}
We thank Fabian Ruehle for useful discussions. C.R.B. is supported by an STFC studentship and R.D.~by a Skynner Fellowship of Balliol College, Oxford.
%


%
%

\appendix


\section{Hirzebruch surfaces}
\label{app:hirz_surf}

The Hirzebruch surfaces are complex manifolds which correspond to the various ways of fibering $\mbb{P}^1$ over another $\mbb{P}^1$ base. They are denoted by $\mbb{F}_n$ and are indexed by an integer $n\geq0$ that corresponds to the number of times the fibre $\mbb{P}^1$ `twists' as one moves along the base\footnote{Up to real diffeomorphisms there are only two ways to fibre a sphere over a sphere. However up to complex diffeomorphisms there are infinitely many. The even Hirzebruch surfaces, $\mbb{F}_{2n}$, are isomorphic as real manifolds to the trivial fibre bundle, while the odd cases, $\mbb{F}_{2n+1}$, are isomorphic as real manifolds to the unique non-trivial fibre bundle.}. The Hirzebruch surfaces with small $n$ are isomorphic to other well-known spaces.  Specifically, we have $\mbb{F}_0 \iso \mbb{P}^1 \times \mbb{P}^1$ and $\mbb{F}_1 \iso \dps{1}$ where $\dps{1}$ is the del Pezzo surface obtained by blowing up $\mbb{P}^2$ at a single point. Hirzebruch surfaces can be constructed using toric geometry or complete intersections in products of projective spaces and we discuss these two possibilities in turn. 

\smlhdg{Toric representations}\\
\noindent
The toric diagram for $\mbb{F}_n$ together with its weight system is shown in Figure~\ref{fig:fn_tordiag}. The weight system, that is, the set of charges of the toric coordinates $z_1,\ldots,z_4$ under the two toric scalings, follows from linear equivalences of the toric rays.

\begin{figure}[h]
\begin{minipage}{0.30\linewidth}
\begin{center}
\includegraphics[scale=1.5]{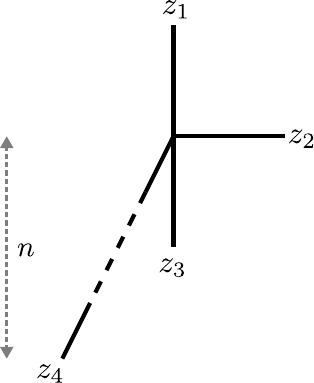}
\end{center}
\end{minipage}
\begin{minipage}{0.30\linewidth}
\begin{center}
Weight system: \vspace{.2cm} \\
\begin{tabular}{c c c c}
$\tc_1$ & $\tc_2$ & $\tc_3$ & $\tc_4$ \\ \hline
1 & 0 & 1 & 0 \\
n & 1 & 0 & 1
\end{tabular} \,,
\end{center}
\end{minipage}
\caption{Toric diagram and derived weight system for the Hirzebruch surface $\mbb{F}_n$. All distances in the toric ray diagram are 1 except the vertical value of $z_4$ which is $-n$.}
\label{fig:fn_tordiag}
\end{figure}

The weight system implies the following divisor equivalences,
\be
[\tc_2] = [\tc_4] \,, \quad [\tc_1] = [\tc_3] + n[\tc_4] \,.
\ee
As a basis of the Picard lattice, we can choose the two divisors $D_1 = [\tc_3]$ and $D_2 = [\tc_4]$. The anti-canonical divisor $-K_{\mbb{F}_n}$ is simply the sum of all toric divisors, and so in this basis it is given by
\be
-K_{\mbb{F}_n} = 2D_1 +(n+2)D_2 \,.
\ee
The intersection form can also be read off from the toric diagram: distinct toric divisors intersect once if they correspond to neighbouring rays and do not intersect otherwise, and self-intersections follow from divisor equivalences. The leads to the intersections
\be
 D_1^2=-n\,,\quad D_1\cdot D_2=1\,,\quad D_2^2=0\, .
\ee
The Mori cone is given by the non-negative linear combinations of the toric divisors and, hence, the Mori cone generators are $\hat{\moricn}(\mbb{F}_n)=\{D_1,D_2\}$. From this, the generators of the dual nef cone can be determined as $\hat{\nefcn}(\mbb{F}_n)=\{D_2,D_1+nD_2\}$.

\smlhdg{Complete intersection representations}\\
\noindent
The Hirzebruch surfaces also have complete intersection representations. A complete intersection can be specified by a configuration matrix, where the entries in the first column are the projective spaces whose product forms the ambient space, and each further column specifies the multi-degree of one of the defining equations. The Hirzebruch surface $\mbb{F}_n$ can be represented as a hypersurface in $\mbb{P}^1 \times \mbb{P}^2$, defined by the zero locus of a polynomial with bi-degree $(n,1)$. Hence, the corresponding configuration matrix is
\be
\mbb{F}_n \in
\left[
\begin{array}{c | c}
\mathbb{P}^1 & n \\
\mathbb{P}^2 & 1
\end{array}
\right] \,. \quad
\ee


\section{Del Pezzo surfaces}
\label{app:dp_surf}

Del Pezzo surfaces are defined to be smooth compact complex projective surfaces with ample anti-canonical bundle; these are the two-dimensional Fano varieties. A del Pezzo surface is either isomorphic to the product of complex projective lines $\mbb{P}^1 \times \mbb{P}^1$ or it is isomorphic to the blow-up of the complex projective plane $\mbb{P}^2$ at up to eight points in general position. Conversely, any blow-up of $\mbb{P}^2$ at eight points in general position is a del Pezzo surface. The condition that the points are in general position is equivalent to ampleness of the anti-canonical divisor class of the resulting space \cite[Th\'{e}or\`{e}me~1]{Demazure1976}.\\[2mm]
The subset of the del Pezzo surfaces that are of primary interest in this paper are those constructed by blowing up the complex projective plane $\mbb{P}^2$ at $n$ points in general position, where $n=0,\ldots ,8$. We denote these surfaces by $\dps{n}$ throughout\footnote{Sometimes the del Pezzo surfaces are numbered by their degree. The degree $\surfdegr$ of a del Pezzo surface, which is defined to be the self-intersection number of the anti-canonical divisor class, is at least 1 and at most 9. For a del Pezzo surface isomorphic to the blow-up of $\mbb{P}^2$ at $n$ points in general position, the degree is $\surfdegr = 9 - n$. When the del Pezzo surface is isomorphic to $\mbb{P}^1 \times \mbb{P}^1$ the degree is $\surfdegr=8$.}. Below we will collect relevant information about the  surfaces $\dps{n}$; note that the analogous properties for $\mbb{P}^1 \times \mbb{P}^1$ would be trivial. In particular, after recalling some basic methods to construct del Pezzo surfaces, we will list the generators of the Mori and nef cones, which we use in the main text.

\smlhdg{Picard lattice and anti-canonical divisor}\\
\noindent
There is a natural choice for a basis of the $\dps{n}$ Picard lattice which consists of the hyperplane class $\hcl$ of the underlying projective plane $\mbb{P}^2$ and the exceptional divisor class $\ecl_i$, where $i=1,\ldots ,n$, which correspond to the blow-ups. Relative to the basis $\hcl,\ecl_1,\ldots ,\ecl_n)$, the intersection form is fixed by the relations
\begin{equation}
 l^2=1\,,\quad l\cdot \ecl_i=0\,,\quad \ecl_i\cdot \ecl_j=-\delta_{ij}\, .
\end{equation} 
The anti-canonical divisor is given by
\be
-K_{\dps{n}} = 3\hcl - \sum_{i=1}^n \ecl_i \, .
\ee

\smlhdg{Toric representations}\\
\noindent
There are toric representations for a subset of the del Pezzo surfaces, namely for the spaces isomorphic to $\mbb{P}^1 \times \mbb{P}^1$ and the spaces isomorphic to $\dps{n}$, where $n=0,\ldots ,3$. In Figure~\ref{fig:dp_toricdiagsandweights} we recall the toric ray diagrams and the associated weight systems for the three non-trivial cases $\dps{1}$, $\dps{2}$, and $\dps{3}$. The weight system, that is, the set of charges of the toric coordinates under the toric scalings, follows from linear equivalences of the toric rays. 
\begin{figure}[h]
  \includegraphics[scale=1.5]{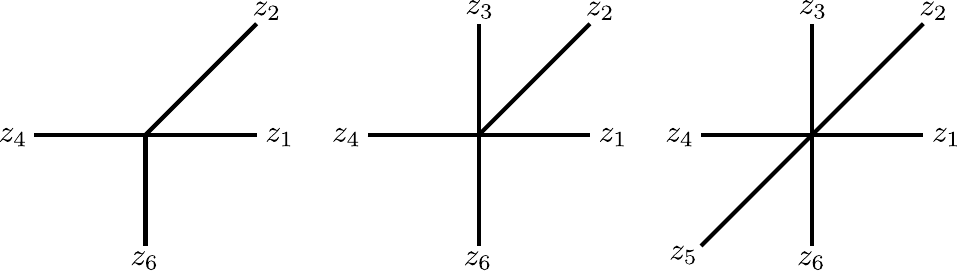}
\begin{minipage}{0.30\linewidth}
\begin{center}
\phantom{1}
Toric $\dps{1}$ diagram, \\
with weight system:
\begin{tabular} { c c c c c c }
$z_1$ 	& $z_2$ 	& \phantom{$z_3$}      	& $z_4$ 	& \phantom{$z_5$}       	& $z_6$ 	\\ \hline
0 		& 1 		&   					& 1 		& 					& 1		\\
1		& 0 		&   					& 1 		& 					& 0		\\
		&		&					&		&					&		\\
		&		&					&		&					&		\\
\end{tabular}
\end{center}
\end{minipage}
\begin{minipage}{0.30\linewidth}
\begin{center}
\phantom{1}
Toric $\dps{2}$ diagram, \\
with weight system:
\begin{tabular} { c c c c c c }
$z_1$ 	& $z_2$ 	& $z_3$   	& $z_4$ 	&  \phantom{$z_5$}     	& $z_6$ 	\\ \hline
0 		& 1 		&  0 		& 1 		& 					& 1		\\
1		& 0 		&  0		& 1 		& 					& 0		\\
0		& 0 		&  1		& 0 		& 					& 1		\\
		&		&		&		&					&		\\
\end{tabular}
\end{center}
\end{minipage}
\begin{minipage}{0.30\linewidth}
\begin{center}
\phantom{1}
Toric $\dps{3}$ diagram, \\
with weight system:
\begin{tabular} { c c c c c c }
$z_1$ 	& $z_2$ 	& $z_3$   	& $z_4$ 	& $z_5$   	& $z_6$ 	\\ \hline
0 		& 1 		& 0 		& 1 		& 0		& 1		\\
1		& 0 		& 0		& 1 		& 0		& 0		\\
0		& 0 		& 1		& 0 		& 0		& 1		\\
0		& 1		& 0		& 0		& 1		& 0		\\
\end{tabular}
\end{center}
\end{minipage}
\caption{Toric diagrams and weight systems for $\dps{1}$, $\dps{2}$, and $\dps{3}$, which are the blow-ups of the projective plane $\mbb{P}^2$ at respectively 1, 2, and 3 points in general position.}
\label{fig:dp_toricdiagsandweights}
\end{figure}
We also note the following relations between the toric divisors and the basis $(\hcl,\ecl_1,\ldots ,\ecl_n)$.
\be\nonumber
\begin{aligned}
\dps{1}: & \quad [\tc_1] = \ecl_1 \,, && [\tc_2] = \hcl-\ecl_1 \,, && && [\tc_4] = \hcl \,, && && [\tc_6] = \hcl-\ecl_1 \,, \\
\dps{2}: & \quad [\tc_1] = \ecl_1 \,, && [\tc_2] = \hcl-\ecl_1-\ecl_2 \,, && [\tc_3] = \ecl_2 \,, && [\tc_4] = \hcl-\ecl_2 \,, && && [\tc_6] = \hcl-\ecl_1 \,, \\
\dps{3}: & \quad [\tc_1] = \ecl_1 \,, && [\tc_2] = \hcl-\ecl_1-\ecl_2 \,, && [\tc_3] = \ecl_2 \,, && [\tc_4] = \hcl-\ecl_2-\ecl_3 \,, && [\tc_5] = \ecl_3 \,, && [\tc_6] = \hcl-\ecl_1-\ecl_3 \,. \\
\end{aligned}
\ee

\smlhdgnogap{Complete intersection representations}\\
\noindent
Del Pezzo surfaces can also be represented as complete intersections in products of projective spaces. Below we list a set of possible configuration matrices, where the entries in the first column specify the projective spaces whose product forms the ambient space, and each other column specifies the multi-degree of one of the defining equations.
\be
\begin{array}{llllllll}
\dps{1}\in&
\left[
\begin{array}{c | c}
\mathbb{P}^2 & 1 \\
\mathbb{P}^1 & 1
\end{array}
\right] &
\dps{2}\in &
\left[
\begin{array}{c | c c}
\mathbb{P}^2 & 1 & 1 \\
\mathbb{P}^1 & 1 & 0 \\
\mathbb{P}^1 & 0 & 1
\end{array}
\right] &
\dps{3}\in &
\left[
\begin{array}{c | c}
\mathbb{P}^1 & 1 \\
\mathbb{P}^1 & 1 \\
\mathbb{P}^1 & 1 
\end{array}
\right] &
\dps{4}\in &
\left[
\begin{array}{c | c}
\mathbb{P}^2 & 2 \\
\mathbb{P}^1 & 1
\end{array}
\right] \vspace{1mm}\\
\dps{5}\in&
\left[
\begin{array}{c | c c}
\mathbb{P}^4 & 2 & 2
\end{array}
\right]&
\dps{6}\in &
\left[
\begin{array}{c | c}
\mathbb{P}^3 & 3
\end{array}
\right] &
\dps{7}\in&
\left[
\begin{array}{c | c}
\mathbb{P}^2 & 2 \\
\mathbb{P}^1 & 2
\end{array}
\right]&
\dps{8}\in &
\left[
\begin{array}{c | c}
\mathbb{P}^2 & 3 \\
\mathbb{P}^1 & 1
\end{array}
\right] \,.
\end{array}
\ee
These representations are simple but they are not all favourable, that is, not every divisor class descends from one on the ambient space. This is easy to see since for example in the $\dps{8}$ case which has Picard number nine but only two divisors classes descend from the ambient space. It is possible to write down a favourable complete intersection representation in each case by sacrificing simplicity. In fact, the complete intersection\footnote{This representation reflects the fact that a blow-up of $\mbb{P}^2$ can be implemented by introducing a $\mbb{P}^1$ into the ambient space and providing a linear equation that allows `movement along' the $\mbb{P}^1$ only over the blow-up point.}
\be
\dps{n}\in
\left[
\begin{array}{ c | c c c c c }
\mathbb{P}^1 & 1 & 0 &  \cdots & 0 & 0 \\
\mathbb{P}^1 & 0 & 1 &  \cdots & 0 & 0 \\
\vdots & \vdots & \vdots & \ddots & \vdots & \vdots  \\
\mathbb{P}^1 & 0 & 0 &  \cdots & 1 & 0 \\
\mathbb{P}^1 & 0 & 0 &  \cdots & 0 & 1 \\
\mathbb{P}^2 & 1 & 1 &  \cdots & 1 & 1 \\
\end{array}
\right] \,,
\ee
where the top right block is an $n \times n$ identity matrix, provides a favourable realisation of all del Pezzo surfaces $\dps{n}$.

\smlhdg{Mori and nef cone generators}\\
\noindent
Finally, we list the Mori and nef cone generators for $\dps{n}$ which are required in the main text. 

The generators of the Mori cone of $\dps{n}$ are the exceptional curves\footnote{The two spaces $\dps{0}$ and $\dps{1}$ are exceptions: in these cases one has to include the hyperplane class $\hcl$ which is not exceptional as a generator of the Mori cone. From $\dps{2}$ onwards $\hcl$ is not needed as a generator since $\hcl-\ecl_i-\ecl_j$ are exceptional curves.}, and the generators of the dual nef cone can be found by standard algorithms for dual cones. For compactness of notation we will write the generators as vectors with respect to the standard basis $(l,\ecl_1,\ldots ,\ecl_n)$, that is, we write a divisor $D=k_0\hcl+k_1\ecl_1+\cdots +k_n\ecl_n$ as a vector $(k_0,k_1,\ldots ,k_n)$. Since the blow-ups are at general position, the classes $\ecl_i$ are on equal footing, and so the generator lists are invariant under their exchange. We, therefore, only list the generators subject to the ordering $k_1\geq k_2\geq\cdots\geq k_n$, keeping in mind that the other generators are obtained by permuting $k_1,\ldots ,k_n$. We also note that it is sufficient to list the generators for $\dps{8}$. The generators for $\dps{n}$ with $n<8$ can be obtained from the $\dps{8}$ ones by extracting the vectors with zero entries in the last $8-n$ positions and then removing those entries.

It turns out that the number of ordered generators for the Mori and nef cones for $\dps{8}$ is 7 and 50 respectively and, after including the permutations, these numbers increase to 240 and 19440, respectively. The list of $\dps{8}$ Mori cone generators is
\be\label{eq:dp_morigenlist}
\begin{array}{llll}
-\hat{\moricn}(\dps{8})= \{&\hspace{-3mm}(0,-1,0,0,0,0,0,0,0), & (-1,1,1,0,0,0,0,0,0), & (-2,1,1,1,1,1,0,0,0), \\
&\hspace{-3mm}(-3,2,1,1,1,1,1,1,0) & (-4,2,2,2,1,1,1,1,1), & (-5,2,2,2,2,2,2,1,1), \\
&\hspace{-3mm}(-6,3,2,2,2,2,2,2,2),&\ldots\quad\}\, .
\end{array}
\ee
The $\dps{8}$ nef cone generators are
\be
\begin{aligned}
-\hat{\nefcn}(\dps{8})= \{~
(-1,1,0,0,0,0,0,0,0), & &
(-1,0,0,0,0,0,0,0,0), \phantom{\}}& &
(-2,1,1,1,1,0,0,0,0), \\
(-2,1,1,1,0,0,0,0,0), & &
(-3,2,1,1,1,1,1,0,0), \phantom{\}}& &
(-3,2,1,1,1,1,0,0,0), \\
(-4,3,1,1,1,1,1,1,1), & &
(-4,3,1,1,1,1,1,1,0), \phantom{\}}& &
(-4,2,2,2,1,1,1,1,0), \\
(-4,2,2,2,1,1,1,0,0), & &
(-5,3,2,2,2,1,1,1,1), \phantom{\}}& &
(-5,3,2,2,2,1,1,1,0), \\
(-5,2,2,2,2,2,2,1,0), & &
(-5,2,2,2,2,2,2,0,0), \phantom{\}}& &
(-6,4,2,2,2,2,1,1,1), \\
(-6,3,3,3,2,1,1,1,1), & &
(-6,3,3,2,2,2,2,1,1), \phantom{\}}& &
(-6,3,3,2,2,2,2,1,0), \\
(-7,4,3,3,2,2,2,1,1), & &
(-7,4,3,2,2,2,2,2,2), \phantom{\}}& &
(-7,3,3,3,3,2,2,2,1), \\
(-7,3,3,3,3,2,2,2,0), & &
(-8,5,3,3,2,2,2,2,2), \phantom{\}}& &
(-8,4,4,3,3,2,2,2,1), \\
(-8,4,3,3,3,3,3,1,1), & &
(-8,4,3,3,3,3,2,2,2), \phantom{\}}& &
(-8,3,3,3,3,3,3,3,1), \\
(-8,3,3,3,3,3,3,3,0), & &
(-9,5,4,3,3,3,2,2,2), \phantom{\}}& &
(-9,4,4,4,4,2,2,2,2), \\
(-9,4,4,4,3,3,3,2,1), & &
(-9,4,4,3,3,3,3,3,2), \phantom{\}}& &
(-10,6,3,3,3,3,3,3,3), \\
(-10,5,5,3,3,3,3,3,2), & &
(-10,5,4,4,4,3,3,2,2), \phantom{\}}& &
(-10,4,4,4,4,4,3,3,1), \\
(-10,4,4,4,4,3,3,3,3), & &
(-11,6,4,4,4,3,3,3,3), \phantom{\}}& &
(-11,5,5,4,4,4,3,3,2), \\
(-11,4,4,4,4,4,4,4,3), & &
(-12,6,5,4,4,4,4,3,3), \phantom{\}}& &
(-12,5,5,5,5,4,3,3,3), \\
(-12,5,5,5,4,4,4,4,2), & &
(-13,6,6,4,4,4,4,4,4), \phantom{\}}& &
(-13,6,5,5,5,4,4,4,3), \\
(-14,6,6,5,5,5,4,4,4), & &
(-14,6,5,5,5,5,5,5,3), \phantom{\}}& &
(-15,6,6,6,5,5,5,5,4), \\
(-16,6,6,6,6,6,5,5,5), & &
(-17,6,6,6,6,6,6,6,6), \phantom{\}}& &\ldots\hspace{3.2cm}\}\, .\\
\end{aligned}
\label{eq:dp_nefgenlist}
\ee
For either list, the dots indicate the vectors obtained from the ones listed by permuting the last eight entries.


\section{Proof of the main theorem}\label{app:prfsketch}
In this appendix we sketch an alternative proof for Theorem~\ref{thm:shift}. A more rigorous proof, using a somewhat different method based on linear systems of divisors, can be found in Ref.~\cite{mathpaper}.\\[2mm]
We start with a smooth compact complex projective surface $\surf$ and a divisor $D\subset S$ with associated line bundle $\mc{O}_\surf(D)$. Further, we have an irreducible curve $C\subset S$ which intersects $D$ negatively, so $D\cdot C <0$. (This implies that $C^2<0$.) Then we have the short exact Koszul sequence
\begin{equation}
 0\rightarrow \mc{O}_\surf(D-C)\rightarrow \mc{O}_\surf(D)\rightarrow \mc{O}_\surf(D)|_C \rightarrow 0 \label{KosCD}
\end{equation} 
for the restriction $\mc{O}_\surf(D)|_C$ of the line bundle $\mc{O}_\surf(D)$ to the curve $C$. The degree of this restricted line bundle is negative since ${\rm deg}\, \mc{O}_\surf(D)|_C=D\cdot C<0$ . It is well-known that negative degree line bundles on curves have a vanishing zeroth cohomology (see, for example, Ref.~\cite{griffiths2014principles}), that is, $H^0(C,\mc{O}_\surf(D)|_C)=0$. This vanishing, together with the long exact sequence associated to the Koszul sequence~\eqref{KosCD}, immediately implies that
\begin{equation}
 h^0(\surf,\mc{O}_\surf(D-C))= h^0(\surf,\mc{O}_\surf(D))\, .
\end{equation} 
This means taking away from the divisor $D$ an irreducible curve $C$ which intersects $D$ negatively does not change the dimension of the zeroth cohomology. Hence, we have
\begin{lemma} 
Let $\surf$ be a smooth compact complex projective surface, $D\subset S$ an effective divisor and $C\subset S$ an irreducible curve with $D\cdot C<0$. Then we have
\begin{equation}
 h^0(\surf,\mc{O}_\surf(D-C))= h^0(\surf,\mc{O}_\surf(D))\, .
\end{equation} 
\end{lemma}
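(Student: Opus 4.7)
The plan is to use the standard short exact sequence associated to the inclusion $C \hookrightarrow S$ of the irreducible curve $C$ into the surface $S$. Specifically, tensoring the ideal sheaf sequence $0 \to \mc{O}_\surf(-C) \to \mc{O}_\surf \to \mc{O}_C \to 0$ with the line bundle $\mc{O}_\surf(D)$ yields the Koszul-type sequence
\begin{equation}
0 \to \mc{O}_\surf(D-C) \to \mc{O}_\surf(D) \to \mc{O}_\surf(D)|_C \to 0\, .
\end{equation}
Passing to the associated long exact sequence in cohomology, one reads off
\begin{equation}
0 \to H^0(\surf,\mc{O}_\surf(D-C)) \to H^0(\surf,\mc{O}_\surf(D)) \to H^0(C,\mc{O}_\surf(D)|_C) \to \cdots\, .
\end{equation}
Thus, to conclude equality of the zeroth cohomology dimensions, it suffices to show that $H^0(C,\mc{O}_\surf(D)|_C)=0$.

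The key input is that the degree of the restricted line bundle satisfies
\begin{equation}
\deg\bigl(\mc{O}_\surf(D)|_C\bigr) = D\cdot C < 0\, ,
\end{equation}
by assumption. On a smooth irreducible projective curve, any line bundle of strictly negative degree has no nonzero global sections, since a nonzero section would give an effective divisor of non-negative degree equivalent to one of negative degree, a contradiction. More generally, the same holds for an integral (irreducible and reduced) projective curve, which covers all irreducible curves $C\subset S$ in our setting: one can reduce to the smooth case by pulling back to the normalisation $\nu\colon\tilde C\to C$, using that $\deg\nu^*L=\deg L$ and that the natural map $H^0(C,L)\to H^0(\tilde C,\nu^*L)$ is injective.

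Combining these two ingredients, the connecting map in the long exact sequence has trivial target, so the injection $H^0(\surf,\mc{O}_\surf(D-C))\hookrightarrow H^0(\surf,\mc{O}_\surf(D))$ is in fact an isomorphism, establishing the claim. The main (mild) obstacle is justifying the vanishing $H^0(C,\mc{O}_\surf(D)|_C)=0$ when $C$ is only assumed to be irreducible rather than smooth; the normalisation argument sketched above handles this uniformly, and no hypothesis on effectiveness of $D-C$ is required for the argument to go through.
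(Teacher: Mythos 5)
Your proof is correct and follows essentially the same route as the paper: the ideal sheaf (Koszul) sequence tensored with $\mc{O}_\surf(D)$, the long exact sequence, and the vanishing $H^0(C,\mc{O}_\surf(D)|_C)=0$ because $\deg\mc{O}_\surf(D)|_C=D\cdot C<0$. Your normalisation argument handling possibly singular irreducible $C$ is a welcome extra justification of a point the paper dispatches by citation.
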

\noindent This completes the first step of the argument.\\[2mm]
Next, we would like to iterate this process and work out how many times $C$ can be subtracted from $D$ without changing the zeroth cohomology dimension. Define the sequence of divisors $D_k=D-kC$, for $k\geq0$. From the above argument, the two divisors $D_k$ and $D_{k+1}$ have the same zeroth cohomology dimension if $D_k\cdot C<0$. This condition immediately translates into
\begin{equation}
 k<\frac{D\cdot C}{C^2}
\end{equation}
This means we have the following 
\begin{lemma}\label{lemmaDC}
 Let $\surf$ be a smooth compact complex projective surface, $D\subset S$ an effective divisor and $C\subset S$ an irreducible curve with $D\cdot C<0$. Define the divisors $D_k:=D-k\, C$, where $k=0,1,\ldots ,n$ and
\begin{equation}
 n={\rm ceil}\left(\frac{D\cdot C}{C^2}\right)\, .
\end{equation}
Then, $h^0(\surf,\mc{O}_\surf(D_k))=h^0(\surf,\mc{O}_\surf(D))$ for all $k=0,1,\ldots n$. (Here, ${\rm ceil}$ is the ceiling function.)
\end{lemma}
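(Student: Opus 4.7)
The plan is to deduce this lemma from the preceding lemma by induction on $k$. The preceding lemma shows that if $D'$ is effective and $C$ is irreducible with $D' \cdot C < 0$, then $h^0(\surf, \mc{O}_\surf(D' - C)) = h^0(\surf, \mc{O}_\surf(D'))$. To iterate this operation $n$ times starting from $D_0 = D$, I need to verify two things at each step $k < n$: first that $D_k$ is still effective so the preceding lemma applies, and second that $D_k \cdot C < 0$ so that its hypothesis is met.

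The crucial computation is
\begin{equation*}
 D_k \cdot C = D \cdot C - k\,C^2.
\end{equation*}
Before iterating, I first note that $C^2 < 0$: since $D$ is effective and $D \cdot C < 0$, the irreducible curve $C$ must be a component of $D$, because any irreducible curve $C'$ not equal to $C$ and appearing in $D$ satisfies $C' \cdot C \geq 0$. Because $C^2 < 0$, dividing the inequality $D_k \cdot C < 0$ by $C^2$ reverses the sign, yielding the equivalent condition $k < (D \cdot C)/C^2$. By definition of the ceiling, $n - 1 < (D \cdot C)/C^2 \leq n$, so the required inequality $D_k \cdot C < 0$ holds precisely for $k = 0, 1, \ldots, n-1$.

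For the induction, the effectivity of $D_{k+1}$ follows for free: assuming $D_k$ is effective and $D_k \cdot C < 0$, the same component argument used above shows $C$ is a component of $D_k$, so $D_{k+1} = D_k - C$ is effective. The preceding lemma then gives $h^0(\surf, \mc{O}_\surf(D_{k+1})) = h^0(\surf, \mc{O}_\surf(D_k))$, and chaining these equalities for $k = 0, 1, \ldots, n-1$ yields the desired equality $h^0(\surf, \mc{O}_\surf(D_k)) = h^0(\surf, \mc{O}_\surf(D))$ for all $k \leq n$.

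The main subtlety, and really the only point requiring attention, is the sign flip when translating $k < (D \cdot C)/C^2$ into $D_k \cdot C < 0$ via the negative quantity $C^2$, together with the boundary case in which $(D \cdot C)/C^2$ is itself an integer. In that boundary case one has $n = (D \cdot C)/C^2$, so the relevant strict inequality at $k = n - 1$ reduces to $n - 1 < n$, which still holds, and the induction goes through without modification. Note that at $k = n$ the iteration must stop because $D_n \cdot C = D \cdot C - n C^2 \geq 0$, so the hypothesis of the preceding lemma fails and no further reduction is possible by this method.
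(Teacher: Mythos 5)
Your proof is correct and follows essentially the same route as the paper: iterate the single-subtraction lemma, noting that $D_k\cdot C = D\cdot C - kC^2 < 0$ is equivalent (since $C^2<0$) to $k < (D\cdot C)/C^2$, which the ceiling definition guarantees for $k=0,\ldots,n-1$. Your explicit verification that each $D_k$ remains effective (via $C$ being a component of $D_k$) is a small extra care the paper leaves implicit, but it does not change the argument.
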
 
\noindent In particular, this lemma implies that the divisor
\begin{equation}
 D_n=D-{\rm ceil}\left(\frac{D\cdot C}{C^2}\right)C\, ,
\end{equation} 
obtained by taking away the largest possible multiple of $C$, leads to the same zeroth cohomology dimension as $D$.\\[2mm]
The next step is to apply Lemma~\ref{lemmaDC} repeatedly, starting with the effective divisor $D\subset S$ and a number of irreducible curves $C_i$ with $D\cdot C_i<0$, where $i=1,\ldots ,N$. This leads to
\begin{lemma}\label{lemmaCi}
Let $\surf$ be a smooth compact complex projective surface, $D\subset S$ an effective divisor and $C_i\subset S$ irreducible (pairwise different) curves with $D\cdot C_i<0$, where $i=1,\ldots ,N$. Define the divisors $D_{(i)}$, where $i=0,1,\ldots ,N$, recursively by $D_{(0)}=D$ and $D_{(i)}=D_{(i-1)}-k_iC_i$, where $k_i\in\{0,1,\ldots ,m_i\}$ but otherwise arbitrary and
\begin{equation}
 m_i={\rm ceil}\left(\frac{D_{(i-1)}\cdot C_i}{C_i^2}\right)\, . \label{midef}
\end{equation}
Then $h^0(\surf,\mc{O}_\surf(D_{(i)}))=h^0(\surf,\mc{O}_\surf(D))$ for all $i=0,1,\ldots ,N$. 
\end{lemma}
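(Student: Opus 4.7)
The plan is to prove Lemma~\ref{lemmaCi} by straightforward induction on $i$, applying Lemma~\ref{lemmaDC} at each stage. The base case $i=0$ is immediate since $D_{(0)}=D$. For the inductive step, assuming $h^0(\surf,\mc{O}_\surf(D_{(i-1)}))=h^0(\surf,\mc{O}_\surf(D))$, I would invoke Lemma~\ref{lemmaDC} with the effective divisor $D_{(i-1)}$ and the irreducible curve $C_i$, choosing the multiplicity $k_i\in\{0,1,\ldots ,m_i\}$ with $m_i$ as in Eq.~\eqref{midef}. Lemma~\ref{lemmaDC} then yields $h^0(\surf,\mc{O}_\surf(D_{(i)}))=h^0(\surf,\mc{O}_\surf(D_{(i-1)}))$, and combining this with the induction hypothesis closes the loop.

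Two hypotheses of Lemma~\ref{lemmaDC} need to be checked at each step: effectiveness of $D_{(i-1)}$ and the intersection condition $D_{(i-1)}\cdot C_i<0$. Effectiveness is automatic from the induction hypothesis, since $D$ effective implies $h^0(\surf,\mc{O}_\surf(D))>0$, hence $h^0(\surf,\mc{O}_\surf(D_{(i-1)}))>0$, which forces $D_{(i-1)}$ to be effective. For the intersection condition I would expand
\begin{equation}
D_{(i-1)}\cdot C_i = D\cdot C_i - \sum_{j=1}^{i-1}k_j\,(C_j\cdot C_i)\, .
\end{equation}
Since $C_j$ and $C_i$ are distinct irreducible curves on a surface, their intersection $C_j\cdot C_i$ is non-negative (it counts intersection points with multiplicities). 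Combined with $k_j\geq 0$ and the hypothesis $D\cdot C_i<0$, this gives $D_{(i-1)}\cdot C_i\leq D\cdot C_i<0$, as required. This also implies $C_i^2<0$ (by the same argument used after Lemma~\ref{lemmaDC} in the excerpt), so the ceiling in Eq.~\eqref{midef} makes sense and is a positive integer.

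The main point to be careful about is the intersection bookkeeping: one has to ensure that the successive subtractions of multiples of $C_1,\ldots ,C_{i-1}$ do not push the intersection with $C_i$ up to a non-negative value before step $i$ is reached. The non-negativity of intersections between distinct irreducible curves is the crucial input that rules this out and, in fact, tells us that the successive subtractions only make $D_{(i-1)}\cdot C_i$ more negative, never less. Once this observation is in place, no further obstacle remains, and the induction goes through cleanly to deliver the lemma for all $i=0,1,\ldots ,N$.
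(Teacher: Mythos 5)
Your proposal is correct and follows essentially the same route as the paper: induction on $i$, expanding $D_{(i-1)}\cdot C_i = D\cdot C_i - \sum_{j<i} k_j\,(C_j\cdot C_i)$ and using $C_j\cdot C_i\geq 0$ for distinct irreducible curves together with $k_j\geq 0$ to keep the intersection negative, then applying Lemma~\ref{lemmaDC}. Your explicit check that $D_{(i-1)}$ remains effective (via $h^0>0$) is a small addition the paper leaves implicit, but it does not change the argument.
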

\begin{proof} We proceed by induction on $i$ and first note that the statement is clearly true for $i=0$, since $D_{(0)}=D$. Suppose the statement is true for all $j=0,1,\ldots ,i-1$. From $D_{(i-1)}=D-\sum_{j=1}^{i-1}k_jC_j$ it follows that
\begin{equation}
 D_{(i-1)}\cdot C_i=D\cdot C_i-\sum_{j=1}^{i-1}k_j\,C_j\cdot C_i\, .
\end{equation} 
The first term on the right-hand side is negative by assumption and the second term is less equal than zero, since $C_j\cdot C_i\geq 0$ and $k_j\geq 0$. It follows that $D_{(i-1)}\cdot C_i<0$ and we can, hence, apply Lemma~\ref{lemmaDC} with $D=D_{(i-1)}$ and $C=C_i$. This implies that $h^0(\surf,\mc{O}_\surf(D_{(i)}))=h^0(\surf,\mc{O}_\surf(D_{(i-1)}))=h^0(\surf,\mc{O}_\surf(D))$.
\end{proof}
\noindent In particular, the sequence $D_{(i)}$ of divisors defined by $D_{(0)}=D$ and $D_{(i)}=D_{(i-1)}-m_iC_i$, with $m_i$  given in Eq.~\eqref{midef}, where the largest possible multiple of the $C_i$ is subtracted at each step, satisfies $h^0(\surf,\mc{O}_\surf(D_{(i)}))=h^0(\surf,\mc{O}_\surf(D))$ for all $i=0,1,\ldots ,N$. \\[2mm]
For our application in the main text, we require a statement slightly weaker than Lemma~\ref{lemmaCi} which is not recursive in nature. To this end, we consider
\begin{lemma} 
Let $\surf$ be a smooth compact complex projective surface, $D\subset S$ an effective divisor and $C_i\subset S$ irreducible (pairwise different) curves with $D\cdot C_i<0$, where $i=1,\ldots ,N$. Define the divisors $D_{(i)}$, where $i=0,1,\ldots ,N$, recursively by $D_{(0)}=D$ and $D_{(i)}=D_{(i-1)}-n_iC_i$, where
\begin{equation}
 n_i={\rm ceil}\left(\frac{D\cdot C_i}{C_i^2}\right)\, . \label{nidef}
\end{equation}
Then $h^0(\surf,\mc{O}_\surf(D_{(i)}))=h^0(\surf,\mc{O}_\surf(D))$ for all $i=0,1,\ldots ,N$. 
\end{lemma}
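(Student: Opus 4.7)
The plan is to deduce this lemma directly from Lemma~\ref{lemmaCi} by showing that the non-recursive choice $k_i = n_i$ always lies in the admissible range $\{0,1,\ldots,m_i\}$ prescribed there, where $m_i = \ceil{D_{(i-1)}\cdot C_i/C_i^2}$ is the recursively-defined quantity. The entire lemma then reduces to the single technical inequality $0 \leq n_i \leq m_i$, after which one simply invokes Lemma~\ref{lemmaCi} with this specific choice of coefficients.

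First I would establish that $C_i^2 < 0$ for each $i$. Since $D$ is effective and $D\cdot C_i < 0$, the irreducible curve $C_i$ must be a component of $D$: writing $D = a_i C_i + D'$ with $a_i \geq 1$ and $D'$ an effective divisor not containing $C_i$, the intersection $D'\cdot C_i$ is non-negative, so $a_i C_i^2 \leq D\cdot C_i < 0$ forces $C_i^2 < 0$. Since both $D\cdot C_i$ and $C_i^2$ are negative, the quotient is positive and hence $n_i = \ceil{D\cdot C_i/C_i^2} \geq 1$.

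Next I would proceed by induction on $i$. For $i=1$ the divisor $D_{(0)}=D$ coincides with $D$, so $m_1=n_1$ and there is nothing to check. Assuming the bound holds up to step $i-1$, we have $D_{(i-1)} = D - \sum_{j<i} n_j C_j$, and using that distinct irreducible curves on a surface intersect non-negatively, i.e.\ $C_j\cdot C_i \geq 0$ for $j\neq i$, together with $n_j \geq 0$, I obtain
\begin{equation}
D_{(i-1)}\cdot C_i = D\cdot C_i - \sum_{j<i} n_j (C_j\cdot C_i) \leq D\cdot C_i < 0.
\end{equation}
Dividing by the negative number $C_i^2$ reverses the inequality, and applying the monotone ceiling function yields $m_i = \ceil{D_{(i-1)}\cdot C_i/C_i^2} \geq \ceil{D\cdot C_i/C_i^2} = n_i$. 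The displayed negativity of $D_{(i-1)}\cdot C_i$ also verifies the hypothesis of Lemma~\ref{lemmaCi} at every step.

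With $0 \leq n_i \leq m_i$ established inductively, Lemma~\ref{lemmaCi} applied with the specific choice $k_i = n_i$ delivers $h^0(\surf,\mc{O}_\surf(D_{(i)})) = h^0(\surf,\mc{O}_\surf(D))$ for all $i=0,1,\ldots,N$, as required. There is no real obstacle beyond keeping track of the sign flip when dividing by $C_i^2 < 0$; the gain over Lemma~\ref{lemmaCi} is cosmetic but important, namely that the coefficients $n_i$ depend only on the original divisor $D$ and not on the intermediate divisors $D_{(j)}$, which is precisely the form needed to match the master formula~\eqref{DDtilde} in Theorem~\ref{thm:shift}.
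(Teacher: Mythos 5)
Your proposal is correct and follows essentially the same route as the paper: expand $D_{(i-1)}\cdot C_i = D\cdot C_i - \sum_{j<i} n_j\, C_j\cdot C_i$, use $C_j\cdot C_i \geq 0$ and $n_j\geq 0$ to conclude $n_i \leq m_i$, and then invoke Lemma~\ref{lemmaCi} with the choice $k_i=n_i$. Your extra remarks (that $C_i^2<0$ since $C_i$ must be a component of the effective divisor $D$, hence $n_i\geq 1$) only make explicit what the paper leaves implicit, so there is no substantive difference.
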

\begin{proof}
The recursive definition implies that $D_{(i-1)}=D-\sum_{j=1}^{i-1}n_iC_i$, so that
\begin{equation}
 D_{(i-1)}\cdot C_i=D\cdot C_i-\sum_{j=1}^{i-1}n_j\, C_j\cdot C_i\, .
\end{equation}
This equation shows that $n_i\leq m_i$ with  $n_i$ and $m_i$ defined in Eqs.~\eqref{nidef} and \eqref{midef}, respectively. Hence, Lemma~\ref{lemmaCi} implies the desired result.
\end{proof}
\noindent In particular, the last lemma shows that the divisor
\begin{equation}
 \tilde{D}:=D-\sum_{C_i\,{\rm irr.}}\theta(-D\cdot C_i)\,{\rm ceil}\left(\frac{D\cdot C_i}{C_i^2}\right)C_i
\end{equation}
 satisfies $h^0(\surf,\mc{O}_\surf(\tilde{D}))=h^0(\surf,\mc{O}_\surf(D))$. (Here, $\theta$ is the Heaviside function which has been included in the sum in order to ensure that only irreducible curves $C_i$ with $D\cdot C_i<0$ contribute.) This is precisely the statement we wanted to prove.

\newpage

\bibliographystyle{JHEP}

\providecommand{\href}[2]{#2}\begingroup\raggedright\endgroup


\end{document}